\documentclass[12pt]{article}
\usepackage[utf8]{inputenc}
\usepackage[sectionbib]{natbib}

%\title{\vspace{-15mm}High-Dimensional Low-Rank Tensor Autoregressive Time Series Modeling}
%\author{Di Wang, Yao Zheng and Guodong Li\\
%\textit{\small Shanghai Jiao Tong University, University of Connecticut, University of Hong Kong}
%}

\title{\vspace{-19mm}High-Dimensional Low-Rank Tensor Autoregressive Time Series Modeling}
\author{Di Wang${}^{a}$, Yao Zheng${}^{b,}$\footnote{Correspondence to: Department of Statistics, University of Connecticut,  215 Glenbrook Road, Storrs, CT 06269, United States of America. Email address: yao.zheng@uconn.edu (Y. Zheng).} 
	and Guodong Li${}^{c}$}

\linespread{1.5}

\usepackage[margin=1in]{geometry}
\usepackage{graphicx}
\usepackage{mathrsfs}
\usepackage{multirow}
\usepackage{amsfonts}
\usepackage{amsmath}
\usepackage{comment}
\usepackage{amsthm}
\usepackage{color}
\usepackage{mathtools}
\usepackage{algorithm}
\usepackage{sectsty}
\usepackage[colorlinks=true,linkcolor=blue,urlcolor=black,citecolor=blue,bookmarksopen=true]{hyperref}
\usepackage{bookmark}
\usepackage{amssymb}
\usepackage[toc,page]{appendix}
\usepackage[mathscr]{euscript}
\usepackage[toc]{appendix}

\usepackage{chngcntr}
\usepackage{apptools}
\usepackage{booktabs}
\usepackage{lscape}
\usepackage{soul}
\usepackage{epstopdf}
\usepackage{array}
\usepackage{diagbox}

\newcolumntype{L}[1]{>{\raggedright\let\newline\\\arraybackslash\hspace{0pt}}m{#1}}
\newcolumntype{C}[1]{>{\centering\let\newline\\\arraybackslash\hspace{0pt}}m{#1}}
\newcolumntype{R}[1]{>{\raggedleft\let\newline\\\arraybackslash\hspace{0pt}}m{#1}}

\makeatletter
\newcommand*{\addFileDependency}[1]{% argument=file name and extension
  \typeout{(#1)}
  \@addtofilelist{#1}
  \IfFileExists{#1}{}{\typeout{No file #1.}}
}
\makeatother

%\usepackage{xr}
%\externaldocument{TensorAR_supp}
%\myexternaldocument{Supplement}

%\usepackage{float}
%\AtAppendix{\counterwithin{lemma}{section}}

\renewcommand{\arraystretch}{1.3}

\newtheorem{assumption}{Assumption}
\newtheorem{definition}{Definition}
\newtheorem{example}{Example}
\newtheorem{lemma}{Lemma}

\newtheorem{theorem}{Theorem}

\newtheorem{remark}{Remark}

\hypersetup{
    colorlinks=true,
    linkcolor=blue,
    citecolor=blue,
    filecolor=magenta,
    urlcolor=cyan,
}

\DeclareMathOperator*{\argmin}{arg\,min}

\newcommand{\bm}{\mathbf}
\newcommand{\bbm}{\boldsymbol}

\newcommand{\cm}[1]{\mbox{\boldmath$\mathscr{#1}$}}

\bibliographystyle{apalike}
\date{}

\mathtoolsset{showonlyrefs}

\begin{document}

\setlength{\parindent}{16pt}
\linespread{1.4}
\maketitle
\vspace{-1cm}
\noindent
${}^{a}$\emph{\small School of Mathematical Sciences, Shanghai Jiao Tong University, China}\\
${}^{b}$\emph{\small Department of Statistics, University of Connecticut, United States of America}\\
${}^{c}$\emph{\small Department of Statistics and Actuarial Science, University of Hong Kong, China}
\vspace{-3mm}

\begin{abstract}
Modern technological advances have  enabled an unprecedented amount of structured data with complex temporal dependence, urging the need for new methods to efficiently model and forecast high-dimensional tensor-valued time series. This paper provides a new modeling framework to accomplish this task via autoregression (AR).  By considering a low-rank Tucker decomposition for the transition tensor, the proposed tensor AR can flexibly capture the underlying low-dimensional tensor dynamics, providing both substantial dimension reduction and meaningful multi-dimensional dynamic factor interpretations.  
For this model, we first study several nuclear-norm-regularized estimation methods  and derive their non-asymptotic properties under the approximate low-rank setting. In particular, by leveraging the special balanced structure of the transition tensor, a novel convex regularization approach  based on the sum of nuclear norms of square matricizations  is proposed to efficiently encourage low-rankness of the coefficient tensor. To further improve the estimation efficiency under exact low-rankness, a non-convex estimator is proposed with a gradient descent algorithm, and its computational and statistical convergence guarantees are established. Simulation studies and an empirical analysis of  tensor-valued time series data from multi-category import-export networks demonstrate the advantages of the proposed approach.
  
\end{abstract}

\textit{Keywords}:  global trade flows; high-dimensional time series; non-convex tensor regression; nuclear norm; tensor decomposition; tensor-valued time series

\linespread{1.5}
\newpage
\section{Introduction}\label{sec:intro}

The rapid improvement in data collection capability has enabled the generation of increasingly more comprehensive economic datasets. Meanwhile, significant progress has been made in unifying data collection standards. These advances  have led to an abundance of comparable disaggregated time series datasets across countries, which are further categorized by various dimensions like regions, industries, goods categories, and demographics.
Such multidimensional datasets can often be organized as multi-way arrays, forming tensor-valued time series.  Moreover, this type of detailed time series data is common in finance, where it can be formed, e.g., by categorizing stock returns based on various firm characteristics dimensions, or asset returns across asset classes, regions, and sectors. 
The availability of extensive disaggregated data, in turn, provides new opportunities to advance techniques for modeling  complex dynamic systems like the global economy and financial markets. 

The motivation behind the study of tensor-valued time series stems from the modeling of temporal and cross-sectional dependencies in panel data. To illustrate, first consider the panel data $\bm{y}_t=(\bm{y}_{1,t}^\top, \dots, \bm{y}_{N,t}^\top)^\top$ for $N$ countries, where $1\leq t\leq T$ represents time, and $\bm{y}_{i,t}$ is a vector containing observations of different economic variables for country $i$ with $1\leq i\leq N$. For example,
 \cite{BCS12} models the international trade by fitting a Global Vector Autoregressive (GVAR) model \citep{PSW04} to $\bm{y}_{i,t}$ which includes the aggregate export and import volumes of country $i$, $ex_{i,t}$ and $im_{i,t}$, as key variables. Compared to previous methods, \cite{BCS12}'s approach has two major strengths: (1) it captures interdependencies across countries, i.e. cross-country spillovers; (2) it jointly models exports and imports, allowing for co-movements between them, which is important as exporting firms typically import components.

However, aggregate-level exports and imports data are limited in providing a comprehensive understanding of global trade dynamics. They do not provide information about how exports from one country are distributed among importing countries. By contrast, a much more detailed perspective on the trade flows can be gained from the disaggregated data $\bm{Y}_t=(ex_{i,j,t})_{1\leq i,j\leq N}$, where $ex_{i,j,t}$ represents exports from one specific country (country $i$) to another specific country (country $j$) for all possible pairs of countries. Note that $\bm{Y}_t$ is an $N\times N$ matrix-valued time series, and by convention $ex_{i,i,t}$'s are set to zero.  Since $ex_{i,t}=\sum_{j\neq i} ex_{i,j,t}$ and $im_{i,t}=\sum_{j\neq i} ex_{j,i,t}$, the disaggregated series $\bm{Y}_t$ contains all information of the aggregate series. Furthermore, we can have an even more granular view by further breaking $\bm{Y}_t$ down into additional dimensions. For instance, the data can be divided into $K$ different product categories, resulting in the tensor-valued time series $\cm{Y}_t=(ex_{i,j,k,t})_{1\leq i,j\leq N, 1\leq k\leq K}$, where $ex_{i,j,k,t}$ represents  exports of product category $k$ from country $i$ to country $j$; see  Figure \ref{fig:illustration} for an illustration.

\begin{figure}
%\begin{center}
\includegraphics[width=0.95\textwidth]{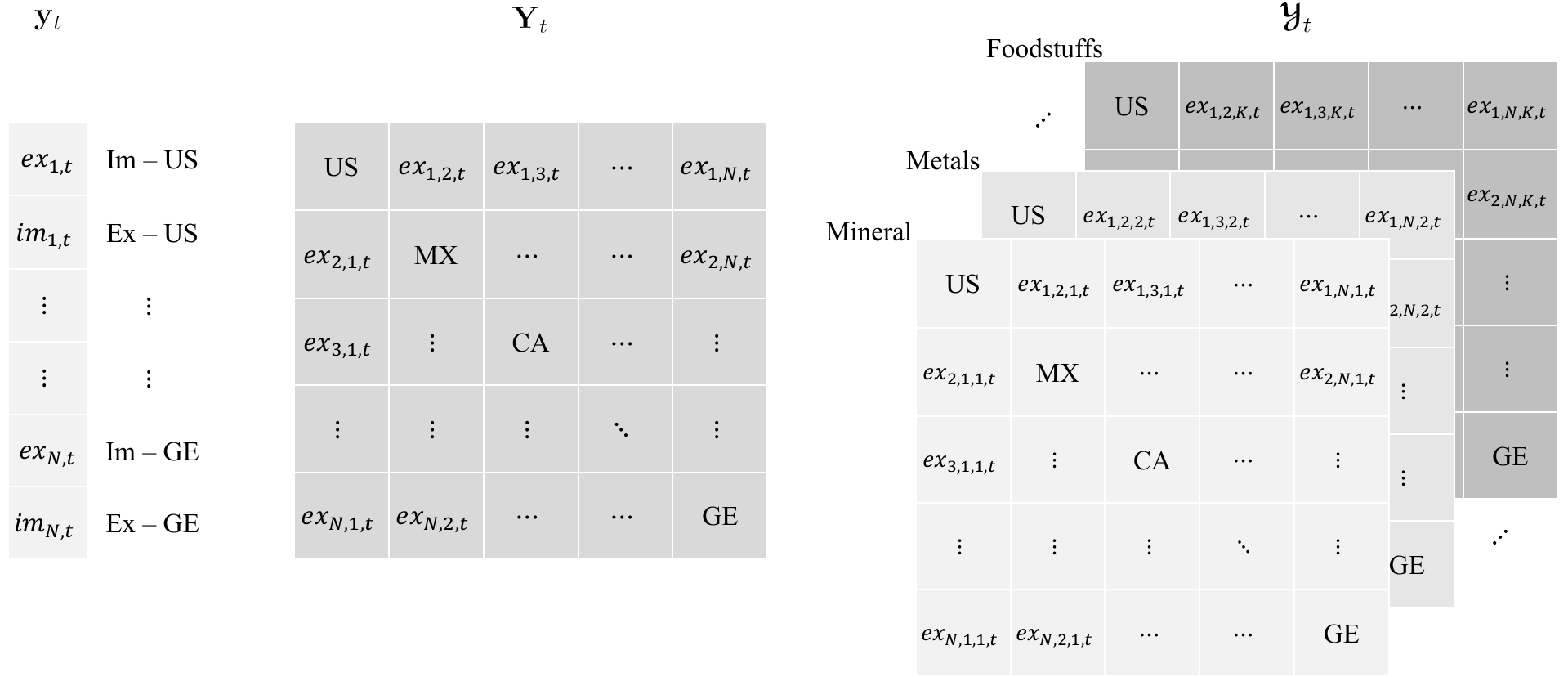}
\caption{Illustration of vector-, matrix- and tensor-valued time series  $\bm{y}_t=(ex_{1,t}, im_{1,t}, \dots, ex_{N,t}, im_{N,t})^\top$,  $\bm{Y}_t=(ex_{i,j,t})_{1\leq i,j\leq N}$, and $\cm{Y}_t=(ex_{i,j,k,t})_{1\leq i,j\leq N, 1\leq k\leq K}$ in the context of modeling international trade dynamics via import-export data, where $N$ is the number of countries, and $K$ is the number of product categories.}
\label{fig:illustration}
%\end{center}
\end{figure}
%and a real data analysis in Section \ref{sec:real_data}
%In the above example of multi-category import-export data, $p_1=p_2=N$, $p_3=K$, and $d=3$. 

More broadly, this paper considers autoregressive modeling of a general tensor-valued time series  $\cm{Y}_t\in\mathbb{R}^{p_1\times\cdots\times p_d}$, where the total number of series $p:=\prod_{i=1}^dp_i$ can be much larger than $T$.  A na\"{\i}ve method is to apply models for panel data to the vectorized series $\bm{y}_t=\text{vec}(\cm{Y}_t)$, such as the vector autoregressive (VAR) model,
\begin{equation}
    \label{eq:VAR}
    \text{vec}(\cm{Y}_t)=\bm{A}\text{vec}(\cm{Y}_{t-1})+\text{vec}(\cm{E}_t),
\end{equation}
and then perform dimension reduction for  the unknown transition matrix $\bm{A}\in\mathbb{R}^{p\times p}$ via generic regularization methods such as the Lasso \citep{basu2015regularized, han2015direct}, or  data-specific methods \citep{PSW04, CC13, Guo2016, Zhu2017, zheng20}  which impose parameter restrictions based on  pre-determined network structures. However, the vectorization undermines the model interpretability that could have been a valuable advantage of multi-dimensional data. For example, it would be much easier to gain meaningful insights into the global trade flow from the multi-category import-export data $\cm{Y}_t\in\mathbb{R}^{N\times N\times K}$ mentioned above if patterns across exporting countries,  importing countries, and product categories can be separately interpreted. In particular, adopting a multi-dimensional approach, as proposed in this paper, enables us to address the following questions, which cannot be answered using the vector model:
\begin{itemize}
\item[(i)] Among all countries, whose exporting activities are the driving forces of the global trade flow? Are there any geographical groupings among them? 
\item[(ii)] Similar to (i), what about the importing activities?
\item[(iii)] Among all product categories, which ones are the driving forces of the global trade flow? Are there any grouping patterns?
\item[(iv)] Do the past and present states of the dynamic system (i.e., predictor and response) have the same grouping patterns across exporting countries, importing countries, and product categories?
%Can the dynamics of the high-dimensional autoregressive process be summarized by low-dimensional input (lagged predictor) and output (response) factors across exporting countries,  importing countries, and product categories? 
\end{itemize}

Specifically, for the tensor-valued time series  $\cm{Y}_t\in\mathbb{R}^{p_1\times\cdots\times p_d}$, this paper proposes the Low-Rank Tensor Autoregressive (LRTAR) model by folding the $p\times p$ transition matrix $\bm{A}$ in \eqref{eq:VAR}, with  $p=\prod_{i=1}^dp_i$, into the $2d$-th-order transition tensor $\cm{A}\in\mathbb{R}^{p_1\times\cdots\times p_d\times p_1\times\cdots\times p_d}$ which is assumed to have Tucker ranks $(r_1,\dots, r_{2d})$ with $r_i$ being possibly much smaller than $p_i$, where $p_{d+i}=p_i$ for $i=1,\dots, d$. This implies the Tucker decomposition $\cm{A}=\cm{G}\times_{i=1}^{2d}\bm{U}_i$, where $\bm{U}_i\in\mathbb{R}^{p_i\times r_i}$ and $\cm{G}\in\mathbb{R}^{r_1\times\cdots\times r_{2d}}$, and consequently the  low-dimensional structure of the process $\cm{Y}_t$ as follows:
\begin{equation*}
    \cm{Y}_t\times_{i=d+1}^{2d}\bm{U}_{i}^\top =\langle\cm{G},\cm{Y}_{t-1} \times_{i=1}^d\bm{U}_i^\top\rangle+\cm{E}_t\times_{i=d+1}^{2d}\bm{U}_{i}^\top,
\end{equation*}
where $\cm{Y}_t\times_{i=d+1}^{2d}\bm{U}_{i}^\top$ and $\cm{Y}_{t-1} \times_{i=1}^d\bm{U}_i^\top$ can be viewed as $r_{d+1}\times\cdots\times r_{2d}$ and $r_{1}\times\cdots\times r_{d}$  factors summarizing the dynamic information across all dimensions. Moreover, each loading matrix $\bm{U}_i$ reveals interpretable patterns for a particular dimension of the present or past state of $\cm{Y}_t$; see Section  \ref{sec:example} for more detailed  descriptions in the context of import-export data. The proposed model has the following features:
\begin{itemize}
\item Similar to panel data models, it captures both cross-sectional and temporal dependencies. However, by leveraging the tensor structure, it dissects the cross-sectional information into $d$ different dimensions, allowing for separate interpretations in each dimension.
\item Simultaneous dimension reduction is achieved across all dimensions of the transition tensor $\cm{A}$ via the low-Tucker-rank assumption. This approach does not rely on predetermined parameter restrictions derived from the user's prior knowledge or beliefs about the network structure. 
\item The low-Tucker-rank assumption implies that factors are extracted across all dimensions of the response and its lagged predictor. The factor loadings facilitate the discernment of patterns in each dimension (i.e., mode) of the tensor-valued observation.
\end{itemize}

For the proposed model, this paper introduces two types of high-dimensional estimation methods: (i) convex estimators via nuclear norm regularizations and (ii) the non-convex estimator.  
For (i), we consider the general setting where the transition tensor $\cm{A}$ is approximately low-rank, and  develop convex estimation methods based on different nuclear norm regularizations. Firstly, to encourage low-rankness along all modes, we study the widely-used Sum of Nuclear (SN) norm regularizer, defined as the sum of nuclear norms of all one-mode matricizations. However, due to the \textit{fat-and-short} shape of the one-mode matricizations, the SN regularized estimator suffers from serious efficiency loss and hence performs even worse than the conventional Matrix Nuclear (MN) norm regularized estimator \citep{negahban2011estimation} which simply penalizes the nuclear norm of the transition matrix $\bm{A}$ in \eqref{eq:VAR}. Thus, we further introduce a novel Sum of Square-matrix Nuclear (SSN) norm regularizer, defined as the sum of nuclear norms of all   $p\times p$ square matricizations of $\cm{A}$. The SSN reguarlized estimator is provably more efficient than the SN regularized one; see Theorem \ref{thm:SSN} and the first simulation experiment in Section \ref{sec:sim}. In addition, we propose a truncated variants of the SSN estimator and prove its rank selection consistency when $\cm{A}$ is exactly low-Tucker-rank under  mild conditions.

However, the consistency of  the SSN estimator requires that $T$  grows faster than $p=\prod_{i=1}^dp_i$. Thus, it may not be applicable to high-dimensional tensor-valued time series datasets with large $p_i$'s. This motivates us to consider a  non-convex estimation method to further improve the estimation efficiency and relax the sample size requirement. Specifically, under the assumption that $\cm{A}$ is exactly low-rank,   this paper develops an estimator via non-convex (NC) optimization based on the explicit Tucker decomposition structure. A gradient descent algorithm is proposed for the NC estimator, with rigorous statistical and computational convergence guarantees. Compared with the  convex estimators via nuclear norm regularizations, the consistency of the NC estimator   only requires that $T$ grows faster than $\max_{1\leq i\leq d}p_i$, which makes it attractive under high dimensionality. Although this approach requires initial values and known tensor ranks, the ridge-type ratio estimator can be used for determination of tensor ranks and initialization of the gradient descent algorithm.

This work is related to the literature on matrix-variate regression and tensor regression for independent data. The matrix-variate regression in \cite{DingCook18} has the same basic bilinear form, while an envelope method was introduced to further reduce the dimension.
\citet{raskutti2019convex} proposed a multi-response tensor regression model, where they mainly studied the third-order coefficient tensor and the SN regularization which is known to be statistically sub-optimal for higher-order tensor estimation. By contrast, we study the model for general higher-order tensor-valued time series. Moreover, our SSN   estimator has a much faster statistical convergence rate than the SN estimator.  For the non-convex tensor estimation problem, \citet{chen2019non} and \citet{han2020optimal} studied non-convex projected gradient descent methods for tensor regression. Our NC estimator can be viewed as a higher-order extension of the estimation approach in \citet{han2020optimal}.
In addition, existing literature on tensor regression has only considered independent data or Gaussian time series data, whereas we allow sub-Gaussianity of the time series. This is a non-trivial relaxation, since unlike the Gaussian case,  sub-Gaussian time series cannot be linearly transformed into independent samples.

The rest of the paper is organized as follows. Section \ref{sec:prelim} introduces basic notation and tensor algebra. Section \ref{sec:LRTAR} presents the proposed LRTAR model. A series of nuclear-norm-regularized estimation methods are covered in Section \ref{sec:HDM}, where we develop the non-asymptotic theory for three regularized estimators and rank selection consistency for the truncated estimator. Section \ref{sec:rank_constrained} proposes a non-convex estimation approach and presents its computational guarantees and statistical efficiency improvement.
Section \ref{sec:numerical} presents simulation studies and a real data analysis. Section \ref{sec:conclusion} concludes with a brief discussion.  We provide all technical proofs,  algorithms, and additional discussions in a separate online supplementary file.

\section{Tensor Decomposition and Tensor Autoregression}
\subsection{Preliminaries: Notation and Tensor Algebra \label{sec:prelim}}
Tensors, also known as multi-dimensional arrays, are natural higher-order extensions of  matrices. The order of a tensor is known as the dimension, way or mode, so a multi-dimensional array $\cm{X}\in\mathbb{R}^{p_1\times\cdots\times p_m}$ is called an $m$-th order tensor. We introduce some important notations and concepts of tensor operation in this subsection, and refer readers to \citet{kolda2009tensor} for a detailed review of basic tensor algebra. 

Throughout this paper, we denote vectors by boldface small letters, e.g. $\bm{x}$, $\bm{y}$,  matrices by boldface capital letters, e.g. $\bm{X}$, $\bm{Y}$, and tensors by boldface Euler capital letters, e.g. $\cm{X}$, $\cm{Y}$. For any two real-valued sequences $x_k$ and $y_k$, we write $x_k\gtrsim y_k$ if there exists a  constant $c>0$ such that $x_k\geq cy_k$ for all $k$,  and write $x_k\gg y_k$ if $\lim_{k\rightarrow\infty}y_k/x_k=0$. In addition, write $x_k\asymp y_k$ if $x_k\gtrsim y_k$ and $y_k\gtrsim x_k$. We use $C$ to denote a generic positive constant, which is independent of the dimensions and the sample size.

For a generic matrix $\bm{X}$,  we let $\bm{X}^\top$, $\|\bm{X}\|_{\text{F}}$, $\|\bm{X}\|_{\text{op}}$, $\|\bm{X}\|_\text{nuc}$,  $\text{vec}(\bm{X})$, and $\sigma_j(\bm{X})$ denote its transpose, Frobenius norm, operator norm, nuclear norm,  vectorization, and $j$-th largest singular value, respectively. For any matrix $\bm{X}\in\mathbb{R}^{p\times q}$, recall that  the nuclear norm and its dual norm, the operator norm, are defined as
\begin{equation}
	\|\bm{X}\|_\text{nuc}=\sum_{j=1}^{\min(p,q)}\sigma_j(\bm{X})~~\text{and}~~\|\bm{X}\|_{\text{op}}=\sigma_1(\bm{X}).
\end{equation}
For any square matrix $\bm{X}$, we let $\lambda_{\min}(\bm{X})$ and $\lambda_{\max}(\bm{X})$ denote its minimum and maximum eigenvalues. 
For any real symmetric matrices $\bm{X}$ and $\bm{Y}$, we write $\bm{X} \leq \bm{Y}$ if $\bm{Y}-\bm{X}$ is a positive semidefinite matrix.

Matricization, also known as unfolding, is the process of reordering the elements of a third- or higher-order tensor into a matrix. The most commonly used matricization is the one-mode matricization defined as follows.  For any $m$-th-order tensor $\cm{X}\in\mathbb{R}^{p_1\times\cdots\times p_m}$, its mode-$s$  matricization $\cm{X}_{(s)}\in\mathbb{R}^{p_s\times p_{-s}}$, with $p_{-s}=\prod_{i=1, i\neq s}^{m}p_i$, is the matrix obtained by setting the $s$-th tensor mode as its rows and collapsing all the others into  its columns, for $s=1,\dots, m$. Specifically,  the $(i_1,\dots,i_d)$-th element of $\cm{X}$  is mapped to the $(i_s,j)$-th element of $\cm{X}_{(s)}$, where
\begin{equation}
	j=1+\sum_{\substack{k=1\\k\neq s}}^m(i_k-1)J_k~~\text{with}~~J_k=\prod_{\substack{\ell=1\\\ell\neq s}}^{k-1}p_\ell.
\end{equation}

The above one-mode matricization can be extended to the multi-mode matricization by combining multiple modes to rows and combining the rest to columns of a matrix. For any index subset $S\subset\{1,\dots,m\}$, the multi-mode matricization $\cm{X}_{[S]}$ is the $\prod_{i\in S}p_i$-by-$\prod_{i\notin S}p_i$ matrix whose $(i,j)$-th element is mapped from the $(i_1,\dots,i_d)$-th element of $\cm{X}$, where
\begin{equation}
	i=1+\sum_{k\in S}(i_k-1)I_k ~\text{and}~ j=1+\sum_{k\notin S}(i_k-1)J_k,~\text{with}~I_k=\prod_{\substack{\ell\in S\\\ell<k}}p_\ell~\text{and}~J_k=\prod_{\substack{\ell\notin S\\\ell<k}}p_\ell.
\end{equation}
Note that the modes in the multi-mode matricization are collapsed following their original order $1, \dots, m$. 
Moreover, it holds $\cm{X}_{[S]}=\cm{X}_{[S^\complement]}^\top$, where $S^\complement=\{1,\dots, m\}\setminus S$ is the complement of $S$.
In addition, the one-mode matricization $\cm{X}_{(s)}$ defined above is simply $\cm{X}_{[\{s\}]}$. 

We next review the concepts of tensor-matrix multiplication, tensor generalized inner product and norm. For any $m$-th-order tensor $\cm{X}\in\mathbb{R}^{p_1\times\cdots\times p_m}$ and matrix $\bm{Y}\in\mathbb{R}^{q_k\times p_k}$ with $1\leq k\leq m$,
the mode-$k$ multiplication $\cm{X}\times_k\bm{Y}$ produces an $m$-th-order tensor in $\mathbb{R}^{p_1\times\cdots\times p_{k-1}\times q_k\times p_{k+1}\times\cdots\times p_m}$  defined by
\begin{equation}
	\left(\cm{X}\times_k\bm{Y}\right)_{i_1\cdots i_{k-1}ji_{k+1}\dots i_d}=\sum_{i_k=1}^{p_k}\cm{X}_{i_1\cdots i_m}\bm{Y}_{ji_k}.
\end{equation}
For any two tensors $\cm{X}\in\mathbb{R}^{p_1\times p_2\times\cdots\times p_m}$ and $\cm{Y}\in\mathbb{R}^{p_1\times p_2\times\cdots p_n}$ with $m\geq n$, their generalized inner product $\langle\cm{X},\cm{Y}\rangle$ is the $(m-n)$-th-order tensor  in $\mathbb{R}^{p_{n+1}\times\dots\times p_m}$ defined by
\begin{equation}\label{eq:tensorinner}
	\langle\cm{X},\cm{Y}\rangle_{i_{n+1}\dots i_{m}}=\sum_{i_1=1}^{p_1}\sum_{i_2=1}^{p_2}\dots\sum_{i_n=1}^{p_n}\cm{X}_{i_1i_2\dots i_ni_{n+1}\dots i_m}\cm{Y}_{i_1i_2\dots i_n},
\end{equation}
where  $1\leq i_{n+1}\leq p_{n+1},\dots,1\leq i_m\leq p_m$. In particular, when $m=n$, it reduces to the conventional real-valued inner product.  In addition,  the Frobenius norm of any tensor $\cm{X}$ is defined as $\|\cm{X}\|_{\text{F}}=\sqrt{\langle\cm{X},\cm{X}\rangle}$. 

Some basic properties of the tensor generalized inner product are as follows. Let $\cm{X}\in\mathbb{R}^{p_1\times p_2\times\cdots\times p_m}$, $\cm{Y}\in\mathbb{R}^{p_1\times p_2\times\cdots p_n}$, and $\cm{Z}\in\mathbb{R}^{p_1\times\cdots \times p_{k-1}\times q_k\times p_{k+1}\cdots \times p_n}$ be tensors with $m\geq n\geq k\geq1$. If $\bm{Y}\in\mathbb{R}^{q_k\times p_k}$, then $\langle\cm{X}\times_k\bm{Y},\cm{Z}\rangle=\langle\cm{X},\cm{Z}\times_k\bm{Y}^\top\rangle$.
If $\bm{Z}\in\mathbb{R}^{q_{n+j}\times p_{n+j}}$  with $1\leq j\leq m-n$, then $\langle\cm{X},\cm{Y}\rangle\times_j\bm{Z}=\langle\cm{X}\times_{n+j}\bm{Z},\cm{Y}\rangle$. Moreover, 
\begin{equation}\label{eq:vecinner}
	\text{vec}(\langle\cm{X},\cm{Y}\rangle)=\cm{X}_{[S]}\text{vec}(\cm{Y}),
\end{equation}
where $S=\{n+1,\dots, m\}$, and when $m=n$, $\cm{X}_{[\emptyset]}=\text{vec}(\cm{X})^\top$.

Finally, we summarize some concepts and useful results of the Tucker decomposition \citep{tucker1966some,deLathauwer2000multilinear}. For any tensor $\cm{X}\in\mathbb{R}^{p_1\times\cdots\times p_m}$, its Tucker ranks $(r_1,\dots,r_m)$ are defined as the matrix ranks of its one-mode matricizations, namely $r_i=\text{rank}(\cm{X}_{(i)})$, for $i=1,\dots,m$. Note that $r_i$'s are analogous to the row and column ranks of a matrix, but are not necessarily equal for third- and higher-order tensors.  However, the Tucker ranks must satisfy the condition 
\begin{equation}\label{eq:rankcond}
\left(\max_{1\leq i\leq m}r_i\right)^2\leq\prod_{i=1}^mr_i.
\end{equation}
If only one of the $r_i$'s is equal to the maximum rank $r_{\max}:=\max_{1\leq i\leq m}r_i$, \eqref{eq:rankcond}
is equivalent to $r_{\max} \leq \prod_{i=1, r_i\neq r_{\max}}^mr_i$; that is, the maximum Tucker rank must be no greater than the product of the other ranks.

Suppose that $\cm{X}$ has Tucker ranks $(r_1,\dots,r_m)$. Then $\cm{X}$ has the following  Tucker decomposition:
\begin{equation}\label{eq:Tucker}
	\cm{X}=\cm{Y}\times_1\bm{Y}_1\times_2\bm{Y}_2\cdots\times_m\bm{Y}_m=\cm{Y}\times_{i=1}^m\bm{Y}_i,
\end{equation}
where $\bm{Y}_i\in\mathbb{R}^{p_i\times r_i}$ for $i=1, \dots, m$ are the factor matrices  and $\cm{Y}\in\mathbb{R}^{r_1\times\cdots\times r_m}$  is the core tensor.
If $\cm{X}$ has the Tucker decomposition in \eqref{eq:Tucker}, then we have the following results for its one- and multi-mode matricizations:
\begin{equation}
	\cm{X}_{(s)}=\bm{Y}_s\cm{Y}_{(s)}(\bm{Y}_d\otimes\cdots\otimes\bm{Y}_{s+1}\otimes\bm{Y}_{s-1}\cdots\otimes\bm{Y}_1)^\top=\bm{Y}_s\cm{Y}_{(s)}(\otimes_{i\neq s}\bm{Y}_i)^\top, \quad s=1,\dots, m,
\end{equation}
and 
\begin{equation}\label{eq:mmmat}
	\cm{X}_{[S]}=(\otimes_{i\in S}\bm{Y}_i)\cm{Y}_{[S]}(\otimes_{i\notin S}\bm{Y}_i)^\top, \quad S\subset\{1,\dots,m\},
\end{equation}
where $\otimes_{i\neq s}, \otimes_{i\in S}$ and $\otimes_{i\notin S}$ are matrix Kronecker products operating in the reverse order within the corresponding index sets.

\subsection{Low-Rank Tensor Autoregression\label{sec:LRTAR}}
For the tensor-valued time series $\{\cm{Y}_t\}_{t=1}^T$, we propose the following Low-Rank Tensor Autoregressive (LRTAR) model:
\begin{equation}
	\label{eq:LTRTAR_model}
	\cm{Y}_t=\langle\cm{A},\cm{Y}_{t-1}\rangle+\cm{E}_t,
\end{equation}
where $\cm{A}\in\mathbb{R}^{p_1\times\cdots\times p_d\times p_1\times\cdots\times p_d}$ is the $2d$-th-order transition tensor which is assumed to have Tucker ranks $(r_1, \dots, r_{2d})$ with  $r_i=\text{rank}(\cm{A}_{(i)})$, $\langle\cdot,\cdot\rangle$ is the generalized tensor inner product defined in \eqref{eq:tensorinner} with $m=2d$ and $n=d$, and $\cm{E}_t \in\mathbb{R}^{p_1\times\cdots\times p_d}$ is the mean-zero random error at time $t$ with possible dependencies among its contemporaneous elements. 

By Section \ref{sec:prelim},  $\cm{A}$ admits the following  Tucker decomposition:
\begin{equation}
	\label{eq:Tucker_model}
	\cm{A}=\cm{G}\times_{i=1}^{2d}\bm{U}_i,
\end{equation}
where $\cm{G}\in\mathbb{R}^{r_1\times\cdots\times r_{2d}}$ is the core tensor, and $\bm{U}_i\in\mathbb{R}^{p_i\times r_i}$ are factor matrices for $1\leq i\leq 2d$.
Note that for any nonsingular matrices $\bm{O}_i\in\mathbb{R}^{r_i\times r_i}$ for $i=1, \dots, 2d$, it holds 
\[
\cm{G}\times_{i=1}^{2d}\bm{U}_i=(\cm{G}\times_{i=1}^{2d}\bm{O}_i) \times_{i=1}^{2d} (\bm{U}_i\bm{O}_i^{-1}).\]
Thus, although  the coefficient tensor $\cm{A}$ in \eqref{eq:LTRTAR_model} is identifiable, its Tucker decomposition in \eqref{eq:Tucker_model} suffers from rotational indeterminacy. 
To pin down the rotation matrices $\bm{O}_i$'s, a special Tucker decomposition, called the higher-order singular value decomposition (HOSVD), is commonly considered \citep{kolda2009tensor}.
In the HOSVD, the factor matrix $\bm{U}_i$ is defined as the tall orthonormal matrix consisting of the top $r_i$ left singular vectors of $\cm{A}_{(i)}$, for $i=1, \dots, 2d$. This further implies that the core tensor $\cm{G}=\cm{A}\times_{i=1}^{2d}\bm{U}_i^\top$ has the all-orthogonal property as follows: $\cm{G}_{(i)}\cm{G}_{(i)}^\top$ is a diagonal matrix for $i=1,\dots, 2d$.  
We will formally discuss the identification conditions of $\cm{G}$ and $\bm{U}_i$'s in Section \ref{sec:id}.

Denote $S_1=\{1,2,\dots,d\}$ and $S_2=\{d+1,d+2,\dots,2d\}$. Note that by \eqref{eq:vecinner}, model \eqref{eq:LTRTAR_model} can be written into the VAR form in \eqref{eq:VAR} with transition matrix $\bm{A}=\cm{A}_{[S_2]}$, i.e.,
\begin{equation} \label{eq:VAR_rep}
	\underbrace{\text{vec}(\cm{Y}_t)}_{\bm{y}_t}=\underbrace{(\otimes_{i\in S_2}\bm{U}_i)\cm{G}_{[S_2]}(\otimes_{i\in S_1}\bm{U}_i)^\top}_{\scalebox{0.7}{\cm{A}}_{[S_2]}} \underbrace{\text{vec}(\cm{Y}_{t-1})}_{\bm{y}_{t-1}}+\underbrace{\text{vec}(\cm{E}_t)}_{\bm{e}_t},
\end{equation}
where $\bm{y}_t=\text{vec}(\cm{Y}_t)$ and $\bm{e}_t=\text{vec}(\cm{E}_t)$.

By the VAR representation in  \eqref{eq:VAR_rep}, we immediately have the necessary and sufficient condition for the existence of a unique strictly stationary solution to model \eqref{eq:LTRTAR_model}  as follows.
\begin{assumption}
	\label{asmp:stationary}
	The spectral radius of $\cm{A}_{[S_2]}$ is strictly less than one.
\end{assumption}

\subsection{Model Identification}\label{sec:id}

To measure the extent of dimension reduction for the parameter space through the low-Tucker-rank assumption on $\cm{A}$, it is necessary to rule out the  rotational indeterminacy of the Tucker decomposition.
As mentioned in Section \ref{sec:LRTAR}, the HOSVD can be used to solve the rotational indeterminacy. Specifically, under the HOSVD, we have
\begin{equation}\label{eq:constraints}
\bm{U}_i^\top\bm{U}_i=\bm{I}_{r_i}\quad\text{and}\quad \cm{G}_{(i)}\cm{G}_{(i)}^\top \text{ is a diagonal matrix},
\end{equation}
for $i=1,\dots, 2d$. Thus, \eqref{eq:constraints} provides a convenient way for us to compute the effective number of degrees of freedom for the proposed LRTAR model.  Specifically, by  subtracting the number of constraints induced by \eqref{eq:constraints} from the total number of parameters in $\cm{G}$ and $\bm{U}_i$'s, we can obtain that the effective number of degrees of freedom for model \eqref{eq:LTRTAR_model} is
\begin{equation}\label{eq:dim}
	\prod_{i=1}^{2d}r_i+\sum_{i=1}^dr_i(p_i-r_i)+\sum_{i=1}^dr_{d+i}(p_i-r_{d+i}).
\end{equation}
This is substantially smaller than the total number of parameters in $\cm{A}$, i.e., $p^2$, with $p=\prod_{i=1}^dp_i$.  For the example with $d=3$ and $p_1=p_2=p_3=20$, if $r_1=\cdots=r_6=2$, then the number of parameters will be reduced from $p^2=64,000,000$ to 280.

While the HOSVD avoids the rotational indeterminacy, it is still not necessarily unique in general. It is possible, however, to guarantee the uniqueness of the HOSVD under the additional assumption that  the singular values of each one-mode matricization $\cm{A}_{(i)}$ are distinct for $i=1,\dots,2d$. 
Under this assumption, each $\bm{U}_i$ contains the $r_i$ left singular vectors of $\cm{A}_{(i)}$ corresponding to the  largest $r_i$ singular values which are all distinct. To further avoid the indeterminacy due to sign switches of the singular vectors,  it suffices to require that the first nonzero element in each column of $\bm{U}_i$ is positive. As a result, such an HOSVD will be unique.

However, despite the non-uniqueness of the Tucker decomposition, the transition tensor $\cm{A}$ itself is uniquely defined. Thus, the identification problem  will not be an issue for the estimation of the low-Tucker-rank tensor $\cm{A}$. Indeed, in Sections \ref{sec:HDM} and \ref{sec:rank_constrained}, we will introduce two types of methods to estimate $\cm{A}$. None of them requires a unique  Tucker decomposition of $\cm{A}$. In practice, we can first obtain a consistent estimator $\cm{\widehat{A}}$ by the methods in Sections \ref{sec:HDM} and \ref{sec:rank_constrained}, i.e., $\cm{\widehat{A}}=\cm{\widehat{A}}_{\textrm{SN}}, \cm{\widehat{A}}_{\textrm{MN}}, \cm{\widehat{A}}_{\textrm{SSN}}, \cm{\widehat{A}}_{\textrm{TSSN}}$, or $\cm{\widehat{A}}_{\textrm{NC}}$, and then apply the HOSVD to $\cm{\widehat{A}}$ to obtain the corresponding unique estimates $\cm{\widehat{G}}$ and $\bm{\widehat{U}}_i$'s. That is, $\bm{\widehat{U}}_i$ is calculated as the top $r_i$ left singular vectors of $\cm{\widehat{A}}_{(i)}$ with the first nonzero element in each column being positive, and  $\cm{\widehat{G}}=\cm{\widehat{A}}\times_{i=1}^{2d}\bm{\widehat{U}}_i^\top$.

Furthermore, it is worth noting that the column space of $\bm{U}_i$ is unique and identifiable, although  $\bm{U}_i$ suffers from rotational indeterminacy; this is similar to the loading matrix in factor models.
Thus, we can treat $\bm{\widehat{U}}_i$ and $\bm{\widehat{U}}_i\bm{O}$ to be equivalent for any orthogonal rotation $\bm{O}\in\mathbb{R}^{r_i\times r_i}$, as they correspond to the same factor interpretation. Moreover, for the orthonormal matrix $\bm{\widehat{U}}_i$, $\bm{\widehat{U}}_i\bm{\widehat{U}}_i^\top$ is the projection matrix of its column space. This projection matrix is unique and identifiable as $\bm{\widehat{U}}_i\bm{\widehat{U}}_i^\top=(\bm{\widehat{U}}_i\bm{O})(\bm{\widehat{U}}_i\bm{O})^\top$ for any orthogonal matrix $\bm{O}$. Hence, in practice, we can use the unique projection matrix $\bm{\widehat{U}}_i\bm{\widehat{U}}_i^\top$ to interpret the estimated low-dimensional factor loadings; see  the empirical analysis in Section \ref{sec:real_data}.

\subsection{Multi-Dimensional Dynamic Factor Interpretations}\label{sec:example}

To illustrate the interpretation of the proposed LRTAR model, we consider the monthly import-export data among 22 countries for 15 product categories studied in \citet{chen2019factor}, where $\cm{Y}_t\in\mathbb{R}^{22\times 22\times 15}$ is the observed Export-Import-Product tensor in month $t$, with the $(i,j,k)$-th entry of $\cm{Y}_t$ corresponding to the export of product $k$ from country $i$ to country $j$; see  Section \ref{sec:real_data} for a detailed analysis of this  dataset.

For simplicity, we first consider the proposed model with $d=2$ for the data obtained by aggregating all 15 product categories, denoted $\bm{Y}_t\in\mathbb{R}^{22\times 22}$, where each row represents an exporting country   and each column represents an importing country. In this case,  $\cm{A}=\cm{G}\times_{i=1}^{4}\bm{U}_i\in \mathbb{R}^{22\times 22\times 22\times 22}$, where $\bm{U}_i\in\mathbb{R}^{22\times r_i}$, with $r_i$ being much smaller than 22, and $\cm{G}\in\mathbb{R}^{r_1\times\cdots\times r_4}$. Suppose that $\bm{U}_i$'s satisfy \eqref{eq:constraints}. Then the proposed LRTAR model  for the matrix-valued time series $\bm{Y}_t$ implies that
\begin{equation}\label{eq:matfactor}
	\bm{U}_3^\top\bm{Y}_t\bm{U}_4 = \langle\cm{G},\bm{U}_1^\top\bm{Y}_{t-1}\bm{U}_2\rangle+\bm{U}_3^\top\bm{E}_t\bm{U}_4.
\end{equation}
Note that in \eqref{eq:matfactor}, 
$\bm{Y}_t$ and $\bm{E}_t$ are both  projected onto a low-dimensional space via  $\bm{U}_{3}$ and $\bm{U}_{4}$, while $\bm{Y}_{t-1}$ is projected onto another low-dimensional space via $\bm{U}_{1}$ and  $\bm{U}_{2}$. This provides a multi-dimensional dynamic factor interpretation of the import-export data as follows. According to \eqref{eq:matfactor},  the dynamic of the international market is driven by the low-dimensional lagged (predictor) matrix factor $\bm{P}_{t}:=\bm{U}_1^\top\bm{Y}_{t-1}\bm{U}_2\in\mathbb{R}^{r_1\times r_2}$, whereas the effect of the past information---encapsulated by the predictor tensor factor $\bm{P}_{t}$---on the present state of the market is manifested through the low-dimensional (response) matrix factor $\bm{R}_{t}:=\bm{U}_3^\top\bm{Y}_t\bm{U}_4\in\mathbb{R}^{r_3\times r_4}$. For the predictor factor,  $\bm{U}_1$ and $\bm{U}_2$ provide factor loadings along the directions of exporting  and importing countries, respectively. Similarly,  $\bm{U}_3$ and $\bm{U}_4$ provide those for the response factor. From a dynamical system point of view, the predictor factor $\bm{P}_{t}$ and the response factor $\bm{R}_{t}$ can be interpreted as the input and output of the economic system, respectively, while the core tensor $\cm{G}$ characterizes the predictive relationship between  $\bm{P}_t$ and $\bm{R}_t$.

The factor interpertation also applies to the general case with $d\geq3$. For the multi-category import-export data $\cm{Y}_t\in\mathbb{R}^{22\times 22\times 15}$ with $d=3$, \eqref{eq:matfactor} is extended to
\begin{equation}\label{eq:tenfactor}
	\cm{Y}_t\times_{i=4}^6\bm{U}_{i}^\top =\langle\cm{G},\cm{Y}_{t-1} \times_{i=1}^3\bm{U}_i^\top\rangle+\cm{E}_t\times_{i=4}^6\bm{U}_{i}^\top,
\end{equation}
where $(\bm{U}_1,\bm{U}_2,\bm{U}_3)$ and $(\bm{U}_4,\bm{U}_5,\bm{U}_6)$
can be viewed as  loadings of exporting countries, importing countries,  and product categories for the predictor and response tensor factors, respectively. 
The predictor tensor factor $\cm{P}_t:=\cm{Y}_{t-1}\times_{i=1}^3\bm{U}_{i}^\top\in\mathbb{R}^{r_1\times r_2\times r_3}$  drives the dynamic of the market, and the response tensor factor $\cm{R}_t:=\cm{Y}_t\times_{i=4}^6\bm{U}_{i}^\top\in\mathbb{R}^{r_4\times r_5\times r_6}$ reflects the reaction of the market to the past information.

\begin{remark}
In the literature on high-dimensional VAR models, a popular dimension reduction method is to impose sparsity on coefficients; see a recent review in \citet{Basu2021}. It is especially suitable for high-dimensional data where only a small subset of the variables are correlated, which is often the case in biological applications, e.g., the discovery of gene regulatory networks \citep{Shojaie2012}.
However, in some economic and financial applications, most variables are expected to be somewhat correlated. This will often lead to many small but nonzero coefficient estimates  under sparse estimation. As a result, the estimated sparse model could be hard to interpret. Rather, when pervasive cross-sectional dependency is observed in the data, it is probably more reasonable to assume that the variables in an economic or financial system are driven by some common factors. The LRTAR model provides the supervised multi-dimensional dynamic factor interpretation, which is the key advantage of the proposed model over the sparse modeling approach.
\end{remark}

\begin{remark}
While we focus on the lag-one tensor autoregression for simplicity, the proposed model can be readily extended to the case with a general lag order; see the discussion in  Section \ref{sec:conclusion}.
\end{remark}

\begin{remark}
In Appendix \ref{append:discuss} of the supplementary file, we further explore the relationship between  the proposed LRTAR model and the tensor factor model in \cite{chen2019factor}.   Note that the latter is an unsupervised learning method and cannot be used directly for forecasting, unless an explicit dynamic structure is imposed on the latent factor process. To build a connection with our model, we adapt the   tensor factor model by assuming that their latent factor process follows an autoregressive model. We can show that the proposed model is more flexible than the tensor factor model with autoregressive factors.
\end{remark}

\section{Convex Estimation via Nuclear Norm Regularization\label{sec:HDM}}

In Sections \ref{subsec:1mode}--\ref{subsec:trunc}, we consider  a series of convex  estimation methods for the proposed model via different nuclear norm regularizations. Throughout the rest of this paper,  the true value of the coefficient tensor $\cm{A}$ is denoted by $\cm{A}^*$.  While our estimation methods in Sections \ref{subsec:1mode} and \ref{subsec:sqrmode} will be developed from the exact low-Tucker-rank structure  of  the transition tensor, our theoretical analysis will allow  $\cm{A}^*$ to be approximately low-Tucker-rank, which includes the exact low-rankness as a special case. In other words, the proposed LRTAR model will be used as a working model. 

\subsection{Regularization via One-Mode Matricization \label{subsec:1mode}}

In model \eqref{eq:LTRTAR_model}, the exactly low-rank transition tensor $\cm{A}\in\mathbb{R}^{p_1\times\cdots\times p_{2d}}$  is subject to the constraints $r_i=\text{rank}(\cm{A}_{(i)})$, for $i=1,\dots, 2d$. A commonly used convex relaxation of such Tucker rank constraints is the regularization via the sum of nuclear (SN) norms of all the one-mode matricizations,
\begin{equation}\label{eq:SN_norm}
	\|\cm{A}\|_{\text{SN}}=\sum_{i=1}^{2d}\|\cm{A}_{(i)}\|_\textup{nuc}.
\end{equation}
The SN norm has been widely used in the literature  \citep{gandy2011tensor, tomioka2011statistical, liu2013tensor, raskutti2019convex} to simultaneously encourage the low-rankness for all modes of a tensor. 
This leads us to the SN norm regularized estimator
\begin{equation}
	\cm{\widehat{A}}_{\text{SN}}=\underset{\scalebox{0.7}{\cm{A}}}{\argmin}\left\{\frac{1}{T}\sum_{t=1}^T\|\cm{Y}_t-\langle\cm{A},\cm{Y}_{t-1}\rangle\|_\text{F}^2+\lambda_{\textup{SN}}\|\cm{A}\|_{\text{SN}}\right\},
\end{equation}
where $\lambda_{\text{SN}}>0$ is the tuning parameter. Note that if instead of  $\|\cm{A}\|_{\text{SN}}$, only one single nuclear norm, say $\|\cm{A}_{(1)}\|_\textup{nuc}$, is penalized, then the resulting estimator will only encourage the low-rankness for the first mode of $\cm{A}$,  while failing to do so for all the other $2d-1$ modes, and hence will be less effective than the above SN estimator.

To derive the estimation error bound for $\cm{\widehat{A}}_{\text{SN}}$, we make the following assumption on the random error $\bm{e}_t=\textup{vec}(\cm{E}_t)$.

\begin{assumption}
	\label{asmp:gaussian}
	Let $\bm{e}_t=\bm{\Sigma}_{\bm{e}}^{1/2}\bbm{\xi}_t$, where $\{\bbm{\xi}_t\}$ is a sequence of $i.i.d.$ random vectors, with $\mathbb{E}(\bbm{\xi}_t)=\bm{0}$ and $\textup{var}(\bbm{\xi}_t)=\bm{I}_p$, and $\bm{\Sigma_e}=\textup{var}(\bm{e}_t)$ is  a positive definite matrix. In addition, the entries $(\xi_{it})_{1\leq i\leq p}$ of  $\bbm{\xi}_t$ are mutually independent and $\kappa^2$-sub-Gaussian, i.e.,
	$\mathbb{E}(e^{\mu\xi_{it}})\leq e^{\kappa^2\mu^2/2}$, for any $\mu\in\mathbb{R}$ and  $i=1,\dots, p$.
\end{assumption}

Assumption \ref{asmp:gaussian} implies that  $\cm{E}_t$ are $i.i.d.$, which is standard in the literature on high-dimensional time series models. It may be relaxed to the weakly dependent case through strong mixing conditions as in \citet{wong2017lasso}.
The sub-Gaussianity condition in Assumption \ref{asmp:gaussian} is milder than the commonly used normality assumption in the literature  \citep{basu2015regularized, raskutti2019convex}. This relaxation is made possible through establishing a novel martingale-based concentration bound in the proof of the deviation bound; see Lemma \ref{lemma:deviation} in Appendix \ref{subsec:auxlemma} of the supplementary file. The covariance matrix $\bm{\Sigma_e}$ captures the contemporaneous dependency in $\cm{E}_t$, and the constant $\kappa$ controls the tail heaviness of the marginal distributions.

For any $z\in\mathbb{C}$, let $\mathcal{A}(z)=\bm{I}_{p}-\cm{A}^*_{[S_2]}z$ be a matrix polynomial, where $\mathbb{C}$ is the set of complex numbers. Let
$\mu_{\min}(\mathcal{A})=\underset{|z|=1}{\min}\,\lambda_{\min}(\mathcal{A}^\dagger(z)\mathcal{A}(z))$ and $\mu_{\max}(\mathcal{A})=\underset{|z|=1}{\max}\,\lambda_{\max}(\mathcal{A}^\dagger(z)\mathcal{A}(z))$, where $\mathcal{A}^\dagger(z)$ is the conjugate transpose of $\mathcal{A}(z)$. It can be shown that $\mu_{\min}(\mathcal{A})>0$ under Assumption \ref{asmp:stationary}; see also \cite{basu2015regularized} for more discussions on the connection between the spectral density of the VAR process and the two quantities. In addition, define the positive constants
\[
\alpha_{\textup{RSC}}=\frac{\lambda_{\min}(\bm{\Sigma_e})}{\mu_{\max}(\mathcal{A})}, \quad M_1=\frac{\lambda_{\max}(\bm{\Sigma_e})}{\mu^{1/2}_{\min}(\mathcal{A})}, \quad\text{and}\quad M_2=\frac{\lambda_{\min}(\bm{\Sigma}_{\bm{e}})\mu_{\max}(\mathcal{A})}{\lambda_{\max}(\bm{\Sigma_e})\mu_{\min}(\mathcal{A})}.
\]
Note that our theoretical analysis does not require $\alpha_\textup{RSC}$, $M_1$ and $M_2$ to be fixed as the dimension grows.  

In practice, it could be too stringent to assume that   $\cm{A}^*$ is exactly low-rank. In this section, we relax it to  the following approximately low-rank assumption:  We assume that all one-mode matricizations of the underlying true transition tensor $\cm{A}^*$ belong to the set of approximately low-rank matrices, namely $\cm{A}^*_{(i)}\in\mathbb{B}_q(r_q^{(i)};p_i,p_{-i}p)$ for some $q\in[0,1)$, where $r_q^{(1)},\dots,r_q^{(2d)}>0$ are the radii for all modes,
\begin{equation}
	\mathbb{B}_q(r;d_1,d_2):=\left\{\bm{M}\in\mathbb{R}^{d_1\times d_2}:\sum_{i=1}^{\min(d_1,d_2)}\sigma_i(\bm{M})^q\leq r\right\}
\end{equation}
is the set of approximately low-rank matrices defined by the $\ell_q$ norm of the singular values, $p_{-i}=p/p_i=\prod_{j=1,j\neq i}^{d}p_j$ for $i=1,\dots,d$, and $p_{-i}=p_{-i+d}$ for $i=d+1,\dots,2d$. For the convenience of notation, we let $0^0=0$. Note that when $q=0$, $\mathbb{B}_0(r;d_1,d_2)$ is the set of $d_1$-by-$d_2$ rank-$r$ matrices. For $q>0$, the restriction on $\sum_{i=1}^{\min(d_1,d_2)}\sigma_i(\bm{M})^q\leq r$ requires that the singular values decay to zero under a polynomial rate, and it is more general  than the exactly low-rank assumption.

% $\alpha_{\textup{RSC}}=\lambda_{\min}(\bm{\Sigma_e})/\mu_{\max}(\mathcal{A})$, $M_1=\lambda_{\max}(\bm{\Sigma_e})/\mu^{1/2}_{\min}(\mathcal{A})$, and $M_2=[\lambda_{\min}(\bm{\Sigma}_{\bm{e}})\mu_{\min}(\mathcal{A})]/[\lambda_{\max}(\bm{\Sigma_e})\mu_{\max}(\mathcal{A})]$.

%$\mu_{\min}(\mathcal{A})=\underset{\theta\in[-\pi,\pi]}{\min}\lambda_{\min}((\bm{I}_p-\cm{A}_{[S_2]}'e^{i\theta})(\bm{I}_p-\cm{A}_{[S_2]}'e^{-i\theta}))$ and
%$\mu_{\max}(\mathcal{A})=\underset{\theta\in[-\pi,\pi]}{\max}\lambda_{\min}((\bm{I}_p-\cm{A}_{[S_2]}'e^{i\theta})(\bm{I}_p-\cm{A}_{[S_2]}'e^{-i\theta}))$

\begin{theorem} \label{thm:SN}
Suppose that $\cm{A}^*_{(i)}\in\mathbb{B}_q(r_q^{(i)};p_i,p_{-i}p)$ for some  $q\in[0,1)$ and radii $r_q^{(i)}>0$ for $i=1,\dots, 2d$. If $T\gtrsim \max_{1\leq i\leq d}p_{-i}p+\max(\kappa^2,\kappa^4)M_2^{-2}p$ and $\lambda_{\textup{SN}}\gtrsim\kappa^2M_1d^{-2}\sum_{i=1}^{d}\sqrt{ p_{-i} p/T}$, then under 	Assumptions \ref{asmp:stationary} and \ref{asmp:gaussian}, 
	\begin{equation}
		\|\cm{\widehat{A}}_{\textup{SN}}-\cm{A}^*\|_{\textup{F}}\lesssim\sqrt{r_q}\left(\frac{2d\cdot\lambda_{\textup{SN}}}{\alpha_{\textup{RSC}}}\right)^{1-q/2}
	\end{equation}
 with probability at least $1-2\sum_{i=1}^{d}\exp(-Cp_{-i}p)-\exp[-C\min(\kappa^{-2},\kappa^{-4})M_2^2p]$,	where $r_q=(2d)^{-1}\sum_{i=1}^{2d}r_q^{(i)}$ is the average radius for  all one-mode matricizations.
\end{theorem}

By Theorem \ref{thm:SN}, when $\lambda_{\textup{SN}}\asymp\kappa^2M_1d^{-2}\sum_{i=1}^{d}\sqrt{p_{-i} p/T}$, the estimation error bound scales as
$\sqrt{r_q}(\kappa^4M_1^2\alpha_\textup{RSC}^{-2}\max_{1\leq i\leq d}p_{-i}p/T)^{1/2-q/4}$; note that the factor $d$ in the error bounds is canceled by the $d^{-2}$ in the rate of $\lambda_{\textup{SN}}$. When $\kappa$, $\alpha_\textup{RSC}^{-1}$ and $M_1$ are bounded, and $q=0$, namely $\cm{A}^*$ is exactly low-rank with Tucker ranks $(r_0^{(1)},\dots,r_0^{(2d)})$, the error bound reduces to $\sqrt{r_0\max_{1\leq i\leq d}p_{-i}p/T}$ and it is comparable to that in \citet{tomioka2011statistical} for $i.i.d.$ tensor regression. 

However, recent research in tensor analysis \citep[e.g.,][]{mu2014square,raskutti2019convex} shows that the SN norm regularization approach can be suboptimal. For our model,  this is mainly because $\cm{A}^*_{(i)}$ is an unbalanced \textit{fat-and-short} matricization of a higher-order tensor.  Specifically,  an essential intermediate step in the proof of Theorem \ref{thm:SN}  is to establish the deviation bound, where we need to upper bound the operator norm of a sub-Gaussian random matrix with the same dimensions as $\cm{A}^*_{(i)}$; see  Lemma \ref{lemma:deviation} in Appendix \ref{subsec:lemma} of the supplementary file. The order of this operator norm  will be dominated by the larger of the  row and column dimensions of the matrix $\cm{A}_{(i)}\in\mathbb{R}^{p_i \times p_{-i}p}$, and hence by the column dimension $p_{-i}p$, which eventually appears in the  error bound.  As a result, the imbalance of the matricization leads to the efficiency bottleneck of the SN estimator.

On the other hand, since the reduced-rank VAR  model can be regarded as an overparameterization of the proposed LRTAR model, alternatively one may focus on the  low-rankness of the transition matrix $\cm{A}_{[S_2]}$ in the VAR representation in \eqref{eq:VAR_rep}, and adopt the matrix nuclear (MN)  estimator \citep{negahban2011estimation} to estimate $\cm{A}$,
\begin{equation}\label{eq:matrixnuclear}
	\cm{\widehat{A}}_{\text{MN}}=\underset{\scalebox{0.7}{\cm{A}}}{\argmin}\left\{\frac{1}{T}\sum_{t=1}^T\|\cm{Y}_t-\langle\cm{A},\cm{Y}_{t-1}\rangle\|_\text{F}^2+\lambda_{\textup{MN}}\|\cm{A}_{[S_1]}\|_\textup{nuc}\right\},
\end{equation}
where $\lambda_{\text{MN}}>0$ is the tuning parameter.
Note that the multi-mode matricization $\cm{A}_{[S_2]}=\cm{A}_{[S_1]}^\top$ is a $p\times p$ square matrix. Thus, the loss of efficiency due to the unbalanced matricization can be avoided, which is confirmed by the following theorem.

\begin{theorem}\label{thm:MN}
Suppose that $\cm{A}^*_{[S_1]}\in\mathbb{B}_q(s_q^{(1)};p,p)$ for some $q=[0,1)$ and radius $s_q^{(1)}>0$. If $T\gtrsim [1+\max(\kappa^2,\kappa^4)M_2^{-2}]p$ and $\lambda_{\textup{MN}}\gtrsim\kappa^2M_1\sqrt{p/T}$, then under Assumptions \ref{asmp:stationary} and \ref{asmp:gaussian}, 	\begin{equation*}
		\|\cm{\widehat{A}}_{\textup{MN}}-\cm{A}^*\|_{\textup{F}}\lesssim\sqrt{s_q^{(1)}}\left(\frac{\lambda_{\textup{MN}}}{\alpha_{\textup{RSC}}}\right)^{1-q/2}
	\end{equation*}
with probability at least $1-\exp(-Cp)-\exp[-C\min(\kappa^{-2},\kappa^{-4})M_2^2p]$.
\end{theorem}

Theorem \ref{thm:MN} shows that, with  $\lambda_{\textup{MN}}\asymp\kappa^2M_1\sqrt{p/T}$, the estimation error bound for $\cm{\widehat{A}}_{\text{MN}}$ scales as $\sqrt{s_q^{(1)}}(\kappa^4M_1^2\alpha_\text{RSC}^{-2}p/T)^{1/2-q/4}$, where $s_q^{(1)}$ is the singular value radius of $\cm{A}^*_{[S_1]}$. This result is comparable to that in \citet{negahban2011estimation} for reduced-rank VAR models, yet we relax both the singular value constraint $\|\cm{A}_{[S_1]}^*\|_{\text{op}}<1$ and the normality assumption on the random error in their paper. This estimation error bound is clearly smaller than that in Theorem \ref{thm:SN}, as $(\max_{1\leq i\leq d}p_{-i}p/T)^{1/2-q/4}$ in general can be much larger than $(p/T)^{1/2-q/4}$ when $s_q^{(1)}\asymp r_q$. Therefore, adopting square matricization can indeed improve the estimation performance. 

The idea of using square matricization to improve efficiency was adopted by \citet{mu2014square} in low-rank tensor completion problems. Their proposed method, called the square deal, is to first unfold a general higher-order tensor into a matrix with similar numbers of rows and columns, and then use the MN norm as the regularizer. However,  despite the advantage of $\cm{\widehat{A}}_{\textup{MN}}$  over $\cm{\widehat{A}}_{\textup{SN}}$, Theorem \ref{thm:MN} reveals another drawback of $\cm{\widehat{A}}_{\textup{MN}}$. 
That is, the  error bounds for $\cm{\widehat{A}}_{\textup{MN}}$ will increase as the radius $s_q^{(1)}$ for the singular values of $\cm{A}^*_{[S_1]}$ becomes larger.  
In other words, unless we have prior knowledge that the $\ell_q$-``norm'' of singular values of $\cm{A}^*_{[S_1]}$ is truly small, $\cm{\widehat{A}}_{\textup{MN}}$ may not be desirable in practice. 

On the other hand, although the SN regularizer in \eqref{eq:SN_norm} suffers from  inefficiency due to the imbalance of one-mode matricizations,  it has the attractive feature of simultaneously encouraging low-rankness across all modes of  $\cm{A}$, and thus is more  efficient than its counterpart which considers only one single one-mode matricization, say, $\|\cm{A}_{(1)}\|_\textup{nuc}$. Similarly, if we can encourage  low-rankness across all possible square matricizations of $\cm{A}$, the estimation performance may be further improved upon $\cm{\widehat{A}}_{\textup{MN}}$.  This motivates us to propose  a new regularization approach in the next subsection.

\begin{remark}	
Since our statistical theory is  non-asymptotic,  the dimensions $p_i$'s, approximate or exact Tucker ranks such as $r_q^{(i)}$'s in Theorem \ref{thm:SN}, and any other quantities appearing in the error bounds  are all allowed to diverge to infinity. 
Our results show how these quantities explicitly affect the error bounds. However, for simplicity of understanding the convergence rates, one may assume that $\alpha_\textup{RSC}$, $M_1$ and $M_2$ are fixed; see Table \ref{table1}. For example, it is common to assume that  $0<c\leq \lambda_{\min}(\bm{\Sigma_e})\leq \lambda_{\max}(\bm{\Sigma_e})\leq C$. In addition, when the spectral radius of $\cm{A}^*_{[S_2]}$ is bounded away from one, it can be shown that $\mu_{\min}(\mathcal{A})$ is also bounded away from zero.
%For example, it is natural to assume that the entries in the innovation $\cm{E}_t$ are not strongly correlated and hence all eigenvalues of its covariance matrix are bounded away from zero and infinity. In addition, when the spectral radius of $\cm{A}^*_{[S_2]}$ is bounded away from one, it can be shown that $\mu_{\min}(\mathcal{A})$ is also bounded away from zero.
%	Under the non-asymptotic scheme, the dependency measurement quantities $\alpha_\textup{RSC}$, $M_1$ and $M_2$ are allowed to vary with $p_i$ or remain constants. In the following theorems of non-asymptotic estimation rates, it can be seen 	 
\end{remark}

\subsection{Regularization via Square Matricization \label{subsec:sqrmode}}
Motivated by the discussion at the end of Section \ref{subsec:1mode}, we propose a novel convex regularizer which improves upon both   SN and MN regularizers in \eqref{eq:SN_norm} and \eqref{eq:matrixnuclear},  by simultaneously encouraging low-rankness across all possible square matricizations of   $\cm{A}$.

For any $2d$-th-order tensor $\cm{A}\in\mathbb{R}^{p_1\times\cdots\times p_{d}\times p_1\times\cdots\times p_{d}}$, its multi-mode matricization $\cm{A}_{[I]}$ will be a $p\times p$ square  matrix, with $p=\prod_{i=1}^{d}p_i$, if the index set is  chosen as \[I=\{\ell_1,\dots, \ell_d\},\] 
where each index $\ell_i$ is set to either $i$ or $d+i$,  for $i=1,\dots, d$.  For instance, $\cm{A}_{[S_1]}$ is the square matricization formed by setting $\ell_i=i$ for all $i=1,\dots, d$.  Moreover, if $\cm{A}$ has Tucker ranks $(r_1,\dots,r_{2d})$, then the rank of the matricization $\cm{A}_{[I]}$ is  at most $\min(\prod_{i=1,i\in I}^{2d}r_i,\prod_{i=1,i\notin I}^{2d}r_i)$. Therefore, if we penalize the sum of nuclear norms of all such squares matricizations, which we call the sum of square-matrix nuclear (SSN) norms for simplicity, then the resulting estimator would  enjoy the efficiency gain from both  the use of square matricizations and simultaneous  incorporation of many rank constraints.

Obviously, there are $2^d$ possible choices of the index set $I$ that corresponds to a square matricization $\cm{A}_{[I]}$. However, since $\cm{A}_{[I]}=\cm{A}_{[I^\complement]}^\top$, when defining the SSN norm, we only need to include one of $I$ and its complement $I^\complement$. A simple way to do so is to choose only sets containing the index one. That is, fix $\ell_1=1$ and choose  $\ell_i=i$ or  $d+i$ for $i=2,\dots, d$. This results in totally $2^{d-1}$ chosen index sets, denoted by $I_1,I_2,\dots,I_{2^{d-1}}$. Note that $I_1=S_1=\{1,\dots, d\}$. For example, when $d=3$, we have four chosen index sets, $I_1=\{1,2,3\}, I_2=\{1,5,3\}, I_3=\{1,2,6\}$ and $I_4=\{1,5,6\}$.

Based on the above choice of the $2^{d-1}$ index sets,  we introduce the following SSN norm,
\begin{equation} \label{eq:SSN_norm}
	\|\cm{A}\|_{\text{SSN}}=\sum_{k=1}^{2^{d-1}}\left\|\cm{A}_{[I_k]}\right\|_\textup{nuc}.
\end{equation}
For a tuning parameter $\lambda_{\textup{SSN}}>0$, the corresponding estimator is defined as
\begin{equation}\label{eq:SSN_est}
	\cm{\widehat{A}}_{\text{SSN}}=\underset{\scalebox{0.7}{\cm{A}}}{\argmin}\left\{\frac{1}{T}\sum_{t=1}^T\|\cm{Y}_t-\langle\cm{A},\cm{Y}_{t-1}\rangle\|_\text{F}^2+\lambda_{\textup{SSN}}\|\cm{A}\|_{\text{SSN}}\right\}.
\end{equation}

If the rank of one-mode matricizations $\text{rank}(\cm{A}_{(i)})=r_i$, each square matricization $\cm{A}_{[I_k]}$ is also low-rank with $\text{rank}(\cm{A}_{[I_k]})\leq\min(\prod_{i=1,i\in I_k}^{2d}r_i,\prod_{i=1,i\notin I_k}^{2d}r_i)$. Similarly, if all $\cm{A}_{(i)}$s are approximately low-rank, the square matricizations are approximately low-rank as well. In contrast to the SN norm in \eqref{eq:SN_norm} which directly matches the Tucker ranks $\text{rank}(\cm{A}_{(i)})$ for $i=1,\dots, d$,  the SSN norm encourages the low-Tucker-rank structure of $\cm{A}$ by simultaneously encouraging low-rankness of all   square matricizations $\cm{A}_{[I_k]}$'s.
The following theorem gives the theoretical results for $\cm{\widehat{A}}_{\text{SSN}}$.

\begin{theorem} \label{thm:SSN}
Suppose that $\cm{A}^*_{[I_k]}\in\mathbb{B}_q(s_q^{(k)};p,p)$ for  some $q\in[0,1)$ and radii $s_q^{(k)}>0$ for $k=1,\dots,2^{d-1}$. If $T\gtrsim [1+\max(\kappa^2,\kappa^4)M_2^{-2}]p$ and $\lambda_{\textup{SSN}}\gtrsim\kappa^2M_12^{1-d}\sqrt{p/T}$, 
under Assumptions \ref{asmp:stationary} and \ref{asmp:gaussian}, then  with probability at least $1-\exp[-C(p-d)]-\exp[-C\min(\kappa^{-2},\kappa^{-4})M_2^2p]$,
\begin{equation}
	\|\cm{\widehat{A}}_{\textup{SSN}}-\cm{A}^*\|_{\textup{F}}\lesssim\sqrt{s_q}\left(\frac{2^{d-1}\lambda_{\textup{SSN}}}{\alpha_{\textup{RSC}}}\right)^{1-q/2}
\end{equation}
where $s_q=2^{1-d}\sum_{k=1}^{2^{d-1}}s_q^{(k)}$ is the average radius for all square matricizations.
\end{theorem}

\begin{table}
	\begin{center}
		\begin{tabular}{c|ccc}\hline
			& SN & MN & SSN\\\hline
			Sample size & $T\gtrsim (\max_{1\leq i\leq d}p_{-i}+ M_2^{-2})p$ & $T\gtrsim (1+M_2^{-2})p$ & $T\gtrsim (1+M_2^{-2})p$\\
			Estimation error & $\sqrt{r_q}(\max_{1\leq i\leq d}p_{-i}p/T)^{1/2-q/4}$ & $\sqrt{s_q^{(1)}}(p/T)^{1/2-q/4}$ & $\sqrt{s_q}(p/T)^{1/2-q/4}$ \\\hline
		\end{tabular}
		\caption{Summary of sample size conditions and error  bounds in Theorems \ref{thm:SN}--\ref{thm:SSN}, where  $p_{-i}=\prod_{j=1, j\neq i}^{d}p_j$, $r_q=(2d)^{-1}\sum_{i=1}^{2d}r_q^{(i)}$, and $s_q=2^{1-d}\sum_{k=1}^{2^{d-1}}s_q^{(k)}$, assuming that $\kappa$, $\alpha_\textup{RSC}^{-1}$ and $M_2$ are bounded. 
			\label{table1}} 
	\end{center}
\end{table}

By Theorem \ref{thm:SSN}, when $\lambda_{\textup{SSN}}\asymp\kappa^2M_12^{1-d}\sqrt{p/T}$, the estimation error bound scales as $\sqrt{s_q}(\kappa^4M_1^2\alpha_\text{RSC}^{-2}p/T)^{1/2-q/4}$, and reduces to $\sqrt{s_0p/T}$ in the exactly low-rank setting for $q=0$ when $\kappa$, $\alpha_\textup{RSC}^{-1}$ and $M_2$ are bounded. For a clearer comparison among the three estimators $\cm{\widehat{A}}_{\textup{SN}}, \cm{\widehat{A}}_{\textup{MN}}$ and $\cm{\widehat{A}}_{\textup{SSN}}$, we summarize the main results of Theorems \ref{thm:SN}--\ref{thm:SSN} in Table \ref{table1}. 
First, both $\cm{\widehat{A}}_{\textup{SSN}}$ and $\cm{\widehat{A}}_{\textup{MN}}$ have much smaller error bounds and less stringent sample size requirements than $\cm{\widehat{A}}_{\textup{SN}}$, due to the diverging dimension $p_{-i}$ in the results of the latter. This reaffirms the advantage of the square matricizations.

Secondly, comparing $\cm{\widehat{A}}_{\textup{SSN}}$ to $\cm{\widehat{A}}_{\textup{MN}}$, since the factor $s_q$ in the error bounds of $\cm{\widehat{A}}_{\textup{SSN}}$ is the average of all $s_q^{(k)}$ for $k=1,\dots, 2^{d-1}$, $\cm{\widehat{A}}_{\textup{SSN}}$ can protect us from the bad scenarios where the $\ell_q$-``norm'' of the singular values of $\cm{A}^*_{[S_1]}$ is relatively large.  If all the $s_q^{(k)}$'s are of the same order, then the error upper bounds for $\cm{\widehat{A}}_{\textup{SSN}}$ and $\cm{\widehat{A}}_{\textup{MN}}$ in Table \ref{table1} will be similar. However, our simulation results in Section \ref{sec:sim} show that $\cm{\widehat{A}}_{\textup{SSN}}$ clearly outperforms $\cm{\widehat{A}}_{\textup{MN}}$ under various settings, even when $s_q^{(1)}=\cdots=s_q^{(2d)}$.
Indeed, the  error bounds for $\cm{\widehat{A}}_{\textup{SSN}}$ in Theorem \ref{thm:SSN} is likely to be loose, which is believed to be caused by taking the upper bounds on the dual norm of the SSN norm in the proof of Lemma \ref{lemma:dualnorm}; see Appendix \ref{subsec:lemma} of the supplementary file for details. By contrast, the error bounds for $\cm{\widehat{A}}_{\textup{MN}}$ are minimax-optimal \citep{negahban2011estimation}.
Therefore, although our theoretical results are not sharp enough to  distinguish clearly between the error rates of $\cm{\widehat{A}}_{\textup{SSN}}$  and $\cm{\widehat{A}}_{\textup{MN}}$,  we conjecture that the actual rate of the former is generally smaller than that of the latter. Methodologically, this is also easy to understand because, unlike $\cm{\widehat{A}}_{\textup{MN}}$, $\cm{\widehat{A}}_{\textup{SSN}}$ simultaneously encourages the low-rankness across all square matricizations of $\cm{A}$ rather than just on  $\cm{A}_{[S_1]}$.

\begin{remark}
While our SSN regularization is proposed in the time series context, the idea of imposing joint penalties on all (close to) square matricizations of the coefficient tensor may  be extended to general higher-order tensor estimation problems.  It can also be refined to accommodate particular structures of the data. For example, 
if some of the $d$ modes of the tensor-value time series $\cm{Y}_t$, namely $p_1,\dots, p_d$, are equal, then even a greater number of possible square matricizations of $\cm{A}$ can be formed, resulting in improved estimation efficiency. 
\end{remark}

%the estimation error is $\|\cm{\widehat{A}}-\cm{A}\|_{\textup{F}}$, the prediction error is $T^{-1}\sum_{t=1}^T\|\langle\cm{\widehat{A}}-\cm{A},\cm{Y}_{t-1}\rangle\|_\textup{F}^2$, for $\cm{\widehat{A}}=\cm{\widehat{A}}_{\textup{SN}}, \cm{\widehat{A}}_{\textup{MN}}$ and $\cm{\widehat{A}}_{\textup{SSN}}$

\begin{comment}
	To analyze how the dependence in the time series affects the regularized estimator, we follow \citet{basu2015regularized} to use the following spectral measure of dependence. For the details of spectral density measure, readers might refer to \citet{basu2015regularized}.
	\begin{definition}
		For a $p$-dimensional VAR(1) model with transition matrix $\bm{A}$, define the matrix polynomial $\mathcal{A}(z)=\bm{I}_p-\bm{A}z$ where $z$ is any point on the complex plane, and define its extreme eigenvalues as
		\begin{equation}
			\mu_{\min}(\mathcal{A}):=\underset{|z|=1}{\min}\lambda_{\min}(\mathcal{A}^*(z)\mathcal{A}(z)),~~\text{and}~~\mu_{\max}(\mathcal{A}):=\underset{|z|=1}{\max}\lambda_{\max}(\mathcal{A}^*(z)\mathcal{A}(z)),
		\end{equation}
		where $\mathcal{A}^*(z)$ is the Hermitian transpose of $\mathcal{A}(z)$.
	\end{definition}    
\end{comment}

\subsection{Truncated Regularized Estimation \label{subsec:trunc}}
While the  estimation methods in Sections \ref{subsec:1mode} and \ref{subsec:sqrmode} do not require exact low-rankness of the true transition tensor $\cm{A}^*$, sometimes imposing exact low-rankness is more desirable if one wants to interpret the underlying dynamic tensor factor structures. As  discussed in Section \ref{sec:example}, the Tucker ranks determine the dimensions of the dynamic factors. For greater model interpretability,  we further  consider the case that   $\cm{A}^*$ is exactly low-rank and propose a truncation method to consistently estimate its true Tucker ranks $(r_1, \dots, r_{2d})$.

Let $\gamma>0$ be a threshold parameter to be chosen properly. Given the estimator  $\cm{\widehat{A}}_{\text{SSN}}$, for each $i=1, \dots, 2d$, we calculate the singular value decomposition (SVD) of the mode-$i$ matricization $(\cm{\widehat{A}}_{\text{SSN}})_{(i)}$ with the singular values arranged in descending order. Next we truncate the SVD by retaining only singular values greater than $\gamma$, and  take their corresponding left singular vectors to define the matrix $\bm{\widetilde{U}}_i$. Then, the truncated core tensor is defined as \[\cm{\widetilde{G}}=\cm{\widehat{A}}_{\text{SSN}}\times_{i=1}^{2d}\bm{\widetilde{U}}_i^\top,\]
based on which we propose the truncated sum of square-matrix nuclear (TSSN) estimator
\[
\cm{\widehat{A}}_{\text{TSSN}}=\cm{\widetilde{G}}\times_{i=1}^{2d}\bm{\widetilde{U}}_i.
\]

To derive the theoretical results on rank selection, we make the following assumption on the exact Tucker ranks and the magnitude of the singular values.

\begin{assumption}\label{asmp:truncate}
	For all $i=1,\dots,2d$, $\sigma_r(\cm{A}_{(i)}^*)=0$ for all $r>r_i$, and there exists a constant $C>1$ such that $\min_{1\leq i\leq 2d}\sigma_{r_i}\left(\cm{A}_{(i)}^*\right)\geq C\gamma$. As $T\rightarrow\infty$, the threshold parameter satisfies $\gamma \gg (\kappa^2M_1/\alpha_{\textup{RSC}}) \sqrt{s_0p/T}$, where $s_0=2^{1-d}\sum_{k=1}^{2^{d-1}}\textup{rank}(\cm{A}^*_{[I_k]})$.
\end{assumption}

Assumption \ref{asmp:truncate} requires that $\cm{A}^*$ has exact Tucker ranks $(r_1,\dots,r_{2d})$ which do not diverge too fast. The smallest positive singular value for each $\cm{A}_{(i)}^*$ is assumed to be bounded away from the threshold $\gamma$ when the sample size is sufficiently large. Since Assumption  \ref{asmp:truncate} involves unknown quantities, it cannot be used directly for determining $\gamma$ in practice.  Instead, we recommend using a data-driven threshold parameter $\gamma$ to be described  below.

The rank selection consistency of the truncation method and the asymptotic estimation error rate of $\cm{\widehat{A}}_{\text{TSSN}}$ are given by the following theorem. 

\begin{theorem} \label{thm:rankconsistency}
	Under the conditions of Theorem \ref{thm:SSN} and Assumption \ref{asmp:truncate}, if the tuning parameter $\lambda_{\textup{SSN}}\asymp\kappa^2M_12^{1-d}\sqrt{p/T}$, then
	\begin{equation*}
		\mathbb{P}\left\{\textup{rank}\left((\cm{\widehat{A}}_{\textup{TSSN}})_{(i)}\right)=\textup{rank}(\cm{A}^*_{(i)}),~\text{for}~i=1,\dots,2d\right\}\to 1,
	\end{equation*}
	as  $T\to\infty$, and for any fixed $d$,
	\[
	\|\cm{\widehat{A}}_{\textup{TSSN}}-\cm{A}^*\|_{\textup{F}}=O_p\left (\sqrt{s_0p/T}\right ), 
	\]
	where $s_0$ is defined as in Assumption \ref{asmp:truncate}.
\end{theorem}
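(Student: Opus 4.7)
The plan is to obtain both conclusions as direct consequences of Theorem \ref{thm:SSN} (with $q=0$) combined with Weyl's inequality and the quasi-optimality of the HOSVD-type truncation. First, I would invoke Theorem \ref{thm:SSN} in the exactly low-rank case $q=0$, so that with the prescribed $\lambda_{\textup{SSN}}\asymp\kappa^2 M_1 2^{1-d}\sqrt{p/T}$ we obtain, on the stated high-probability event $\mathcal{E}_T$,
\begin{equation*}
\|\cm{\widehat{A}}_{\textup{SSN}}-\cm{A}\|_{\textup{F}}\lesssim \frac{\kappa^2 M_1}{\alpha_{\textup{RSC}}}\sqrt{s_0 p/T}.
\end{equation*}
Since the Frobenius norm dominates the operator norm of every matricization, the same bound controls $\|(\cm{\widehat{A}}_{\textup{SSN}}-\cm{A})_{(i)}\|_{\textup{op}}$ for each $i=1,\dots,2d$.

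Next I would establish rank selection consistency by applying Weyl's inequality mode by mode. For any $j$, $|\sigma_j((\cm{\widehat{A}}_{\textup{SSN}})_{(i)})-\sigma_j(\cm{A}_{(i)})|\le \|(\cm{\widehat{A}}_{\textup{SSN}}-\cm{A})_{(i)}\|_{\textup{op}}$. Assumption \ref{asmp:truncate} states $\gamma \gg (\kappa^2 M_1/\alpha_{\textup{RSC}})\sqrt{s_0 p/T}$, so on $\mathcal{E}_T$ the perturbation is $o_p(\gamma)$. Hence, for $j\le r_i$ we have $\sigma_j((\cm{\widehat{A}}_{\textup{SSN}})_{(i)})\ge C\gamma - o_p(\gamma) > \gamma$, while for $j>r_i$ we have $\sigma_j((\cm{\widehat{A}}_{\textup{SSN}})_{(i)}) \le o_p(\gamma) < \gamma$, both with probability tending to one. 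Therefore the truncation retains exactly $r_i$ singular vectors in each mode, yielding $\mathbb{P}\{\textup{rank}((\cm{\widehat{A}}_{\textup{TSSN}})_{(i)})=r_i,\ \forall i\}\to 1$.

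For the error rate I would then work on the event $\mathcal{F}_T$ that the truncated ranks equal the true ones. On $\mathcal{F}_T$, the estimator $\cm{\widehat{A}}_{\textup{TSSN}}=\cm{\widehat{A}}_{\textup{SSN}}\times_{i=1}^{2d}\bm{\widetilde{U}}_i\bm{\widetilde{U}}_i^\top$ is precisely the truncated HOSVD of $\cm{\widehat{A}}_{\textup{SSN}}$ at multilinear ranks $(r_1,\dots,r_{2d})$. By the quasi-optimality of the truncated HOSVD (see \citet{deLathauwer2000multilinear}), for any tensor $\cm{B}$ with multilinear ranks at most $(r_1,\dots,r_{2d})$,
\begin{equation*}
\|\cm{\widehat{A}}_{\textup{TSSN}}-\cm{\widehat{A}}_{\textup{SSN}}\|_{\textup{F}} \le \sqrt{2d}\,\|\cm{B}-\cm{\widehat{A}}_{\textup{SSN}}\|_{\textup{F}}.
\end{equation*}
Taking $\cm{B}=\cm{A}$, which by Assumption \ref{asmp:truncate} has the desired ranks, and applying the triangle inequality yields
\begin{equation*}
\|\cm{\widehat{A}}_{\textup{TSSN}}-\cm{A}\|_{\textup{F}} \le (1+\sqrt{2d})\,\|\cm{\widehat{A}}_{\textup{SSN}}-\cm{A}\|_{\textup{F}} = O_p\bigl(\sqrt{s_0 p/T}\bigr),
\end{equation*}
where the last step uses the Theorem \ref{thm:SSN} bound and the fact that $d$ is fixed so that the constant $(1+\sqrt{2d})$ is absorbed.

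The main obstacle is not any single calculation but making sure the two pieces couple correctly: the Weyl-based gap argument requires the operator-norm perturbation of each one-mode matricization to be strictly below the threshold $\gamma$, while the final rate relies on keeping the HOSVD truncation error comparable to $\|\cm{\widehat{A}}_{\textup{SSN}}-\cm{A}\|_{\textup{F}}$. Both rely on the same high-probability event from Theorem \ref{thm:SSN}, so the argument is clean once one notices that Assumption \ref{asmp:truncate} is calibrated precisely against the estimation rate there.
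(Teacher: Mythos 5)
Your proposal is correct and follows essentially the same route as the paper: rank selection consistency via a singular-value perturbation bound calibrated against the threshold $\gamma$ (you use Weyl's inequality, the paper uses Mirsky's inequality followed by an $\ell_\infty\le\ell_2$ step, which is equivalent here), and the error rate via $\|\cm{\widehat{A}}_{\textup{TSSN}}-\cm{\widehat{A}}_{\textup{SSN}}\|_{\textup{F}}\le\sqrt{2d}\,\|\cm{\widehat{A}}_{\textup{SSN}}-\cm{A}\|_{\textup{F}}$ on the correct-rank event, which you obtain by citing HOSVD quasi-optimality while the paper derives the same bound directly by applying Mirsky's inequality mode by mode to the sorted core tensor.
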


Similar to \citet{gandy2011tensor}, the SSN estimation can be solved by the alternating direction method of multipliers (ADMM) algorithm, while the truncation can be done by the standard HOSVD; see  Appendix \ref{append:ADMM} of the supplementary file for details. For the tuning parameter selection, since the cross-validation method is unsuitable for time series or intrinsically ordered data, we apply the Bayesian information criterion (BIC) to select the optimal $\lambda_\text{SSN}$, where the number of degrees of freedom is defined as $2^{-(d-1)}\sum_{k=1}^{2^{d-1}}s_k(2p-s_k)$. For the threshold parameter $\gamma$ of the truncated estimator, we recommend $\gamma=2^{d-1}\lambda_\text{SSN}/4$ to practitioners, where $\lambda_{\textup{SSN}}$ is the optimal tuning parameter selected by the BIC. Similarly, the BIC can be used for tuning parameter selection for SN and MN estimators as well.

\begin{remark}\label{rmk:rankadjust}
The Tucker ranks of $\cm{A}^*$ must satisfy $r_{\max}^2\leq\prod_{i=1}^{2d}r_i$, where $r_{\max}=\max_{1\leq i\leq 2d}r_i$; see also the discussion below  \eqref{eq:rankcond}.  In practice, if the ranks selected by the truncated estimator fail to satisfy this condition, that is, when $\widehat{r}_{\max}:=\max_{1\leq i\leq 2d}\widehat{r}_i$  exceeds the product of the other ranks (i.e., $\prod_{i=1, \widehat{r}_i\neq \widehat{r}_{\max}}^{2d}\widehat{r}_i$), where $\widehat{r}_i=\textup{rank}((\cm{\widehat{A}}_\textup{TSSN})_{(i)})$, we recommend the following selection procedure for rank adjustments. First, for each $\widehat{r}_i$ that is not equal to $\widehat{r}_{\max}$, we increase it until the above condition on Tucker ranks is met, while fixing the other ranks, and obtain the adjusted truncated estimator. Next, for all adjusted estimators, we select the most suitable ranks via BIC. For example, if the TSSN estimator produces the Tucker ranks $(3,2,1,1,1,1)$, we consider adjusted ranks $(3,3,1,1,1,1)$, $(3,2,2,1,1,1)$, $(3,2,1,2,1,1)$, $(3,2,1,1,2,1)$ or $(3,2,1,1,1,2)$, and then select the one with the smallest BIC.
\end{remark}

\section{Non-convex Tensor Regression Estimation \label{sec:rank_constrained}} 

\subsection{Non-convex Estimation}

While significant efficiency improvement can be achieved by the square matricization in Section \ref{subsec:sqrmode}, the consistency of the SSN and TSSN estimators still requires the sample size $T$ grows faster than the overall dimension $p=\prod_{j=1}^dp_j$. To further lower the sample size requirement and improve the estimation efficiency, this section proposes a non-convex estimation method for the LRTAR model under the assumption that $\cm{A}^*$ is exactly low-rank. 

%Suppose that the true parameter tensor  and the Tucker ranks are all known. 
First, we assume that the true Tucker ranks $(r_1,r_2,\dots,r_{2d})$ are known.
Following \cite{han2020optimal}, we can estimate the transition tensor via the non-convex (NC) optimization:
\begin{equation}\label{eq:ORLTR}
	\begin{split}
		\cm{\widehat{A}}_\textup{NC}&=\cm{\widetilde{G}}\times_{i=1}^{2d}\widetilde{\bm{U}}_i\\
		&=\underset{\substack{\scalebox{0.7}{\cm{G}}\in\mathbb{R}^{r_1\times\cdots\times r_{2d}}\\\bm{U}_i\in\mathbb{R}^{p_i\times r_i}}}{\arg\min}\left\{\frac{1}{2T}\sum_{t=1}^T\|\cm{Y}_t-\langle\cm{G}\times_{i=1}^{2d}\bm{U}_i,\cm{Y}_{t-1}\rangle\|_\text{F}^2+\frac{a}{2}\sum_{i=1}^{2d}\|\bm{U}_i^\top\bm{U}_i-b^2\bm{I}_{r_i}\|_\text{F}^2\right\},
	\end{split}
\end{equation}
where the regularization terms $\|\bm{U}_i^\top\bm{U}_i-b^2\bm{I}_{r_i}\|_\text{F}^2$ are used to prevent $\bm{U}_i$'s from being singular and balance the scale of tensor decomposition components, and $a,b>0$ are tuning parameters. 

To further understand the regularization terms for $\bm{U}_i$'s, let $\overline{\mathcal{L}}(\cm{A}):=(2T)^{-1}\sum_{t=1}^T\|\cm{Y}_t-\langle\cm{A},\cm{Y}_{t-1}\rangle\|_\text{F}^2$ and $\mathcal{L}(\cm{G},\bm{U}_1,\dots,\bm{U}_{2d}):=\overline{\mathcal{L}}([\![\cm{G};\bm{U}_1,\dots,\bm{U}_{2d}]\!])$ be the squared loss functions with respect to $\cm{A}$ and its Tucker decomposition, respectively. While the optimization in  \eqref{eq:ORLTR} is  unconstrained, any solution $\widetilde{\bm{U}}_i$ will satisfy $\widetilde{\bm{U}}_i^\top\widetilde{\bm{U}}_i=b^2\bm{I}_{r_i}$. Otherwise, we can always find some nonsingular matrices $\bm{Q}_i\in\mathbb{R}^{r_i\times r_i}$, for $i=1,\dots,2d$, such that $\widetilde{\bm{U}}_i\bm{Q}_i=\overline{\bm{U}}_i$ and $\overline{\bm{U}}_i^\top\overline{\bm{U}}_i=b^2\bm{I}_{r_i}$. In this case, $\mathcal{L}(\widetilde{\cm{G}},\widetilde{\bm{U}}_1,\dots,\widetilde{\bm{U}}_{2d})=\mathcal{L}(\widetilde{\cm{G}}\times_{i=1}^{2d}\bm{Q}_i^{-1},\overline{\bm{U}}_1,\dots,\overline{\bm{U}}_{2d})$, while the regularization terms for $\overline{\bm{U}}_i$ reduce to zero. This will result in a contradiction with the definition of minimizers. 
Note that we do not require $b=1$, i.e., $\widetilde{\bm{U}}_i$ may not be orthonormal. Moreover, we do not require the uniqueness of   $\widetilde{\cm{G}}$ and $\widetilde{\bm{U}}_i$'s, since we only need the resulting $\cm{\widehat{A}}_\textup{NC}$ from \eqref{eq:ORLTR}; see also \cite{han2020optimal}. However, as discussed in Section \ref{sec:id}, after obtaining  $\cm{\widehat{A}}_\textup{NC}$, we can apply the HOSVD to $\cm{\widehat{A}}_\text{NC}$ to obtain the uniquely defined orthonormal estimates $\widehat{\bm{U}}_i$'s and all-orthogonal estimate $\widehat{\cm{G}}$.

The partial gradients of the squared loss $\mathcal{L}$ with respect to $\bm{U}_i$ and $\cm{G}$ are defined as
\begin{equation}
	\nabla_{\bm{U}_i}\mathcal{L}=\nabla\overline{\mathcal{L}}_{(i)}(\otimes_{j=1,j\neq i}^{2d}\bm{U}_j)\cm{G}_{(i)}^\top=\nabla\overline{\mathcal{L}}_{(i)}[\cm{G}\times_{j=1,j\neq i}^{2d}\bm{U}_j]_{(i)}~~\text{and}~~\nabla_{\scalebox{0.75}{\cm{G}}}\mathcal{L}=\nabla\overline{\mathcal{L}}\times_{i=1}^{2d}\bm{U}_i^\top,
\end{equation}
where $\nabla\overline{\mathcal{L}}=T^{-1}\sum_{t=1}^T\cm{Y}_{t-1}\circ[\langle\cm{A},\cm{Y}_{t-1}\rangle-\cm{Y}_t]$, and $\circ$ denotes the tensor outer product. The problem in \eqref{eq:ORLTR} can be solved by the gradient descent algorithm: for $s=0,\dots,I-1$,
\begin{equation}
	\begin{split}
		\bm{U}_i^{(s+1)} & = \bm{U}_i^{(s)} - \eta\nabla_{\bm{U}_i}\mathcal{L}^{(s)}-\eta a\left[\bm{U}_i^{(s)}(\bm{U}_i^{(s)\top}\bm{U}_i^{(s)}-b^2\bm{I}_{r_i})\right],\quad i=1,\dots,2d,\\
		\text{and}~\cm{G}^{(s+1)} & = \cm{G}^{(s)} - \eta\nabla_{\scalebox{0.75}{\cm{G}}}\mathcal{L}^{(s)},
	\end{split}
\end{equation}
with the initial values $(\cm{G}^{(0)},\bm{U}_1^{(0)},\dots,\bm{U}_{2d}^{(0)})$, where $I$ is the total number of iterations and $\eta>0
$ is the step size of each iteration. The final output is $\cm{\widehat{A}}_\textup{NC}=[\![\cm{G}^{(I)};\bm{U}_1^{(I)},\dots,\bm{U}_{2d}^{(I)}]\!]$, and we may apply the HOSVD to $\cm{\widehat{A}}_\textup{NC}$ to ensure the identifiability of the Tucker decomposition.

%Unlike the estimators in Section \ref{sec:HDM}, the objective function of \eqref{eq:ORLTR} is highly non-convex. Hence, the final estimate $\cm{\widehat{A}}_\textup{NC}$ depends on the computational algorithm and its initialization. However, due to the explicit Tucker decomposition structure, we only need to update decomposition components $\cm{G}\in\mathbb{R}^{r_1\times\cdots\times r_{2d}}$ and $\bm{U}_i\in\mathbb{R}^{p_i\times r_i}$ in the gradient descent algorithm, whose sizes are smaller than that of the large tensor $\cm{A}$, indicating that the gradient descent algorithm is computationally tractable and requires much less memory than nuclear-norm-penalized estimators. More importantly, the estimation error rate and sample size requirement of the NC estimator are much improved compared with nuclear-norm-penalized methods, as shown in next subsection.

\subsection{Computational and Statistical Convergence Analysis}
\label{sec:4.2}

In this subsection, we present the main properties of the NC estimation method. Theoretical analysis of this method is challenging due to the non-convex nature of the problem. To show that the proposed method is valid, we derive the linear convergences of gradient descent iterates to the ground truth up to a statistical error. First, we introduce some regulatory conditions, namely the restricted strong convexity, restricted strong smoothness, and deviation bound conditions.

\begin{definition}\label{def:RSC_RSS}
	The squared loss function $\overline{\mathcal{L}}$ is restricted strongly convex with parameter $\alpha$ and restricted strongly smooth with parameter $\beta$, such that for any low-rank tensors $\cm{A}_1,\cm{A}_2\in\mathbb{R}^{p_1\times\cdots\times p_{2d}}$ with Tucker ranks $(r_1,\dots,r_{2d})$,
	\begin{equation}
		\frac{\alpha}{2}\|\cm{A}_1-\cm{A}_2\|_\textup{F}^2\leq\overline{\mathcal{L}}(\cm{A}_1)-\overline{\mathcal{L}}(\cm{A}_2)-\langle\nabla\overline{\mathcal{L}}(\cm{A}_2),\cm{A}_1-\cm{A}_2\rangle\leq\frac{\beta}{2}\|\cm{A}_1-\cm{A}_2\|_\textup{F}^2.
	\end{equation}
\end{definition}

\begin{definition}\label{def:deviation}
	For the given Tucker ranks $(r_1,\dots,r_{2d})$, denote $\xi(r_1,\dots,r_{2d})$ as
	\begin{equation}
		\xi(r_1,\dots,r_{2d})=\sup_{\|\scalebox{0.75}{\cm{G}}\|_\textup{F}=1,\bm{U}_i^\top\bm{U}_i=\bm{I}_{r_i}}\left\langle\nabla\overline{\mathcal{L}}(\cm{A}^*),[\![\cm{G};\bm{U}_1,\dots,\bm{U}_{2d}]\!]\right\rangle.
	\end{equation}
\end{definition}

The restricted strong convexity and smoothness consitions are essential for convergence analysis of a large number of non-convex optimization problems \citep{jain2017non}. The deviation bound $\xi(r_1,\dots,r_{2d})$ characterizes the magnitude of the statistical noise projected onto the low-rank tensor spaces. Moreover, for the true value $\cm{A}^*$, denote by $\overline{\sigma}=\max_{1\leq i\leq 2d}\sigma_1(\cm{A}^*_{(i)})$, $\underline{\sigma}=\min_{1\leq i\leq 2d}\sigma_{r_i}(\cm{A}^*_{(i)})$, and $\rho=\overline{\sigma}/\underline{\sigma}$ the largest singular value, the smallest nonzero singular value, and condition number of $\cm{A}^*$ along all modes, respectively. Now, we state a deterministic upper bound on the estimation error and a linear rate of convergence for the proposed gradient descent algorithm.

\begin{theorem}\label{thm:comp_conv}
	Suppose that the squared loss function $\overline{\mathcal{L}}$ satisfies the restricted strong convexity, restricted strong smoothness and deviation bound conditions in Definitions \ref{def:RSC_RSS} and \ref{def:deviation}, and $\cm{A}^*$ is low-rank with known Tucker ranks $(r_1,\dots,r_{2d})$. For the gradient descent iterates with parameters $a\asymp\alpha\rho^{-2}\overline{\sigma}$, $b\asymp\overline{\sigma}^{1/(2d+1)}$, and size step $\eta=\eta_0\beta^{-1}\rho^{-2}\overline{\sigma}^{(-4d)/(2d+1)}$ from some small $\eta_0>0$, if the initial bound $\|\cm{A}^{(0)}-\cm{A}^*\|_\textup{F}\lesssim\underline{\sigma}$ is satisfied, for all $i=1,2\dots,$
	\begin{equation}\label{eq:comp_conv}
		\|\cm{A}^{(i)}-\cm{A}^*\|_\textup{F}^2\lesssim\rho^2(1-C\eta_0\alpha\beta^{-1}\rho^{-2})^i\|\cm{A}^{(0)}-\cm{A}^*\|_\textup{F}^2+\rho^2\alpha^{-2}\xi^2(r_1,\dots,r_{2d}),
	\end{equation}
	with $0<C\eta_0\alpha\beta^{-1}\rho^{-2}<1$.
\end{theorem}

Theorem \ref{thm:comp_conv} presents a set of conditions for the convergence of the gradient descent iteratives for $\cm{\widehat{A}}_\text{NC}$. The first term in the right hand side of \eqref{eq:comp_conv} corresponds to the optimization error, whereas the second term corresponds to the statistical error. This bound shows that the estimation error of the iterates decreases exponentially to a statistical limit. When the RSC parameter $\alpha$ and all nonzero singular values of $\cm{A}^*$ along all modes are bounded and bounded away from zero, the rates of paramters $a,b,\eta$ remain constant.

\begin{remark}
	For the initialization, if $\underline{\sigma}$ is a constant number, the initial bound can be satisfied for any consistent intial value $\cm{A}^{(0)}$. When $p_i$'s are large, $\underline{\sigma}$ may diverge to infinity as $p_i$'s increase, and hence the initial condition for $\cm{A}^{(0)}$ could be relaxed.
\end{remark}

Following the spectral dependency measure in Section \ref{sec:HDM}, we define the restricted strong smoothness parameter for the tensor AR process
$\beta_\text{RSS}=[3\lambda_{\max}(\bm{\Sigma}_{\bm{e}})]/[2\mu_{\min}(\mathcal{A})]$.
For the tensor AR process satisfying Assumptions \ref{asmp:stationary} and \ref{asmp:gaussian}, we can derive the following statistical convergence results of the gradient descent iteratives.
\begin{theorem}\label{thm:stat_conv}
	Under Assumptions \ref{asmp:stationary}-\ref{asmp:gaussian} and conditions in Theorem \ref{thm:comp_conv} with $\alpha=\alpha_\textup{RSC}$ and $\beta=\beta_\textup{RSS}$, if $T\gtrsim\max(\kappa^4,\kappa^2)M_2^{-2}\max_{1\leq i\leq d}p_i$, when
	\begin{equation}
		I\gtrsim\log(\alpha_\textup{RSC}^{-2}T^{-1}\max_{1\leq i\leq2d}(p_ir_i)\|\cm{A}^{(0)}-\cm{A}^*\|_\textup{F}^{-2})/\log(1-C\eta_0\alpha_\textup{RSC}\beta_\textup{RSS}^{-1}\rho^{-2})
	\end{equation}
	with probability at least $1-2\exp[-CM_2^2\min(\kappa^{-2},\kappa^{-4})T]-C\exp(-C\max_{1\leq i\leq d}p_i)$,
	\begin{equation}
		\|\cm{\widehat{A}}_\textup{NC}-\cm{A}^*\|_\textup{F}\lesssim\rho\alpha_\textup{RSC}^{-1}\kappa^2M_1\sqrt{\frac{\sum_{i=1}^{2d}p_ir_i+\prod_{i=1}^{2d}r_i}{T}}.
	\end{equation}
\end{theorem}

Theorem \ref{thm:stat_conv} presents the estimation error upper bound after a sufficient number of iterations. When $\rho$, $\kappa$, $\alpha^{-1}_\textup{RSC}$ and $M_1$ are bounded, the statistical convergence rate scales as $O_p(\sqrt{(\sum_{i=1}^{2d}p_ir_i+\prod_{i=1}^{2d}r_i)/T})$. Under the exact low-rank condition in $\cm{A}^*$, compared with the SSN estimator with a rate of $O_p(\sqrt{s_0p/T})$, the rate of the non-convex NC estimator is improved significantly. In other words, to achieve consistency, the sample size requirement is reduced from $T\gtrsim p=\prod_{i=1}^dp_i$ to $T\gtrsim \max_{i=1}^dp_i$. For high-dimensional matrix-valued ($d=2$) and tensor-valued ($d\geq3$) time series data, the relaxation of sample size requirement is essential, since it is usually difficult or even impossible to collect a large number of samples, when $p$ is large as in the import-export network data discussed in Section \ref{sec:intro}.

\subsection{Rank Selection and Initialization}

In practice, we need to determine the Tucker ranks in order to apply the proposed non-convex estimation method. When the sample size is sufficiently large, i.e., $T\gtrsim p$, one may apply the TSSN method described in Section \ref{subsec:trunc} to select the ranks. When the dimensions $p_i$'s are large, we recommend giving a pre-specified upper bound $\bar{r}_i> r_i$, and then calculate the estimate $\cm{\widetilde{A}}$ based on the rank upper bounds $\bar{r}_1,\bar{r}_2,\dots,\bar{r}_{2d}$. Denote the singular values of its mode-$i$ matricization by $\widetilde{\sigma}_{i1}\geq\widetilde{\sigma}_{i2}\geq\cdots\geq\widetilde{\sigma}_{i,r_i}$, and each rank $r_i$ can be selected by the ridge-type ratio estimator \citep{wang2019high}
\begin{equation}
	\widehat{r}_i=\argmin_{1\leq j\leq \bar{r}_i-1}\frac{\widetilde{\sigma}_{i,j+1}+s(p_{\max},T)}{\widetilde{\sigma}_{ij}+s(p_{\max},T)},
\end{equation}
where $s(p_{\max},T)$ is a positive sequence depending on $p_{\max}=\max_{1\leq i\leq d}p_i$ and $T$.

The proposed method is not sensitive to the choice of $\bar{r}_i$ as long as it is greater than $r_i$. Thus, by the multidimensional factor interpretation, we can choose $\bar{r}_i$ to be reasonably large. For example, for the import-export network data described in Section \ref{sec:example}, we may set $\bar{r}_1=\cdots=\bar{r}_6=5$. The ridge parameter $s(p_{\max},T)$ is essential for consistent rank selection, and we suggest using $s(p_{\max},T)=\sqrt{p_{\max}\log(T)/(10T)}$ based on the simulation experiments in Section \ref{sec:numerical}. 
Similar to the TSSN estimator, the Tucker ranks selected by the ridge-type ratio estimator may not satisfy condition \eqref{eq:rankcond} and we may adjust the selected ranks by the approach in Remark \ref{rmk:rankadjust}.
The rank selection consistency is established in the following theorem.

\begin{theorem}\label{thm:rankconsistency2}
	Suppose that all conditions in Theorem \ref{thm:stat_conv} hold, $T\gtrsim \max(\kappa^2,\kappa^4)M_2^{-2}p_{\max}$, $\alpha_{\textup{RSC}}^{-1}\kappa^2M_1\sqrt{p_{\max}\bar{r}_{\max}/T}\ll s(p_{\max},T)$, $s(p_{\max},T)\ll \underline{\sigma}^{-1}\min_{1\leq i\leq 2d,1\leq j\leq r_i-1}\sigma_{j+1}(\cm{A}^*_{(i)})/\sigma_j(\cm{A}^*_{(i)})$, and $r_i\leq\bar{r}_i$, where $\bar{r}_{\max}=\max_{1\leq i\leq 2d}\bar{r}_i$. Then $\mathbb{P}(\widehat{r}_i=r_i)\to1$, for $i=1,\dots,2d$, as $T\to\infty$.
\end{theorem}

The conditions in this theorem reduce to $s^{-1}(p_{\max},T)\sqrt{p_{\max}/T}\to0$ and $s(p,T)\to0$ as $T\to\infty$, when $\overline{\sigma}$, $\underline{\sigma}^{-1}$, $\alpha_{\text{RSC}}^{-1}$, $\kappa$ and $M_1$ are bounded. Thus, the sample size requirement is reduced to $T\gtrsim p_{\max}$, which significantly relaxes that in Theorem \ref{thm:rankconsistency} for the TSSN method.

Moreover, for the initialization of the proposed estimation methodology, we may first select the rank upper bounds $\bar{r}_i$ and randomly initialize the algorithm by adding a random perturbation to $\cm{\widetilde{A}}$ obtained under the rank upper bounds. The refined tensor ranks are selected by the ridge-type ratio estimator, and then the HOSVD is applied to the previous initial value to obtain $\cm{A}^{(0)}$. The satisfactory performance of this initialization procedure is observed in our simulation experiments.

\section{Numerical Studies}\label{sec:numerical}

In this section, we present numerical studies to support the methodological and theoretical results obtained in the previous sections. In Section \ref{sec:sim}, we present the finite-sample performance of various estimation methods proposed in Sections \ref{sec:HDM} and \ref{sec:rank_constrained}. In Section \ref{sec:real_data}, we model the import-export network data via the LRTAR and other vector-valued and tensor-valued time series models in the literature.

\subsection{Simulation Experiments}\label{sec:sim}

We present two simulation experiments to examine the finite-sample performance of the proposed high-dimensional estimation methods. Throughout this section, we generate the data from model \eqref{eq:LTRTAR_model} with $\textup{vec}(\cm{E}_t)\overset{i.i.d.}{\sim} N(\bm{0},\bm{I}_p)$. The entries of $\cm{G}$ are generated independently from $N(0,1)$ and rescaled such that $\|\cm{G}\|_\text{F}=5$. The matrices $\bm{U}_i's$ are generated by extracting the leading singular vectors from Gaussian random matrices while ensuring the stationarity condition in Assumption \ref{asmp:stationary}. In these two experiments, we consider four cases of data generating processes. In cases (a) and (b), we consider $d=2$ and Tucker ranks $(r_1,r_2,r_3,r_4)=(1,1,1,1)$, $(2,2,1,1)$ or $(2,2,2,2)$; while in cases (c) and (d), we consider $d=3$ and Tucker ranks $(r_1,r_2,r_3,r_4,r_5,r_6)=(1,1,1,1,1,1)$, $(2,2,2,1,1,1)$ or $(2,2,2,2,2,2)$. Both pairs of cases differ in the setting for $p_i$'s: (a) $p_1=p_2=10$ and (b) $p_1=p_2=20$; (c) $p_1=p_2=p_3=7$ and (d) $p_1=p_2=p_3=15$.

The first experiment aims to compare the performance of four nuclear-norm-penalized estimators discussed in Section \ref{sec:HDM}, namely the SN, MN, SSN and TSSN estimators, when the sample size is relatively large. For each setting, we repeat 500 times and conduct the estimation using SN, MN, SSN, and TSSN. The nuclear-norm-penality tuning parameter and truncation parameter are selected by the BIC described in Section \ref{subsec:trunc}. In Figure \ref{fig:2}, the average estimation errors are plotted against $T\in\{800,1000,1200,1400\}$ for cases (a) and (b), and $T\in\{1000,1200,1400,1600\}$ for cases (c) and (d). First, it can be seen that the SN estimator is much inferior to the other three estimators, which is due to its use of the unbalanced one-mode matricizations. Secondly, the SSN and TSSN estimators outperform the MN estimator in all cases, and their advantage is remarkably clear even when $r_1=\cdots=r_{2d}$. In addition, the rank selection performance of the TSSN method is summarized in Table \ref{tbl:rank_ex1}. In general, the TSSN estimator can consistently select the tensor ranks when $T$ is large, and performs the best among these four, probably because it yields a more parsimonious model which improves the estimation efficiency. The results in experiment 1 verify the efficiency improvement in the proposed SSN and TSSN estimators.

\begin{figure}[!htp]
    \includegraphics[width=\textwidth]{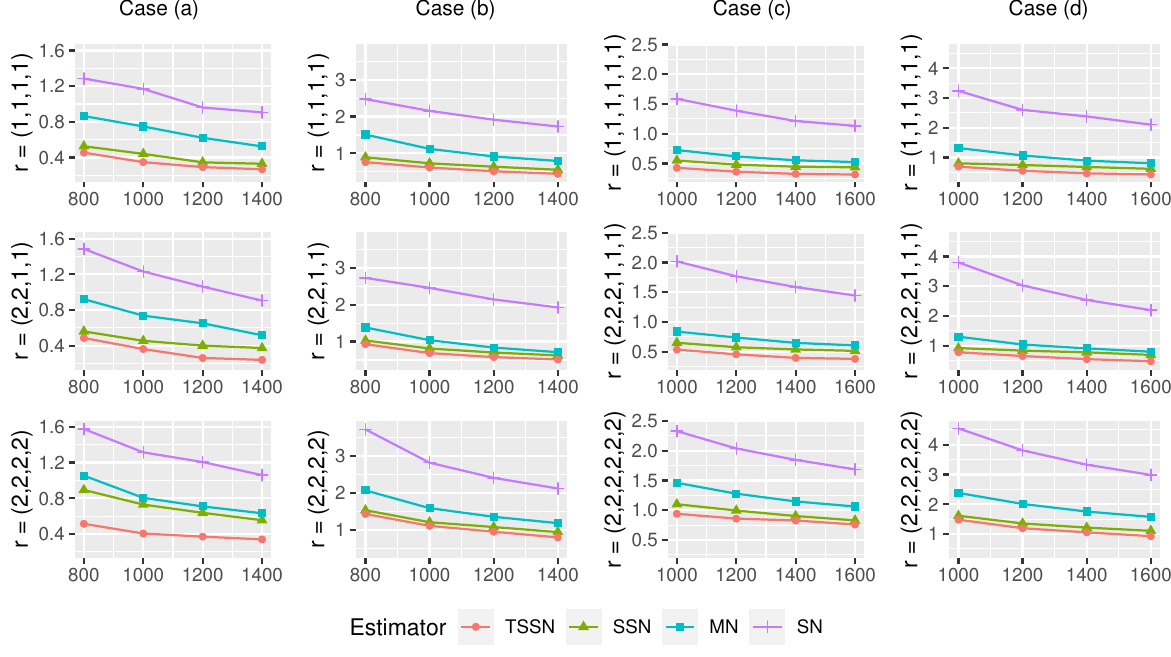}
    \caption{Average estimation error for TSSN, SSN, MN, and SN estimators for data generated with different $d$, $p_i$'s and Tucker ranks in experiment 1}
    \label{fig:2}
\end{figure}

\begin{table}[!htp]
    \begin{center}
        \caption{\small{Percentages of correct rank determination by TSSN in experiment 1}}
        \label{tbl:rank_ex1}
        \vspace{-0.4cm}
        \renewcommand{\arraystretch}{0.85}
        \small{\begin{tabular}{R{1.3cm} C{2cm} C{2cm} C{2cm} c C{2cm} C{2cm} C{2cm}}
                \hline\hline
                $d=2$& \multicolumn{3}{c}{Case (a)} && \multicolumn{3}{c}{Case (b)}\\
                \cline{2-4}\cline{6-8}
                $T\backslash\text{rank}$ & (1,1,1,1) & (2,2,1,1) & (2,2,2,2) && (1,1,1,1) & (2,2,1,1) & (2,2,2,2) \\
                \hline
                800 & 96.2 & 93.8 & 90.0 && 82.6 & 79.6 & 75.2 \\
                1000& 98.4 & 98.0 & 94.8 && 86.4 & 84.4 & 81.8 \\
                1200& 100  & 100  & 99.2 && 93.2 & 94.0 & 88.0 \\
                1400& 100  & 99.8 & 100  && 98.4 & 97.8 & 96.2 \\
                \hline
                $d=3$& \multicolumn{3}{c}{Case (c)} && \multicolumn{3}{c}{Case (d)}\\
                \cline{2-4}\cline{6-8}
                $T\backslash\text{rank}$ & (1,1,1,1,1,1) & (2,2,2,1,1,1) & (2,2,2,2,2,2) && (1,1,1,1,1,1) & (2,2,2,1,1,1) & (2,2,2,2,2,2) \\
                \hline
                1000 & 93.2 & 91.8 & 92.2 && 81.4 & 81.0 & 77.6\\
                1200 & 96.6 & 93.2 & 92.6 && 88.2 & 90.4 & 85.6\\
                1400 & 99.4 & 98.8 & 99.0 && 91.4 & 93.6 & 92.8\\
                1600 & 99.6 & 99.2 & 98.8 && 96.2 & 97.0 & 97.2\\
                \hline
        \end{tabular}}
    \end{center}
\end{table}

The second experiment aims to verify the performance of NC estimator when the sample size is relatively small. We consider $T\in(50,100,150,200)$ for cases (a) and (b), and $T\in(80,160,240,320)$ for cases (c) and (d). Since the NC estimator requires the pre-determined Tucker ranks, we consider two estimators, namely the NC estimator with the true Tucker ranks (denoted by NC-true) and NC estimator with Tucker ranks estimated by the ridge-type ratio estimator in Section \ref{sec:rank_constrained} (denoted by NC-est). When applying the gradient descent algorithm, we simply set $a=b=1$ and use the TSSN estimator to obtain the initial values of $\cm{G}$ and $\bm{U}_i$'s. The default gradient descent step size is $\eta=10^{-4}$, and it will be reduced to $10^{-5}$ if the default one fails to converge. In addition, the random initialization method is also adopted for the NC-est estimator. The average estimation errors of the non-convex methods are summarized in Figure \ref{fig:3} and the rank determination of the ridge-type ratio estimator is collected in Table \ref{tbl:rank_ex2}. As the ridge-type ratio estimator can consistently estimate the Tucker ranks, the performance of NC-true and NC-est estimators is quite similar. When the sample size is small, NC-true method performs slightly better.

\begin{figure}[!htp]
	\includegraphics[width=\textwidth]{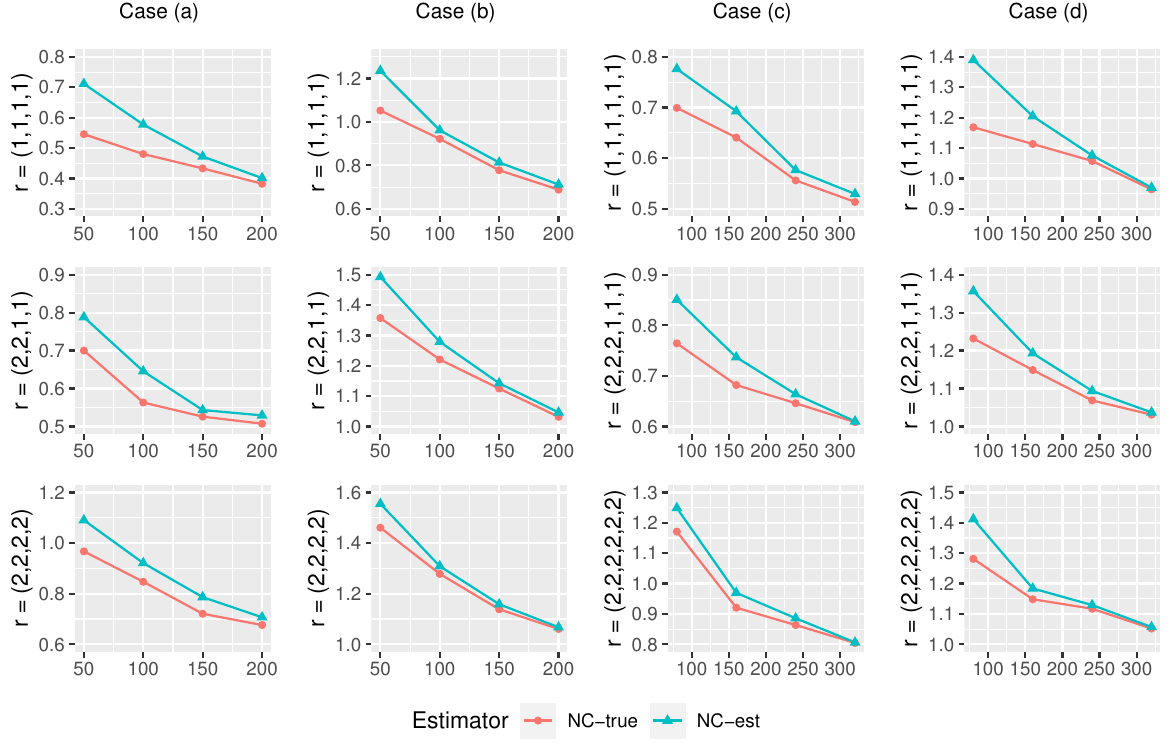}
	\caption{Average estimation error for NC-true and NC-est estimators for data generated with different $d$, $p_i$'s and Tucker ranks in experiment 2}
	\label{fig:3}
\end{figure}

\begin{table}[!htp]
	\begin{center}
		\caption{\small{Percentages of correct rank determination by ridge-type ratio estimator in experiment 2}}
		\label{tbl:rank_ex2}
		\vspace{-0.4cm}
		\renewcommand{\arraystretch}{0.85}
		\small{\begin{tabular}{R{1.3cm} C{2cm} C{2cm} C{2cm} c C{2cm} C{2cm} C{2cm}}
				\hline\hline
				$d=2$& \multicolumn{3}{c}{Case (a)} && \multicolumn{3}{c}{Case (b)}\\
				\cline{2-4}\cline{6-8}
				$T\backslash\text{rank}$ & (1,1,1,1) & (2,2,1,1) & (2,2,2,2) && (1,1,1,1) & (2,2,1,1) & (2,2,2,2) \\
				\hline
				50  & 77.2 & 69.2 & 70.4 && 69.6 & 57.4 & 55.4 \\
				100 & 82.2 & 77.2 & 78.6 && 74.2 & 62.2 & 63.8 \\
				150 & 88.4 & 85.0 & 87.2 && 82.8 & 70.6 & 72.4 \\
				200 & 94.0 & 92.4 & 93.2 && 90.8 & 87.0 & 78.8 \\
				\hline
				$d=3$& \multicolumn{3}{c}{Case (c)} && \multicolumn{3}{c}{Case (d)}\\
				\cline{2-4}\cline{6-8}
				$T\backslash\text{rank}$ & (1,1,1,1,1,1) & (2,2,2,1,1,1) & (2,2,2,2,2,2) && (1,1,1,1,1,1) & (2,2,2,1,1,1) & (2,2,2,2,2,2) \\
				\hline
				 80 & 71.2 & 69.4 & 68.4 && 68.2 & 70.4 & 73.2\\
				160 & 82.8 & 79.8 & 78.0 && 79.8 & 76.0 & 55.8\\
				240 & 88.2 & 84.0 & 84.4 && 88.2 & 87.8 & 87.8\\
				320 & 97.0 & 92.0 & 92.6 && 98.0 & 99.2 & 97.6\\
				\hline
		\end{tabular}}
	\end{center}
\end{table}

\subsection{Real Data Analysis}\label{sec:real_data}

We analyze the multi-category import-export network data in \citet{chen2019factor}, which consists of the monthly export data among 22 countries, including 19 European countries (Belgium, Bulgaria, Denmark, Finland, France, Germany, Greece, Hungary, Iceland, Ireland, Italy, Norway, Poland, Portugal, Spain, Sweden, Switzerland, Turkey, and the United Kingdom) and 3 North American countries (Canada, Mexico, and the United States). The products are classified into 15 categories, including industrial and algricultural products. Hence, the import-export network data in each month form a $22\times22\times15$ Export-Import-Product tensor, and the data is collected from January 2010 to December 2016. Following \citet{chen2019factor}, the missing diagonal values for the export from any country to itself are treated as zero. A three-month moving average of the series is applied to alleviate the possible effect of incidental transactions, so the total available sample size is $T=84$ which is much smaller than the overall dimension of the data $p=22\times22\times15=7,260$.

Let $\cm{Y}_t\in\mathbb{R}^{22\times 22\times 15}$ be the tensor-valued time series and denote $\bm{y}_t=\text{vec}(\cm{Y}_t)$. For comparison, we consider the following seven candidate models:
\begin{itemize}
	\item[1.] The proposed LRTAR model: $\cm{Y}_t=\langle\cm{A},\cm{Y}_{t-1}\rangle+\cm{E}_t$, with $\cm{A}=\cm{G}\times_{i=1}^6\bm{U}_i$. The model is estimated using the SSN, TSSN and NC methods, respectively.
	\item[2.] Sparse vector autoregression (SVAR): $\bm{y}_t=\bm{A}\bm{y}_t+\bm{e}_t$, where $\bm{A}\in\mathbb{R}^{7260\times7260}$ is a sparse matrix. We estimate the sparse VAR model via the Lasso estimator discussed in \citet{basu2015regularized}.
	\item[3.] Low-rank vector autoregression (LRVAR): $\bm{y}_t=\bm{A}\bm{y}_t+\bm{e}_t$, where $\bm{A}\in\mathbb{R}^{7260\times7260}$ is a low-rank matrix. The model is estimated by the MN estimator in Section \ref{sec:HDM}.
	\item[4.] Vector factor model (VFM): $\bm{y}_t=\bm{\Lambda}\bm{f}_t+\bm{e}_t$, where $\bm{f}_t$ is the low-dimensional vector-valued latent factor, and $\bm{\Lambda}$ is the loading matrix. The model is estimated by the method in \citet{lam2012factor}, and for prediction, the estimated factors $\widehat{\bm{f}}_t$ are then fitted by a VAR(1) model.
	\item[5.] Multilinear tensor autoregression (MTAR): $\cm{Y}_t=\cm{Y}_{t-1}\times_{i=1}^3\bm{B}_i+\cm{E}_t$, where $\bm{B}_1,\bm{B}_2\in\mathbb{R}^{22\times22}$ and $\bm{B}_3\in\mathbb{R}^{15\times 15}$ are coefficient matrices. The model is estimated by the iterative least squares method similar to \citet{chen2018autoregressive}.
	\item[6.] Tensor factor model (TFM): $\cm{Y}_t=\cm{F}_t\times_{i=1}^3\bm{U}_i+\cm{E}_t$, where $\cm{F}_t$ is the low-dimensional tensor-valued latent factor, and $\bm{U}_i$'s are the loading matrices. The TFM is esimated by the method in \citet{chen2019factor}, and for prediction, the estimated factors $\cm{\widehat{F}}_t$ are fitted by a VAR(1) model.
	\item[7.] Factor augmented vector autoregressive model (FAVAR): the vectorized time series is decomposed into two parts $\bm{y}_t=(\bm{y}_{1t}^\top,\bm{y}_{2t}^\top)^\top$, where $\bm{y}_{1t}\in\mathbb{R}^{4}$ contains the trading data between the United States and Germany under the categories of the largest volume ``Machinery and Electrical'' and ``Transportation'', and $\bm{y}_{2t}\in\mathbb{R}^{7256}$ contains the rest of data. The FAVAR model \citep{bernanke2005measuing,stock2016dynamic} with $(\bm{f}_t^\top,\bm{y}_{1t}^\top)^\top=\bm{A}(\bm{f}_{t-1}^\top,\bm{y}_{1,t-1}^\top)^\top+\bm{w}_t$ and $\bm{y}_{2t}=\bm{\Lambda}\bm{f}_t+\bm{\Gamma}\bm{y}_{1t}+\bm{g}_t$ is used to model the data.
\end{itemize}

We first focus on the results of the proposed LRTAR model. The overall dimension $p=7260$ is much larger than the sample size $T=82$, which violates the sample size requirements of nuclear-norm-regularized estimators. Hence, we try all combinations of Tucker ranks, with each rank ranging from 1 to 3, and the best ranks selected by the BIC are $(1,1,2,2,2,2)$. By the multi-dimensional dynamic factor interpretation in \eqref{eq:tenfactor}, these six ranks indicate the numbers of factors for ``export predictor'', ``import predictor'', ``product predictor'', ``export response'', ``import response'', and ``product response.'' In other words, the total number of factors for predictors ($1\times1\times2=2$) is smaller than that for responses ($2\times2\times2=8$), showing that the low-dimensional information summarized from predictors is more compact than that of responses. It is also interesting to see that the numbers of factors for predictors and responses selected by LRTAR are smaller than those selected by the tensor factor model, $(4,4,6)$, in \citet{chen2019factor}.

As the factor matrices $\bm{U}_i$'s are not uniquely defined, we   present the estimates of the identifiable projection matrices $\bm{U}_i\bm{U}_i^\top$ by LRTAR-NC with $\bm{U}_i$ being orthonormal in Figure \ref{fig:trade}. The estimated projection matrices of these six factor loadings offer a clear and interesting interpretation of inter-regional trading flow, which helps us answer the four questions in Section \ref{sec:intro}. 
For the first two questions about the driving forces of the exporting and importing activities, the estimated factor matrices $\widehat{\bm{U}}_1$, $\widehat{\bm{U}}_2$, $\widehat{\bm{U}}_4$ and $\widehat{\bm{U}}_5$ present some numerical hints.
Specifically, for the responses of export and import (first two plots in the left panel of Figure \ref{fig:trade}), the exporting countries are clearly classified into two geographical factors, one for European countries and one for North American countries, while the import countries are categorized into another two factors, United States factor and Germany factor.
For the predictors, the factor loadings for exporting and importing countries (see the first two plots in the right panel of Figure \ref{fig:trade}) are both dominated by the United States. In other words, to forecast the trading volume in Europe and North America, the historical trading data of the United States, in both import and export, are most predictive. However, the future import and export value have a clear geographical grouping pattern. 

In addition, for the third question, the factor loadings for product category, $\widehat{\bm{U}}_3$ and $\widehat{\bm{U}}_6$, also have a clear grouping pattern. For both responses and predictors, the product categories can be classified into two factors, ``heavy industry factor'' (mineral, chemical, machinery, electrical and transportation products) and ``agricultural and light industry factor'' (animal, vegetable, leather, wood, textiles products). Hence, we may interpret the estimated factor matrices in LRTAR as variable grouping patterns in export, import, and product categories for responses and predictors, respectively.
Finally, by comparing the predictor and response factor loadings, we observe that the geographical grouping patterns of both exporting and importing countries are significantly different between past and present states (i.e., predictor vs. response), whereas the grouping patterns of product categories remain almost the same.

Next, we compare the forecasting performance of seven candidate models through both average in-sample and out-of-sample forecasting errors. The average in-sample forecasting error is calculated based on the fitted models for the entire data, while the average out-of-sample forecasting error is calculated based on the rolling forecast procedure as follows. From January 2015 ($t=59$) to December 2016 ($t=84)$, we fit the models using all the available data until time $t$ and obtain the one-step-ahead forecast $\cm{\widehat{Y}}_t$. Then, we obtain the average of the rolling forecasting errors, excluding the missing diagonal entries. The number of parameters in each candidate model (LRTAR-SSN is excluded because it produces shrinkage of singular values instead of exactly low-rank structure) and the average in-sample and out-of-sample forecasting errors are summarized in Table \ref{tbl:forecast_error}.

\begin{table}
    \begin{center}
        \small{\begin{tabular}{cc|rrrrrrrrrrr}
                \hline\hline
                &\multirow{2}{*}{Model} && \multicolumn{3}{c}{LRTAR} && \multirow{2}{*}{SVAR} & \multirow{2}{*}{LRVAR} &\multirow{2}{*}{VFM} & \multirow{2}{*}{MTAR} & \multirow{2}{*}{TFM} & \multirow{2}{*}{FAVAR}\\
                \cline{4-6}
                &&&SSN & TSSN & NC &&&&&&&\\
                \hline
                \multicolumn{2}{r|}{No. of par.} && NA & 2643 & 190 && 9543 & 43560 & 21789 & 2386 & 2525 & 36305\\
                \hline
                \multirow{2}{*}{\rotatebox[origin=c]{90}{IS}}& $\ell_2$ norm && 1362 & 1409 & 1563 && \bf{713} & 896 & 906 & 1263 & 1076 & 943 \\
                & $\ell_\infty$ norm && 74 & 79 & 89 && 62 & 88 & \textbf{61} & 83 & 79 & 67\\
                \hline
                \multirow{2}{*}{\rotatebox[origin=c]{90}{OOS}}& $\ell_2$ norm && 2018 & 1533 & \textbf{1083} && 2362 & 2218 & 1545 & 1432 & 1211 & 1224\\
                & $\ell_\infty$ norm && 127 & 109 & \textbf{99} && 176 & 218 & 134 & 123 & 114 & 119\\
                \hline
        \end{tabular}}
        \caption{Number of parameters (No. of par.), average in-sample (IS) forecasting error and out-of-sample (OOS) rolling forecasting error for the import-export network data by various models and methods. The best cases are marked in bold.}
        \label{tbl:forecast_error}
    \end{center}
\end{table}

As shown in Table \ref{tbl:forecast_error}, all vector time series models have smaller in-sample forecasting errors and larger out-of-sample forecasting errors than their tensor counterparts, as they fail to utilize the multi-dimensional structure of the tensor data. For out-of-sample forecasting, LRTAR-NC significantly outperforms the other models in terms of average and maximum errors, as this model is much more parsimonious and can prevent overfitting effectively.

\begin{figure}
	\includegraphics[width=\textwidth]{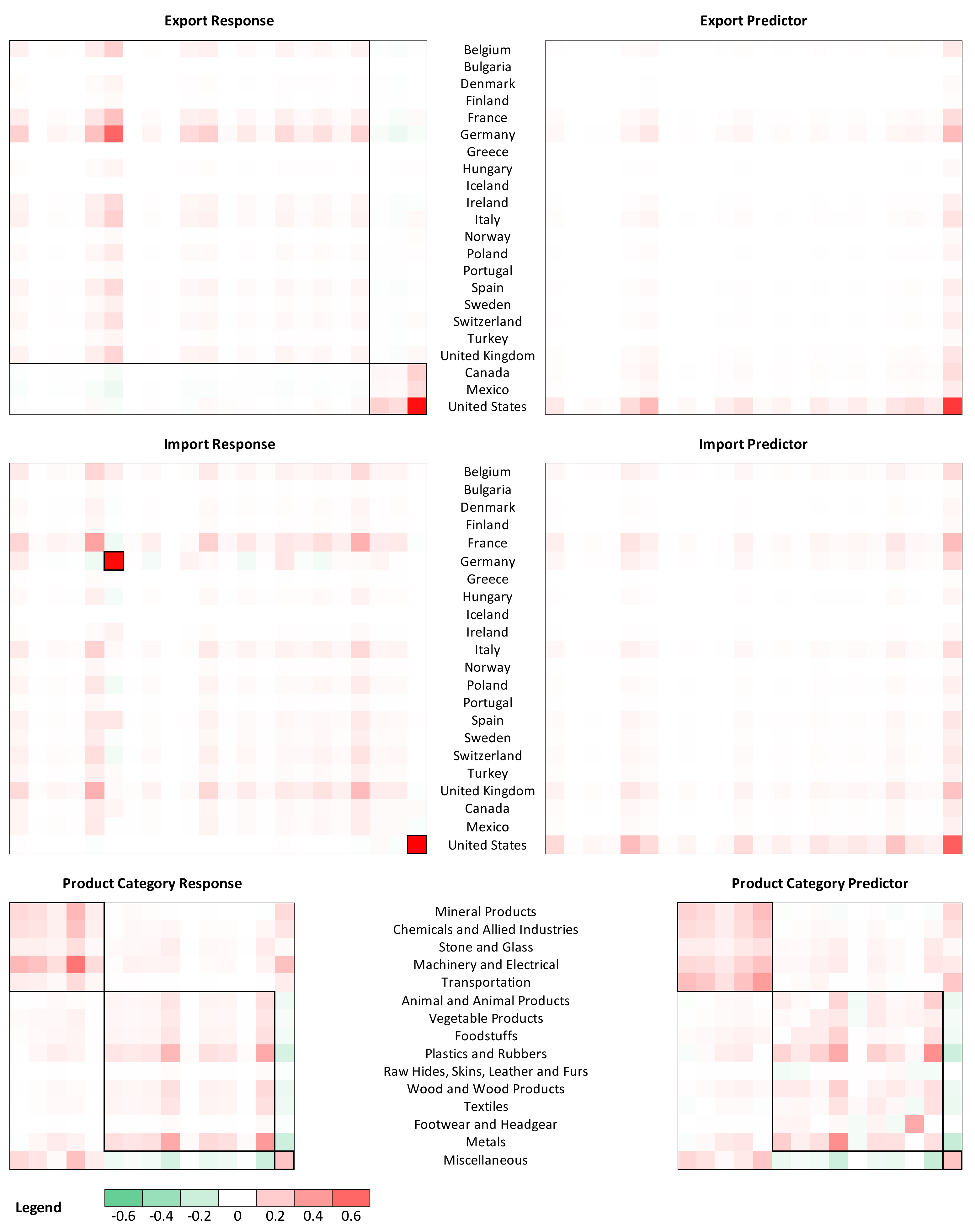}
	\caption{NC estimates of response and predictor factor projection matrices $\bm{U}_i\bm{U}_i^\top$.}
	\label{fig:trade}
\end{figure}

\section{Conclusion and Discussion}\label{sec:conclusion}

Efficient modeling and forecasting of high-dimensional tensor time series data is an important and emerging research topic. This paper makes the first thorough attempt to address this problem from the perspective of autoregressive modeling. By assuming the exact or approximately low-Tucker-rank structure of the transition tensor, the model exploits the low-dimensional tensor dynamic structure of the high-dimensional time series data, and summarizes the complex temporal dependencies into interpretable dynamic factors.

Under the high-dimensional setting, we investigate two estimation approaches, nuclear-norm-regularized methods and non-convex methods. For the former, based on the special structure of the transition tensor, a novel convex regularizer, the SSN, is proposed, gaining efficiencies from both the square matricization and simultaneous penalization across modes. For the latter, an integrated computational and statistical analysis is provided for the gradient descent algorithm. The nuclear-norm-regularized estimators can handle the general case with approximate low-rankness, and the non-convex estimator gains efficiency improvement under the exactly low-rank setting.

We discuss several directions for future research. First, in addition to the low-rank models, sparse plus low-rank models \citep{basu2019low,miao2023high} have been extensively studied in the literature of high-dimensional vector autoregression. It is also of interest to extend the proposed model in this direction, i.e., the parameter tensor $\cm{A}$ can be decomposed into two components, the low-rank component $\cm{L}$ and sparse component $\cm{S}$. Specically, $\cm{L}$ is low-Tucker-rank as we discuss in this paper, and $\cm{S}$ can capture the additional sparse autoregressive relationship between responses and predictors.

Second, while this paper focuses on the pure autoregressive model, the fundamental idea of leveraging the tensor-valued data and imposing the low-Tucker-rank assumption for dimension reduction can be extended to more complex settings. Similar to panel data models, exogenous variables can be further added into the regression, resulting in LRTAR-X models. For example, for the multi-category import-export data in Section \ref{sec:real_data}, it is possible to consider
$\cm{Y}_t=\langle\cm{A},\cm{Y}_{t-1}\rangle+\boldsymbol{\beta}^\prime \bm{x}_t+\langle\bm{B},\cm{X}_{t}\rangle+\cm{E}_t$, where the vector $\bm{x}_t$ may contain global variables such as the return of the oil price, and the matrix or tensor $\cm{X}_{t}$ may contain other country-level macroeconomic indicators such as the GDP growth rate.
When the dimensions of $\bm{x}_t$ and $\cm{X}_t$ are high, a low-dimensional structure, such as sparsity, group sparsity or low-rankness, can be imposed on $\boldsymbol{\beta}$ and $\cm{B}$ to improve the estimation efficiency.

Third, in the proposed  model, all variables in $\cm{Y}_t$ are treated with equal importance because the primary objective is to capture the complex dependence structures of a global system using granular data. However, if there are other priority variables to forecast, represented by the vector $\bm{x}_t\in\mathbb{R}^{p_x}$, then we may extend the proposed method to the joint model,
\begin{align*}
\cm{Y}_t & =\langle\cm{A}_y,\cm{Y}_{t-1}\rangle+\boldsymbol{\beta}_y^\prime \bm{x}_t+\cm{E}_t\\
\bm{x}_t &= \langle\cm{A}_x,\cm{Y}_{t-1}\rangle+\boldsymbol{\beta}_x^\prime \bm{x}_{t-1}+\cm{E}_t,
\end{align*}
where $\cm{A}_x\in\mathbb{R}^{p_x\times p_1\times\cdots\times p_d}$ can be assumed to have low Tucker ranks. 

Fourth, the proposed methods can be generalized to the LRTAR model of finite lag order $L$, defined as
$\cm{Y}_t=\langle\cm{A}_1,\cm{Y}_{t-1}\rangle+\cdots+\langle\cm{A}_L,\cm{Y}_{t-L}\rangle+\cm{E}_t$,
where $\cm{A}_1, \dots, \cm{A}_L$ are $2d$-th-order Tucker low-rank coefficient tensors. Then, one may consider the SSN regularized estimation by minimizing
%\begin{equation}
$ T^{-1}\sum_{t=1}^T\|\cm{Y}_t-\sum_{j=1}^L\langle\cm{A}_j,\cm{Y}_{t-j}\rangle\|_{\text{F}}^2+\sum_{j=1}^L\lambda_j\|\cm{A}_j\|_{\text{SSN}}$.
In addition, 
the NC estimator can be defined as the  minimizer of $(2T)^{-1}\sum_{t=1}^T\|\cm{Y}_t-\langle\cm{G}\times_{i=1}^{2d}\bm{U}_i,\cm{Y}_{t-1}\rangle\|_\text{F}^2+(a/2)\sum_{i=1}^{2d}\|\bm{U}_i^\top\bm{U}_i-b^2\bm{I}_{r_i}\|_\text{F}^2$, which can be implemented by the gradient descent algorithm.

Finally,  heavy-tailed distributions and outliers are commonly observed in empirical economic and financial datasets. Robust estimation methods against the heavy-tailed distribution for high-dimensional VAR models have been investigated recently \citep{liu2021robust,wang2021robust}, and it is of practical importance to investigate the robust methods for the proposed model.

\section*{Acknowledgement}

We are  grateful to the editor, Serena Ng, the associate editor, and three anonymous referees for their valuable comments that led to the substantial improvement of this paper. We would also like to thank Dan Yang for sharing the multi-category import-export network data for the empirical analysis. Wang was supported by the National Natural Science Foundation of China Grant 12301352 and Shanghai Sailing Program for Youth Science and Technology Excellence (23YF1420300). Zheng was supported by the National Science Foundation Grant DMS-2311178. Li was supported by the Hong Kong Research Grant Council Grants 17306519 and 17313722.

\bibliography{mybib}
\bibliographystyle{apalike}

\clearpage
\newpage
\begin{center}
	{\Large \bf Supplementary Material for \\ ``High-Dimensional Low-Rank Tensor Autoregressive Time Series Modeling"}
\end{center}

\begin{abstract}		
	This supplementary material provides all technical proofs and details about the algorithms for the proposed LTR and (T)SSN estimators. To be specific, Appendix \ref{append:high-dim} presents the proofs of theoretical results for the nuclear-norm-regularized estimators in Section \ref{sec:HDM} of the main paper, while Appendix \ref{append:OR} gives the proofs of the non-convex approach in Section \ref{sec:rank_constrained}. Appendix \ref{append:ADMM} presents the ADMM algorithm for the (T)SSN estimator. Finally, Appendix \ref{append:discuss}  discusses two special cases of the proposed LRTAR model and their connections with some existing models in the literature.
\end{abstract}

\renewcommand{\thesection}{S\arabic{section}}
\renewcommand{\theequation}{S\arabic{equation}}
\renewcommand{\thelemma}{S\arabic{lemma}}
\setcounter{section}{0}
\setcounter{equation}{0}
\setcounter{lemma}{0}
\setcounter{figure}{0}
\setcounter{table}{0}

\section{Proofs for Convex Regularized Estimation}\label{append:high-dim}

In this appendix, we provide the proofs of Theorems \ref{thm:SN}--\ref{thm:rankconsistency} in Section \ref{sec:HDM}. We start with a preliminary analysis in Appendix \ref{sec:prelimproof} which lays out the common technical framework for proving the estimation and prediction error bounds of the SN, MN and SSN regularized estimators, and four lemmas, Lemmas \ref{lemma:ApproxCone}--\ref{lemma:RSC}, are introduced herein.
Then in Appendix \ref{subsec:proof_theorem} we give the proofs of Theorems \ref{thm:SN}--\ref{thm:rankconsistency}. The proofs of Lemmas \ref{lemma:ApproxCone}--\ref{lemma:RSC} are provided in Appendix \ref{subsec:lemma}, and three  auxiliary lemmas are collected  in Appendix \ref{subsec:auxlemma}

\subsection{Preliminary Analysis \label{sec:prelimproof}}
The technical framework for proving the error bounds in Theorem \ref{thm:SN}--\ref{thm:SSN} consists of two main steps, a deterministic analysis  and a stochastic analysis, given in Sections \ref{subsec:deterministic} and \ref{subsec:stochastic}, respectively. The goal of the first one is to derive the error bounds given the deterministic realization of the time series, assuming that the parameters satisfy certain regularity conditions. The goal of the second one is to verify that  under stochasticity these regularity conditions are satisfied with high probability.

\subsubsection{Deterministic Analysis \label{subsec:deterministic}}

Throughout the appendix, we adopt the following notations. We use $C$ to denote a generic positive constant, which is independent of the dimensions and the sample size.  For any matrix $\bm{M}$ and a compatible subspace $\mathcal{S}$, we denote by $\bm{M}_{\mathcal{S}}$ the projection of $\bm{M}$ onto $\mathcal{S}$. In addition, let $\text{col}(\bm{M})$ be the column space of $\bm{M}$, and let $\mathcal{S}^\perp$ be the complement of the subspace $\mathcal{S}$. For a generic tensor $\cm{W}\in\mathbb{R}^{p_1\times\cdots\times p_{2d}}$, the dual norms of its SSN norm and SN norm, denoted by $\|\cm{W}\|_{\text{SSN}^*}$ and $\|\cm{W}\|_{\text{SN}^*}$, respectively, are defined as
\begin{equation}
	\begin{split}
		\|\cm{W}\|_{\text{SSN}^*}=\sup_{\scalebox{0.7}{\cm{T}}\in\mathbb{R}^{p_1\times\cdots\times p_{2d}},\|\scalebox{0.7}{\cm{T}}\|_{\text{SSN}}\leq1}\langle\cm{W},\cm{T}\rangle,~~\text{and}~~\|\cm{W}\|_{\text{SN}^*}=\sup_{\scalebox{0.7}{\cm{T}}\in\mathbb{R}^{p_1\times\cdots\times p_{2d}},\|\scalebox{0.7}{\cm{T}}\|_{\text{SN}}\leq1}\langle\cm{W},\cm{T}\rangle.
	\end{split}
\end{equation}
Moreover, for any two tensors $\cm{X}\in\mathbb{R}^{p_1\times\cdots\times p_m}$ and $\cm{Y}\in\mathbb{R}^{p_{m+1}\times\cdots\times p_{m_n}}$, their tensor outer product is defined as $(\cm{X}\circ\cm{Y})\in\mathbb{R}^{p_1\times\cdots\times p_m\times p_{m+1}\times\cdots\times p_{m+n}}$ where
\begin{equation}
	(\cm{X}\circ\cm{Y})_{i_1\dots i_{m}i_{m+1}\dots i_{m+n}}=\cm{X}_{i_1\dots i_m}\cm{Y}_{i_{m+1}\dots i_{m+n}},
\end{equation}
for any $1\leq i_1\leq p_1$, $\dots$, $1\leq i_{m+n}\leq p_{m+n}$.

For the theory of regularized $M$-estimators, restricted error sets  and restricted strong convexity are essential definitions. To define the former, we need to first introduce the following restricted model subspaces.

For $i=1,\dots,2d$,  denote by $\widetilde{\mathcal{U}}_i$ and $\widetilde{\mathcal{V}}_i$ the spaces spanned by the first $r_i$ left and right singular vectors in the SVD of $\cm{A}_{(i)}$, respectively.  
Define the  collections of subspaces
\begin{equation*}
	\mathcal{N}=(\mathcal{N}_1,\dots,\mathcal{N}_{2d}) \quad \text{and} \quad  \overline{\mathcal{N}}^\perp=(\overline{\mathcal{N}}_1^\perp,\dots,\overline{\mathcal{N}}_{2d}^\perp),
\end{equation*}
where
\begin{equation}\begin{split}
		\mathcal{N}_i&=\{\bm{M}\in\mathbb{R}^{p_i\times p_{-i}p}|\text{col}(\bm{M})\subset\widetilde{\mathcal{U}}_i,\text{col}(\bm{M}^\top)\subset\widetilde{\mathcal{V}}_i\},\\
		\overline{\mathcal{N}}_i^\perp&=\{\bm{M}\in\mathbb{R}^{p_i\times p_{-i}p}|\text{col}(\bm{M})\perp\widetilde{\mathcal{U}}_i,\text{col}(\bm{M}^\top)\perp\widetilde{\mathcal{V}}_i\}, \label{eq:subspace_N}
\end{split}\end{equation}
for $i=1,\dots,2d$.
Note that $\mathcal{N}_i\subset\overline{\mathcal{N}}_i$.

Furthermore, for $k=1,\dots, 2^{d-1}$, denote by $\mathcal{U}_{k}$ and $\mathcal{V}_k$ the spaces spanned by the first $s_k^*$ left and right singular vectors in the SVD of the square matricization $\cm{A}^*_{[I_k]}$, respectively, where $s_k^*=\textup{rank}(\cm{A}_{[I_k]}^*)$. Similarly, define the collections of subspaces
\begin{equation*}
	\mathcal{M}:=(\mathcal{M}_1,\dots,\mathcal{M}_{2^{d-1}})
	\quad \text{and} \quad 
	\overline{\mathcal{M}}^\perp=(\overline{\mathcal{M}}_1^\perp,\dots,\overline{\mathcal{M}}_{2^{d-1}}^\perp),
\end{equation*}
where
\begin{equation}\begin{split}
		&\mathcal{M}_k=\{\bm{M}\in\mathbb{R}^{p\times p}|\text{col}(\bm{M})\subset \mathcal{U}_k,~\text{col}(\bm{M}^\top)\subset \mathcal{V}_k\},\\
		&\overline{\mathcal{M}}_k^\perp=\{\bm{M}\in\mathbb{R}^{p\times p}|\text{col}(\bm{M})\perp\mathcal{U}_k,~\text{col}(\bm{M}^\top)\perp \mathcal{V}_k\}, \label{eq:subspace_M}
\end{split}\end{equation}
for $k=1,\dots,2^{d-1}$.  In particular, as described in Section \ref{subsec:sqrmode},  $I_1=S_1=\{1,\dots,d\}$.  Thus, $\mathcal{M}_1$ and $\overline{\mathcal{M}}_1^\perp$ are the subspaces associated with the square matricization $\cm{A}^*_{[S_1]}$.

Then, for simplicity, for any $\cm{W}\in\mathbb{R}^{p_1\times\cdots\times p_{2d}}$, we denote
\begin{equation}\begin{split}
		&\cm{W}_{\mathcal{N}}^{(i)}=(\cm{W}_{(i)})_{\mathcal{N}_i},\quad \cm{W}_{\mathcal{N}^\perp}^{(i)}=(\cm{W}_{(i)})_{\mathcal{N}_i^\perp}, \quad
		\cm{W}_{\overline{\mathcal{N}}}^{(i)}=(\cm{W}_{(i)})_{\overline{\mathcal{N}}_i}, \quad
		\cm{W}_{\overline{\mathcal{N}}^\perp}^{(i)}=(\cm{W}_{(i)})_{\overline{\mathcal{N}}_i^\perp}\\
		&\cm{W}_{\mathcal{M}}^{(k)}=(\cm{W}_{[I_k]})_{\mathcal{M}_k}, \quad \cm{W}_{\mathcal{M}^\perp}^{(k)}=(\cm{W}_{[I_k]})_{\mathcal{M}_k^\perp}, \quad
		\cm{W}_{\overline{\mathcal{M}}}^{(k)}=(\cm{W}_{[I_k]})_{\overline{\mathcal{M}}_k}, \quad
		\cm{W}_{\overline{\mathcal{M}}^\perp}^{(k)}=(\cm{W}_{[I_k]})_{\overline{\mathcal{M}}_k^\perp},
\end{split}\end{equation}
where $i=1,\dots,2d$ and $k=1,\dots,2^{d-1}$. Based on the subspaces defined in \eqref{eq:subspace_N} and \eqref{eq:subspace_M}, we can define the  restricted error sets corresponding to the three regularized estimators as follows.

%$s_k^*\leq s_k=\min(\prod_{i=1,i\in I_k}^{2d}r_i,\prod_{i=1,i\notin I_k}^{2d}r_i)$. 
%
%Without loss of generality, we let $I_1=S_1={1,\dots,d}$. The subspaces corresponding to the MN estimator are $\mathcal{M}_1$ and $\overline{\mathcal{M}}_1^\perp$.

\begin{definition}
	The restricted error set corresponding to $\overline{\mathcal{M}}$ is defined as
	\begin{equation}
		\mathbb{C}_{\textup{SSN}}(\overline{\mathcal{M}}):=\left\{\bm{\Delta}\in\mathbb{R}^{p_1\times\cdots\times p_{2d}}:\sum_{k=1}^{2^{d-1}}\|\bm{\Delta}_{\overline{\mathcal{M}}^\perp}^{(k)}\|_\textup{nuc}\leq
		3\sum_{k=1}^{2^{d-1}}\|\bm{\Delta}^{(k)}_{\overline{\mathcal{M}}}\|_\textup{nuc}+4\sum_{k=1}^{2^{d-1}}\|\cm{A}^{*(k)}_{\mathcal{M}^\perp}\|_\textup{nuc}\right\}.
	\end{equation}
	The restricted error set corresponding to $\overline{\mathcal{N}}$ is defined as
	\begin{equation}
		\mathbb{C}_{\textup{SN}}(\overline{\mathcal{N}}):=\left\{\bm{\Delta}\in\mathbb{R}^{p_1\times\cdots\times p_{2d}}:\sum_{i=1}^{2d}\|\bm{\Delta}_{\overline{\mathcal{N}}^\perp}^{(i)}\|_\textup{nuc}\leq
		3\sum_{i=1}^{2d}\|\bm{\Delta}^{(i)}_{\overline{\mathcal{N}}}\|_\textup{nuc}+4\sum_{i=1}^{2d}\|\cm{A}^{*(i)}_{\mathcal{N}^\perp}\|_\textup{nuc}\right\}.
	\end{equation}
	The restricted error set corresponding to $\overline{\mathcal{M}}_1$ is defined as
	\begin{equation}
		\mathbb{C}_{\textup{MN}}(\overline{\mathcal{M}}_1):=\left\{\bm{\Delta}\in\mathbb{R}^{p_1\times\cdots\times p_{2d}}:\|\bm{\Delta}_{\overline{\mathcal{M}}^\perp}^{(1)}\|_\textup{nuc}\leq
		3\|\bm{\Delta}^{(1)}_{\overline{\mathcal{M}}}\|_\textup{nuc}+4\|\cm{A}^{*(1)}_{\mathcal{M}^\perp}\|_\textup{nuc}\right\}.
	\end{equation}
\end{definition}

The first lemma shows that if the tuning parameter is well chosen for each regularized estimator, the estimation error belongs to the corresponding restricted error set.

\begin{lemma} \label{lemma:ApproxCone}
	For the SSN estimator, if the regularization parameter $\lambda_{\textup{SSN}}\geq4\|T^{-1}\sum_{t=1}^T\cm{Y}_{t-1}\circ\cm{E}_t\|_{\textup{SSN}^*}$, the error $\bm{\Delta}_{\textup{SSN}}=\cm{\widehat{A}}_{\textup{SSN}}-\cm{A}$ belongs to the set $\mathbb{C}_{\textup{SSN}}(\overline{\mathcal{M}})$. 
	
	For the SN estimator, if the regularization parameter $\lambda_{\textup{SN}}\geq4\|T^{-1}\sum_{t=1}^T\cm{Y}_{t-1}\circ\cm{E}_t\|_{\textup{SN}^*}$, the error $\bm{\Delta}_{\textup{SN}}=\cm{\widehat{A}}_{\textup{SN}}-\cm{A}$ belongs to the set $\mathbb{C}_{\textup{SN}}(\overline{\mathcal{N}})$.
	
	For the MN estimator, if the regularization parameter $\lambda_{\textup{MN}}\geq4\|T^{-1}\sum_{t=1}^T\textup{vec}(\cm{Y}_{t-1})\textup{vec}(\cm{E}_t)^\top\|_\textup{nuc}$, the error $\bm{\Delta}_{\textup{MN}}=\cm{\widehat{A}}_{\textup{MN}}-\cm{A}$ belongs to the set $\mathbb{C}_{\textup{MN}}(\overline{\mathcal{M}}_{(1)})$.
\end{lemma}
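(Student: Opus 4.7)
My plan is to follow the standard template for cone conditions of regularized $M$-estimators: combine (i) the basic inequality coming from the optimality of $\cm{\widehat{A}}$ in its penalized least squares problem, (ii) a duality/Hölder bound on the noise contribution using the prescribed lower bound on $\lambda$, and (iii) decomposability of the nuclear norm relative to the pair $(\mathcal{M}_k,\overline{\mathcal{M}}_k^\perp)$ (resp.\ $(\mathcal{N}_i,\overline{\mathcal{N}}_i^\perp)$, or the single pair $(\mathcal{M}_1,\overline{\mathcal{M}}_1^\perp)$). I will write out the argument in detail for the SSN case; the SN and MN cases are then completely parallel and differ only in which collection of matricizations the decomposability is applied to.

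First, set $\bm{\Delta} = \cm{\widehat{A}}_{\textup{SSN}} - \cm{A}$. Substituting $\cm{Y}_t = \langle\cm{A},\cm{Y}_{t-1}\rangle + \cm{E}_t$ and expanding the quadratic loss gives
\begin{equation*}
\mathcal{L}_T(\cm{A}+\bm{\Delta}) - \mathcal{L}_T(\cm{A}) = -\tfrac{2}{T}\sum_{t=1}^T \langle \cm{Y}_{t-1}\circ\cm{E}_t,\bm{\Delta}\rangle + \tfrac{1}{T}\sum_{t=1}^T \|\langle\bm{\Delta},\cm{Y}_{t-1}\rangle\|_{\textup{F}}^2,
\end{equation*}
where I used the identity $\langle\cm{E}_t,\langle\bm{\Delta},\cm{Y}_{t-1}\rangle\rangle = \langle\cm{Y}_{t-1}\circ\cm{E}_t,\bm{\Delta}\rangle$ (which follows from the definitions of the tensor outer product and generalized inner product). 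The optimality of $\cm{\widehat{A}}_{\textup{SSN}}$ and nonnegativity of the quadratic term therefore yield
\begin{equation*}
-\tfrac{2}{T}\sum_{t=1}^T \langle \cm{Y}_{t-1}\circ\cm{E}_t,\bm{\Delta}\rangle \;\leq\; \lambda_{\textup{SSN}}\bigl(\|\cm{A}\|_{\textup{SSN}} - \|\cm{A}+\bm{\Delta}\|_{\textup{SSN}}\bigr).
\end{equation*}
The dual-norm inequality and the hypothesis $\lambda_{\textup{SSN}}\geq 4\|T^{-1}\sum_t \cm{Y}_{t-1}\circ\cm{E}_t\|_{\textup{SSN}^*}$ upper-bound the left-hand side in absolute value by $\tfrac{\lambda_{\textup{SSN}}}{2}\|\bm{\Delta}\|_{\textup{SSN}}$, producing the key intermediate estimate
\begin{equation*}
\|\cm{A}+\bm{\Delta}\|_{\textup{SSN}} \;\leq\; \|\cm{A}\|_{\textup{SSN}} + \tfrac{1}{2}\sum_{k=1}^{2^{d-1}}\bigl(\|\bm{\Delta}^{(k)}_{\overline{\mathcal{M}}}\|_* + \|\bm{\Delta}^{(k)}_{\overline{\mathcal{M}}^\perp}\|_*\bigr).
\end{equation*}

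For the lower bound on $\|\cm{A}+\bm{\Delta}\|_{\textup{SSN}}$, I will decompose each square matricization as $(\cm{A}+\bm{\Delta})_{[I_k]} = \bigl[(\cm{A}_{[I_k]})_{\mathcal{M}_k} + \bm{\Delta}^{(k)}_{\overline{\mathcal{M}}^\perp}\bigr] + \bigl[(\cm{A}_{[I_k]})_{\mathcal{M}_k^\perp} + \bm{\Delta}^{(k)}_{\overline{\mathcal{M}}}\bigr]$, apply the reverse triangle inequality to separate the two bracketed pieces, and then invoke the decomposability identity $\|\bm{X}+\bm{Y}\|_* = \|\bm{X}\|_*+\|\bm{Y}\|_*$ for any $\bm{X}\in\mathcal{M}_k$ and $\bm{Y}\in\overline{\mathcal{M}}_k^\perp$ (whose row and column spaces are mutually orthogonal by construction in \eqref{eq:subspace_M}). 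After a routine triangle inequality to bound $\|\cm{A}_{[I_k]}\|_*$ in terms of its $\mathcal{M}_k$- and $\mathcal{M}_k^\perp$-projections, summing over $k$ yields
\begin{equation*}
\|\cm{A}+\bm{\Delta}\|_{\textup{SSN}} \;\geq\; \|\cm{A}\|_{\textup{SSN}} - 2\sum_k \|\cm{A}^{(k)}_{\mathcal{M}^\perp}\|_* + \sum_k \|\bm{\Delta}^{(k)}_{\overline{\mathcal{M}}^\perp}\|_* - \sum_k \|\bm{\Delta}^{(k)}_{\overline{\mathcal{M}}}\|_*.
\end{equation*}
Combining this with the upper bound from the previous display, the $\|\cm{A}\|_{\textup{SSN}}$ terms cancel and rearrangement gives exactly $\sum_k \|\bm{\Delta}^{(k)}_{\overline{\mathcal{M}}^\perp}\|_* \leq 3\sum_k \|\bm{\Delta}^{(k)}_{\overline{\mathcal{M}}}\|_* + 4\sum_k \|\cm{A}^{(k)}_{\mathcal{M}^\perp}\|_*$, i.e.\ $\bm{\Delta}\in \mathbb{C}_{\textup{SSN}}(\overline{\mathcal{M}})$.

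The SN and MN arguments follow the same three steps verbatim after replacing the collection $\{I_k\}$ with $\{(i)\}_{i=1}^{2d}$ and $\{S_1\}$, respectively, and after noting that for the MN case the noise term reshapes naturally as $\textup{vec}(\cm{Y}_{t-1})\textup{vec}(\cm{E}_t)^\top$ with the dual of the nuclear norm being the operator norm. No step in this plan is genuinely difficult; the only point requiring care is verifying the decomposability identity for each pair $(\mathcal{M}_k,\overline{\mathcal{M}}_k^\perp)$ and $(\mathcal{N}_i,\overline{\mathcal{N}}_i^\perp)$, which is immediate from the orthogonality of the relevant column and row spaces built into \eqref{eq:subspace_N}--\eqref{eq:subspace_M}. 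The real work in the paper lies downstream in Lemmas \ref{lemma:deviation}--\ref{lemma:RSC}, which verify the deviation and restricted-strong-convexity conditions stochastically.
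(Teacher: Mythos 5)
Your proposal is correct and follows essentially the same route as the paper's proof: the basic inequality from optimality of the penalized estimator, the Hölder/dual-norm bound on the noise term using $\lambda\geq 4\|T^{-1}\sum_t\cm{Y}_{t-1}\circ\cm{E}_t\|_{\textup{SSN}^*}$, and the reverse triangle inequality plus decomposability of the nuclear norm over the orthogonal pairs $(\mathcal{M}_k,\overline{\mathcal{M}}_k^\perp)$, yielding the cone condition with the same constants $3$ and $4$. The only cosmetic difference is that you discard the nonnegative quadratic term immediately, whereas the paper carries it along and invokes its nonnegativity at the end; the resulting chain of inequalities is identical.
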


Following \citet{negahban2012restricted} and \citet{negahban2012unified}, a restricted strong convexity (RSC) condition for the square loss function can be defined as follows.

\begin{definition} \label{dfn:rsc}
	The loss function satisfies the RSC condition with curvature $\alpha_{\textup{RSC}}>0$ and restricted error set $\mathbb{C}$, if
	\begin{equation}
		\frac{1}{T}\sum_{t=1}^T	\|\langle\bm{\Delta},\cm{Y}_{t-1}\rangle\|_\text{F}^2\geq\alpha_{\textup{RSC}}\|\bm{\Delta}\|_{\textup{F}}^2,~~~\forall\bm{\Delta}\in\mathbb{C}.
	\end{equation}
\end{definition}

Based on the restricted error sets and RSC conditions, the estimation errors have the following deterministic upper bounds.

\begin{lemma} \label{lemma:errorbound}
	Suppose that $\lambda_{\textup{SSN}}\geq4\|T^{-1}\sum_{t=1}^T\cm{Y}_{t-1}\circ\cm{E}_t\|_{\textup{SSN}^*}$, the RSC condition holds with the parameter $\alpha_{\textup{RSC}}$ and restricted error set $\mathbb{C}_{\textup{SSN}}(\overline{\mathcal{M}})$, and $\cm{A}^*_{[I_k]}\in\mathbb{B}_q(s_q^{(k)};p,p)$ for some $q\in[0,1)$ and all $k=1,\dots,2^{d-1}$,
	\begin{equation}
		\|\bm{\Delta}_{\textup{SSN}}\|_\textup{F}\lesssim\sqrt{s_q}\left(\frac{2^{d-1}\lambda_{\textup{SSN}}}{\alpha_{\textup{RSC}}}\right)^{1-q/2},
	\end{equation}
	where $s_q=2^{1-d}\sum_{k=1}^{2^{d-1}}s_q^{(k)}$.
	
	Suppose that $\lambda_{\textup{SN}}\geq4\|T^{-1}\sum_{t=1}^T\cm{Y}_{t-1}\circ\cm{E}_t\|_{\textup{SN}^*}$, the RSC condition holds with the parameter $\alpha_{\textup{RSC}}$ and restricted error set $\mathbb{C}_{\textup{SN}}(\overline{\mathcal{N}})$, and $\cm{A}^*_{(i)}\in\mathbb{B}_q(r_q^{(i)};p_i,p_{-i}p)$ for some $q\in[0,1)$ and all $i=1,\dots,2d$,
	\begin{equation}
		\|\bm{\Delta}_{\textup{SN}}\|_\textup{F}\lesssim\sqrt{r_q}\left(\frac{2d\cdot\lambda_{\textup{SN}}}{\alpha_{\textup{RSC}}}\right)^{1-q/2},
	\end{equation}
	where $r_q=(2d)^{-1}\sum_{i=1}^{2d}r_q^{(i)}$.
	
	Suppose that $\lambda_{\textup{MN}}\geq4\|T^{-1}\sum_{t=1}^T\textup{vec}(\cm{Y}_{t-1})\textup{vec}(\cm{E}_t)\|_\textup{nuc}$, the RSC condition holds with the parameter $\alpha_{\textup{RSC}}$ and restricted error set $\mathbb{C}_{\textup{MN}}(\overline{\mathcal{M}}_1)$, and $\cm{A}^*_{[S_1]}\in\mathbb{B}_q(s_q^{(1)};p,p)$ for some $q\in[0,1)$,
	\begin{equation}
		\|\bm{\Delta}_{\textup{MN}}\|_\textup{F}\lesssim\sqrt{s_q^{(1)}}\left(\frac{\lambda_{\textup{MN}}}{\alpha_{\textup{RSC}}}\right)^{1-q/2}.
	\end{equation}
\end{lemma}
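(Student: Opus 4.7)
The plan is to prove all three bounds by a unified Negahban--Wainwright style argument for regularized $M$-estimators, specializing only in the regularizer $\mathcal{R}$ (equal to $\|\cdot\|_{\textup{SSN}}$, $\|\cdot\|_{\textup{SN}}$, or $\|(\cdot)_{[S_1]}\|_{*}$), its dual norm, and the pair of subspaces with respect to which $\mathcal{R}$ is decomposable. For concreteness I describe the SSN case; the SN and MN cases are identical modulo replacing the summation over $k=1,\dots,2^{d-1}$ by summation over $i=1,\dots,2d$ or by a single $k=1$.

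First I would invoke optimality of $\cm{\widehat{A}}_{\textup{SSN}}$ against $\cm{A}$ as a feasible point and expand the quadratic loss using $\cm{Y}_t=\langle\cm{A},\cm{Y}_{t-1}\rangle+\cm{E}_t$, which yields the basic inequality
\[
\frac{1}{T}\sum_{t=1}^{T}\|\langle\bm{\Delta},\cm{Y}_{t-1}\rangle\|_{\textup{F}}^{2}\leq 2\Big\langle\bm{\Delta},\,\frac{1}{T}\sum_{t=1}^{T}\cm{Y}_{t-1}\circ\cm{E}_t\Big\rangle+\lambda_{\textup{SSN}}\bigl[\|\cm{A}\|_{\textup{SSN}}-\|\cm{\widehat{A}}\|_{\textup{SSN}}\bigr].
\]
Because Lemma B.1 already guarantees $\bm{\Delta}\in\mathbb{C}_{\textup{SSN}}(\overline{\mathcal{M}})$, the RSC condition supplies the lower bound $\alpha_{\textup{RSC}}\|\bm{\Delta}\|_{\textup{F}}^{2}$ for the left-hand side, while the inner-product term is controlled by a generalized H\"older inequality, $|\langle\bm{\Delta},\cdot\rangle|\leq\|\bm{\Delta}\|_{\textup{SSN}}\cdot\|\cdot\|_{\textup{SSN}^{*}}\leq(\lambda_{\textup{SSN}}/4)\|\bm{\Delta}\|_{\textup{SSN}}$, using the hypothesis on $\lambda_{\textup{SSN}}$.

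For the regularizer difference I would use decomposability of the nuclear norm with respect to $(\mathcal{M}_k,\overline{\mathcal{M}}_k^{\perp})$, matricization by matricization, combined with reverse-triangle inequalities, to obtain
\[
\|\cm{A}\|_{\textup{SSN}}-\|\cm{\widehat{A}}\|_{\textup{SSN}}\leq\sum_{k=1}^{2^{d-1}}\bigl[\|\bm{\Delta}_{\overline{\mathcal{M}}}^{(k)}\|_{*}-\|\bm{\Delta}_{\overline{\mathcal{M}}^{\perp}}^{(k)}\|_{*}\bigr]+2\sum_{k=1}^{2^{d-1}}\|\cm{A}_{\mathcal{M}^{\perp}}^{(k)}\|_{*}.
\]
Substituting into the basic inequality produces a quadratic inequality in $\|\bm{\Delta}\|_{\textup{F}}$ in which the nuclear-norm contributions can be converted via the low-rank bound $\|\bm{\Delta}_{\overline{\mathcal{M}}}^{(k)}\|_{*}\leq\sqrt{2\,\textup{rank}(\mathcal{M}_k)}\,\|\bm{\Delta}\|_{\textup{F}}$. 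In the exactly low-rank case ($q=0$), $\textup{rank}(\mathcal{M}_k)=s_{0}^{(k)}$ and $\|\cm{A}_{\mathcal{M}^{\perp}}^{(k)}\|_{*}=0$, so solving the quadratic directly gives the claimed rate. For $q\in(0,1)$ I would introduce a threshold $\tau>0$ and redefine $\mathcal{M}_k$ to be the subspace spanned by singular vectors of $\cm{A}_{[I_k]}$ whose singular values exceed $\tau$; H\"older applied to the singular-value sequence gives $\textup{rank}(\mathcal{M}_k)\leq\tau^{-q}s_{q}^{(k)}$ and $\|\cm{A}_{\mathcal{M}^{\perp}}^{(k)}\|_{*}\leq\tau^{1-q}s_{q}^{(k)}$, and choosing $\tau\asymp\lambda_{\textup{SSN}}/\alpha_{\textup{RSC}}$ balances the two error contributions to produce the required $(1-q/2)$ exponent.

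The main technical obstacle, beyond adapting the standard single-norm argument, is the careful bookkeeping across the $2^{d-1}$ (respectively $2d$) matricizations: each nuclear norm must be processed in parallel, decomposability must be invoked simultaneously with respect to the distinct subspace pairs attached to different matricizations, and the factor $2^{d-1}$ (respectively $2d$) sitting inside the parenthesis of the final bound arises precisely because $s_q$ (respectively $r_q$) is defined as an average while the H\"older and decomposability steps produce sums. A second point requiring care is that the restricted error set supplied by Lemma B.1 must absorb the approximation-error term $\sum_k\|\cm{A}_{\mathcal{M}^{\perp}}^{(k)}\|_{*}$ exactly as it appears after decomposability; this compatibility is by design, since the coefficient $4$ in the definition of $\mathbb{C}_{\textup{SSN}}(\overline{\mathcal{M}})$ was chosen precisely to make that absorption work.
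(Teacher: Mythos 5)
Your proposal follows essentially the same route as the paper's proof: the basic inequality from optimality, membership in the restricted cone via Lemma B.1, decomposability and the rank bound $\|\bm{\Delta}_{\overline{\mathcal{M}}}^{(k)}\|_*\leq\sqrt{2s_k}\|\bm{\Delta}\|_{\textup{F}}$, the RSC lower bound yielding a quadratic inequality, and singular-value thresholding to handle $q\in(0,1)$. The only quibble is that to exactly balance the two error contributions and recover the factor $2^{d-1}$ inside the parenthesis, the threshold should be taken as $\tau_k\asymp 2^{d-1}\lambda_{\textup{SSN}}/\alpha_{\textup{RSC}}$ rather than $\lambda_{\textup{SSN}}/\alpha_{\textup{RSC}}$ (a constant-in-$d$ discrepancy that your own bookkeeping remark essentially anticipates).
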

Note that Lemma \ref{lemma:errorbound} is deterministic and the radius $s_q$, $r_q$, and $s_q^{(1)}$ can also diverge to infinity.

\subsubsection{Stochastic Analysis \label{subsec:stochastic}}
We continue with the stochastic analysis to show that the deviation bound and the RSC condition hold simultaneously with high probability. 

%To ensure that the estimation error lies in the restricted error set, we need to control the deviation bound. In the following analysis, $C$ denotes a generic positive constant, which is independent of dimension and sample size, and may represent different values even on the same line. Based on the spectral measure of dependence, we can prove the following deviation bound and strong convexity.

\begin{lemma}[Deviation bound]
	\label{lemma:dualnorm}
	Suppose that Assumptions \ref{asmp:stationary} and \ref{asmp:gaussian} hold. If $T\gtrsim p$ and $\lambda_{\textup{SSN}}\gtrsim \kappa^2M_12^{1-d}\sqrt{p/T}$, with probability at least $1-\exp[-C(p-d)]$,
	\begin{equation}
		\left\|\frac{1}{T}\sum_{t=1}^T\cm{Y}_{t-1}\circ\cm{E}_t\right\|_{\textup{SSN}^*}\leq\frac{\lambda_{\textup{SSN}}}{4}
	\end{equation}
	where $M_1=\lambda_{\max}(\bm{\Sigma_e})/\mu^{1/2}_{\min}(\mathcal{A})$.
	
	If  $T\gtrsim \max_{1\leq i\leq d}p_{-i}p$ and $\lambda_{\textup{SN}}\gtrsim \kappa^2M_1d^{-2}\sum_{i=1}^d\sqrt{p_{-i}p/T}$, with probability at least $1-2\sum_{i=1}^d\exp(-Cp_{-i}p)$,
	\begin{equation}
		\left\|\frac{1}{T}\sum_{t=1}^T\cm{Y}_{t-1}\circ\cm{E}_t\right\|_{\textup{SN}^*}\leq\frac{\lambda_{\textup{SN}}}{4}.
	\end{equation} 
	
	Moreover, if $T\gtrsim p$ and $\lambda_{\textup{MN}}\gtrsim \kappa^2M_1\sqrt{p/T}$, with probability at least $1-\exp(-Cp)$,
	\begin{equation}
		\left\|\frac{1}{T}\sum_{t=1}^T\textup{vec}(\cm{Y}_{t-1})\textup{vec}(\cm{E}_t)^\top\right\|_\textup{op}\leq\frac{\lambda_{\textup{MN}}}{4}.
	\end{equation} 
\end{lemma}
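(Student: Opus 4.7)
The proof will follow a three-step strategy common to deviation bounds in high-dimensional M-estimation: reduce each dual norm to an operator norm, discretize via an $\varepsilon$-net, and apply a sub-Gaussian martingale concentration inequality on each point of the net. For the SSN part, since $\|\cm{T}\|_{\textup{SSN}}=\sum_k\|\cm{T}_{[I_k]}\|_*\geq\|\cm{T}_{[I_{k_0}]}\|_*$ for any fixed $k_0$, the duality inequality $\langle \cm{W},\cm{T}\rangle\leq \|\cm{W}_{[I_{k_0}]}\|_{\textup{op}}\|\cm{T}_{[I_{k_0}]}\|_*$ immediately yields $\|\cm{W}\|_{\textup{SSN}^*}\leq\min_k\|\cm{W}_{[I_k]}\|_{\textup{op}}$ (this is the loose step the authors flag as the source of suboptimality). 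Analogously $\|\cm{W}\|_{\textup{SN}^*}\leq\min_i\|\cm{W}_{(i)}\|_{\textup{op}}$, while the MN statement is literally an operator norm bound. Observing that for every choice of $I_k$ (resp.\ mode $i$) the matricization of $T^{-1}\sum_{t}\cm{Y}_{t-1}\circ\cm{E}_t$ has the form $T^{-1}\sum_{t}\bm{a}_t\bm{b}_t^\top$ where $\bm{a}_t$ is a permutation/reshaping of a subvector of $\textup{vec}(\cm{Y}_{t-1})$ and $\bm{b}_t$ of $\textup{vec}(\cm{E}_t)$, the three tasks all reduce to bounding $\|T^{-1}\sum_t\bm{a}_t\bm{b}_t^\top\|_{\textup{op}}$ for appropriate $\bm{a}_t,\bm{b}_t$, which differ only in effective row/column dimensions (namely $p\times p$ for SSN and MN, $p_i\times p_{-i}p$ for SN).

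The core step is a two-sided $\varepsilon$-net argument: if $\mathcal{N}_1\subset S^{m-1}$ and $\mathcal{N}_2\subset S^{n-1}$ are $1/4$-nets, then $\|\bm{A}\|_{\textup{op}}\leq 2\sup_{\bm{u}\in\mathcal{N}_1,\bm{v}\in\mathcal{N}_2}\bm{u}^\top\bm{A}\bm{v}$ with $|\mathcal{N}_1||\mathcal{N}_2|\leq 9^{m+n}$. For fixed $(\bm{u},\bm{v})$ in a net, write
\begin{equation}
X_t\;=\;(\bm{u}^\top\bm{a}_t)(\bm{b}_t^\top\bm{v}), \qquad S_T=\sum_{t=1}^{T}X_t.
\end{equation}
Because $\bm{a}_t$ is $\mathcal{F}_{t-1}$-measurable while $\bm{b}_t=\bm{\Sigma}_{\bm{e}}^{1/2}\bm{\xi}_t$ is independent of $\mathcal{F}_{t-1}$ with mean zero and mutually independent $\kappa^2$-sub-Gaussian entries, $\{X_t\}$ is a martingale-difference sequence with conditional MGF bounded by $\exp\!\bigl(\lambda^2\kappa^2(\bm{u}^\top\bm{a}_t)^2(\bm{v}^\top\bm{\Sigma}_{\bm{e}}\bm{v})/2\bigr)$. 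A standard sub-Gaussian martingale (Freedman/Azuma-type) argument then gives, for any $V>0$ and $x>0$,
\begin{equation}
\mathbb{P}\!\Bigl(|S_T|>x,\ V_T\leq V\Bigr)\;\leq\;2\exp\!\bigl(-x^2/(2V)\bigr),\qquad V_T:=\kappa^2(\bm{v}^\top\bm{\Sigma}_{\bm{e}}\bm{v})\sum_{t=1}^T(\bm{u}^\top\bm{a}_t)^2.
\end{equation}
The conditional-variance control is then handed off to a companion concentration result: under Assumption~\ref{asmp:stationary}, the stationary covariance of $\textup{vec}(\cm{Y}_{t-1})$ satisfies $\lambda_{\max}(\bm{\Sigma}_{\bm{y}})\leq\lambda_{\max}(\bm{\Sigma}_{\bm{e}})/\mu_{\min}(\mathcal{A})$, so $\mathbb{E}\,V_T\leq T\kappa^2\lambda_{\max}(\bm{\Sigma}_{\bm{e}})\cdot\lambda_{\max}(\bm{\Sigma}_{\bm{e}})/\mu_{\min}(\mathcal{A})=T\kappa^2 M_1^2$, and an auxiliary quadratic-form concentration (invoked in the companion RSC lemma, Lemma~\ref{lemma:RSC}, and again using sub-Gaussian martingale techniques rather than the usual Gaussian linear-transformation trick) shows $V_T\lesssim T\kappa^2 M_1^2$ with probability at least $1-\exp(-Cp)$ provided $T\gtrsim p$.

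Combining, setting $x\asymp\kappa M_1\sqrt{pT}$ kills the exponent in the concentration inequality at rate $e^{-Cp}$, and a union bound over the net of cardinality $e^{C(m+n)}$ yields
\begin{equation}
\Bigl\|T^{-1}\sum_{t=1}^T\bm{a}_t\bm{b}_t^\top\Bigr\|_{\textup{op}}\;\lesssim\;\kappa^2 M_1\sqrt{(m+n)/T}
\end{equation}
with probability at least $1-\exp(-C(m+n))$ (the extra factor of $\kappa$ relative to the naive count comes from slack in the sub-Gaussian martingale bound needed to absorb the variance deviation event). Plugging $(m,n)=(p,p)$ gives the SSN and MN bounds (taking the $I_1$ matricization suffices for SSN, and for MN no reduction is needed); plugging $(m,n)=(p_i,p_{-i}p)$ mode-by-mode and summing bounds $d^{-1}\sum_i\sqrt{p_{-i}p/T}$ over $i=1,\dots,d$ (using $\cm{A}_{(i)}=\cm{A}_{(d+i)}^\top$ up to reshape to match the $2d$ modes) delivers the SN bound, with failure probability $\sum_i\exp(-Cp_{-i}p)$ from the union bound across modes. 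The main technical obstacle is the second step: the sub-Gaussian martingale deviation inequality together with the quadratic-form concentration $\sum_t(\bm{u}^\top\bm{a}_t)^2\lesssim T\bm{u}^\top\bm{\Sigma}_{\bm{y}}\bm{u}$ must both be established under only the sub-Gaussian (not Gaussian) hypothesis, since a sub-Gaussian VAR cannot be reduced to i.i.d.\ samples by a linear transformation. This is precisely the ``novel martingale-based concentration bound'' highlighted in the introduction, and it is what forces the $\kappa^2$ (rather than $\kappa$) dependence and the extra sample-size term $\max(\kappa^2,\kappa^4)M_2^{-2}p$ appearing in the statements of Theorems~\ref{thm:SN}--\ref{thm:SSN}.
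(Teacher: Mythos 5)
Your overall architecture matches the paper's: reduce each dual norm to operator norms of matricizations, discretize by an $\varepsilon$-net, apply a self-normalized Chernoff/Freedman-type martingale bound to $S_T=\sum_t(\bm{u}^\top\bm{a}_t)(\bm{b}_t^\top\bm{v})$ on the event that the empirical Gram $\sum_t(\bm{u}^\top\bm{a}_t)^2$ is controlled, and control that Gram by a quadratic-form concentration (the paper uses the Hanson--Wright inequality applied to the VMA($\infty$) representation $\bm{y}=\bm{P}\bm{D}\bm{\xi}$ of the innovations, not a martingale argument, but this is cosmetic). The paper's Lemma \ref{lemma:deviation} carries out exactly the step $\|T^{-1}\sum_t\bm{a}_t\bm{b}_t^\top\|_{\textup{op}}\lesssim\kappa^2M_1\sqrt{(q+q')/T}$ that you describe, using a net over rank-one unit-Frobenius-norm matrices rather than a product of sphere nets.

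However, there is a genuine gap in your first step. Bounding $\|\cm{W}\|_{\textup{SSN}^*}$ by $\min_k\|\cm{W}_{[I_k]}\|_{\textup{op}}$ is correct but too weak: each $\|\cm{W}_{[I_k]}\|_{\textup{op}}$ is only of order $\kappa^2M_1\sqrt{p/T}$, so their minimum is still of that order, and you can only conclude $\|\cm{W}\|_{\textup{SSN}^*}\lesssim\kappa^2M_1\sqrt{p/T}$; that would force $\lambda_{\textup{SSN}}\gtrsim\kappa^2M_1\sqrt{p/T}$, missing the factor $2^{1-d}$ asserted in the lemma (which is exactly what cancels the $2^{d-1}$ in the error bound of Theorem \ref{thm:SSN}). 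The paper instead uses the inf-convolution characterization of the dual of a sum of norms, $\|\cm{W}\|_{\textup{SSN}^*}=\inf_{\sum_k\cm{X}_k=\cm{W}}\max_k\|(\cm{X}_k)_{[I_k]}\|_{\textup{op}}$, chooses the splitting $\cm{X}_k\propto\cm{W}/\|\cm{W}_{[I_k]}\|_{\textup{op}}$, and applies Jensen's (AM--HM) inequality to obtain $\|\cm{W}\|_{\textup{SSN}^*}\leq 2^{-2(d-1)}\sum_k\|\cm{W}_{[I_k]}\|_{\textup{op}}$, which beats your bound by the factor $2^{1-d}$ when the operator norms are comparable. The same issue affects your SN argument, which is moreover internally inconsistent: you first assert $\|\cm{W}\|_{\textup{SN}^*}\leq\min_i\|\cm{W}_{(i)}\|_{\textup{op}}$ but then ``sum over modes'' to reach $d^{-1}\sum_i\sqrt{p_{-i}p/T}$; the stated prefactor $d^{-2}\sum_{i=1}^d$ comes from $(2d)^{-2}\sum_{i=1}^{2d}\|\cm{W}_{(i)}\|_{\textup{op}}$ via the same inf-convolution/Jensen device. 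As a minor point, the ``loose step'' the authors flag is precisely this upper bound on the inf-convolution by a particular splitting, not the trivial single-matricization bound you cite.
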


Next, we prove the restricted strong convexity for regularized estimators. According to Lemma \ref{lemma:dualnorm}, we need the sample size  $T\gtrsim p$ for all three estimators. In this case, we can establish the strong convexity condition that is stronger than the RSC condition.

\begin{lemma}[Strong convexity]
	\label{lemma:RSC}
	Under Assumptions \ref{asmp:stationary} and \ref{asmp:gaussian}, for $T\gtrsim\max(\kappa^2,\kappa^4)M_2^{-2}p$, with probability at least $1-\exp[-C\min(\kappa^{-2},\kappa^{-4})M_2^2p]$,
	\begin{equation}
		\frac{1}{T}\sum_{t=1}^T	\|\langle\bm{\Delta},\cm{Y}_{t-1}\rangle\|_\textup{F}^2\geq\alpha_{\textup{RSC}}\|\bm{\Delta}\|_{\textup{F}}^2,
	\end{equation}
	where $M_2=[\lambda_{\min}(\bm{\Sigma}_{\bm{e}})\mu_{\max}(\mathcal{A})]/[\lambda_{\max}(\bm{\Sigma}_{\bm{e}})\mu_{\min}(\mathcal{A})]$ and $\alpha_{\textup{RSC}}=\lambda_{\min}(\bm{\Sigma}_{\bm{e}})/(2\mu_{\max}(\mathcal{A}))$.
	
\end{lemma}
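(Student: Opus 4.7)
\textbf{Proof proposal for Lemma \ref{lemma:RSC}.}

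My plan is to reduce the desired quadratic lower bound to a minimum-eigenvalue statement for the sample covariance of the vectorized process and then concentrate this sample covariance around its population counterpart using the VAR representation together with a sub-Gaussian quadratic-form bound. First, invoking the vectorization identity \eqref{eq:vecinner}, I would rewrite
\begin{equation*}
\frac{1}{T}\sum_{t=1}^{T}\|\langle\bm{\Delta},\cm{Y}_{t-1}\rangle\|_{\mathrm{F}}^{2}
=\frac{1}{T}\sum_{t=1}^{T}\bigl\|\bm{\Delta}_{[S_{2}]}\bm{y}_{t-1}\bigr\|_{2}^{2}
=\operatorname{tr}\!\left(\bm{\Delta}_{[S_{2}]}\,\widehat{\bm{\Sigma}}_{\bm{y}}\,\bm{\Delta}_{[S_{2}]}^{\top}\right),
\end{equation*}
with $\widehat{\bm{\Sigma}}_{\bm{y}}=T^{-1}\sum_{t=1}^{T}\bm{y}_{t-1}\bm{y}_{t-1}^{\top}$. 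Since $\|\bm{\Delta}\|_{\mathrm{F}}=\|\bm{\Delta}_{[S_{2}]}\|_{\mathrm{F}}$, the inequality of the lemma reduces to showing that $\lambda_{\min}(\widehat{\bm{\Sigma}}_{\bm{y}})\geq \alpha_{\textup{RSC}}$ uniformly in $\bm{\Delta}$ (note that this yields the \emph{strong} convexity, which automatically implies RSC). Thus the task becomes a matrix concentration problem for $\widehat{\bm{\Sigma}}_{\bm{y}}$ alone, with no dependence on $\bm{\Delta}$, a point I would emphasize.

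Next, I would supply the population lower bound. By Assumption \ref{asmp:stationary} the VAR process $\{\bm{y}_t\}$ is stationary with covariance $\bm{\Sigma}_{\bm{y}}$ given by the inverse spectral density, and the standard spectral-density argument from \citet{basu2015regularized} (transferable to this setting because the VAR representation in \eqref{eq:VAR_rep} is exactly of their form) yields
\begin{equation*}
\lambda_{\min}(\bm{\Sigma}_{\bm{y}})\;\geq\;\frac{\lambda_{\min}(\bm{\Sigma}_{\bm{e}})}{\mu_{\max}(\mathcal{A})}\;=\;2\alpha_{\textup{RSC}},\qquad
\lambda_{\max}(\bm{\Sigma}_{\bm{y}})\;\leq\;\frac{\lambda_{\max}(\bm{\Sigma}_{\bm{e}})}{\mu_{\min}(\mathcal{A})}.
\end{equation*}
Consequently it suffices to prove the deviation bound $\|\widehat{\bm{\Sigma}}_{\bm{y}}-\bm{\Sigma}_{\bm{y}}\|_{\textup{op}}\leq \alpha_{\textup{RSC}}$, since then $\lambda_{\min}(\widehat{\bm{\Sigma}}_{\bm{y}})\geq 2\alpha_{\textup{RSC}}-\alpha_{\textup{RSC}}=\alpha_{\textup{RSC}}$. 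Observing that $\alpha_{\textup{RSC}}=(M_{2}/2)\,\lambda_{\max}(\bm{\Sigma}_{\bm{y}})\cdot[\lambda_{\max}(\bm{\Sigma}_{\bm{y}})/\lambda_{\max}(\bm{\Sigma}_{\bm{y}})]$, the desired inequality is of relative deviation type and the required sample-size threshold $T\gtrsim\max(\kappa^{2},\kappa^{4})M_{2}^{-2}p$ emerges naturally from requiring $C\kappa^{2}\sqrt{p/T}\lesssim M_{2}$.

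The technical heart of the argument is establishing the operator-norm concentration
\begin{equation*}
\|\widehat{\bm{\Sigma}}_{\bm{y}}-\bm{\Sigma}_{\bm{y}}\|_{\textup{op}}\;\lesssim\;\kappa^{2}\,\lambda_{\max}(\bm{\Sigma}_{\bm{y}})\,\sqrt{p/T}
\end{equation*}
with probability $1-\exp[-C\min(\kappa^{-2},\kappa^{-4})M_{2}^{2}p]$. Via a standard $1/4$-net discretization of the unit sphere in $\mathbb{R}^{p}$ (which inflates the probability by a factor $9^{p}$), this reduces to a pointwise tail bound
\begin{equation*}
\mathbb{P}\!\left\{\Bigl|\tfrac{1}{T}\sum_{t=1}^{T}(\bm{v}^{\top}\bm{y}_{t-1})^{2}-\bm{v}^{\top}\bm{\Sigma}_{\bm{y}}\bm{v}\Bigr|>\delta\bm{v}^{\top}\bm{\Sigma}_{\bm{y}}\bm{v}\right\}
\end{equation*}
for each fixed unit vector $\bm{v}$. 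Expanding the stationary MA representation $\bm{y}_{t}=\sum_{j\geq 0}\bm{A}_{[S_{2}]}^{j}\bm{e}_{t-j}$, the sequence $\bm{v}^{\top}\bm{y}_{t}$ is a linear functional of the i.i.d.\ sub-Gaussian innovations $\{\bm{e}_{s}\}$, and the quadratic form $\sum_{t}(\bm{v}^{\top}\bm{y}_{t-1})^{2}$ can be rewritten as $\bm{\xi}^{\top}\bm{Q}_{\bm{v}}\bm{\xi}$ for an infinite sub-Gaussian vector $\bm{\xi}$ collecting the entries of $\{\bm{\xi}_{t}\}$ and a PSD matrix $\bm{Q}_{\bm{v}}$ whose trace equals $T\bm{v}^{\top}\bm{\Sigma}_{\bm{y}}\bm{v}$ and whose operator norm is controlled by $\lambda_{\max}(\bm{\Sigma}_{\bm{y}})$. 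A Hanson-Wright type bound for independent sub-Gaussian entries then gives the required tail; this is exactly the martingale/decoupling concentration technology that the paper advertises as Lemma \ref{lemma:deviation} and that I would reuse here.

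The main obstacle I anticipate is precisely this last step: one cannot invoke the Gaussian spectral-radius bound from \citet{basu2015regularized} directly, because the innovations are only sub-Gaussian and hence the finite-sample empirical covariance does not enjoy the clean chi-squared representation. Handling the infinite MA tail cleanly while keeping the rate $\kappa^{2}\sqrt{p/T}$ free of extra logarithmic factors is the delicate point; my plan is to truncate the MA expansion at a horizon $O(\log T)$, bound the truncation error in operator norm via the spectral radius (which is strictly less than one by Assumption \ref{asmp:stationary}), and apply the sub-Gaussian Hanson-Wright inequality to the truncated quadratic form, absorbing all $O(1)$ constants into $C$. Combining with the $\epsilon$-net union bound and choosing $\delta=M_{2}/2$ then yields the stated probability and completes the proof.
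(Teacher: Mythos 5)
Your proposal follows essentially the same route as the paper's proof: reduce the claim to a lower bound on the minimum eigenvalue of the sample Gram matrix of $\bm{y}_{t-1}$, use the MA($\infty$) representation to express the relevant quadratic form in terms of the independent sub-Gaussian innovations $\bm{\xi}_t$, apply a Hanson--Wright bound pointwise, take a union bound over an $\epsilon$-net of $\mathbb{S}^{p-1}$, and convert $\lambda_{\min}(\bm{PP}^\top)$, $\lambda_{\max}(\bm{PP}^\top)$ into $1/\mu_{\max}(\mathcal{A})$, $1/\mu_{\min}(\mathcal{A})$ via the spectral measure of \citet{basu2015regularized}. The only cosmetic difference is that the paper applies Hanson--Wright directly to the full infinite quadratic form $\bm{\xi}^\top\bm{\Sigma_M}\bm{\xi}$ (whose Frobenius and operator norms are finite by stationarity) rather than truncating the MA expansion at horizon $O(\log T)$, so your extra truncation step is unnecessary but harmless.
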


\subsection{Proofs of Theorems \ref{thm:SN}--\ref{thm:rankconsistency} \label{subsec:proof_theorem}}

\begin{proof}[Proof of Theorems \ref{thm:SN} and \ref{thm:MN}]	
	Theorems \ref{thm:SN} and \ref{thm:MN} can be proved based on Lemmas \ref{lemma:errorbound}--\ref{lemma:RSC} following the same line of the proof of Theorem \ref{thm:SSN} given below. Therefore, we omit the details here.	
\end{proof}

\begin{proof}[Proof of Theorem \ref{thm:SSN}]
	
	The proof of Theorem \ref{thm:SSN} has been split into Lemmas \ref{lemma:errorbound}--\ref{lemma:RSC}.	
	By Lemma \ref{lemma:errorbound}, for deterministic realization with sample size $T$ of a tensor autoregressive process, if we choose $\lambda_{\text{SSN}}\geq4\|T^{-1}\sum_{t=1}^T\cm{Y}_{t-1}\circ\cm{E}_t\|_{\text{SSN}^*}$ and RSC condition holds for the square loss with the parameter $\alpha_{\text{RSC}}$, the following error upper bound can be established
	\begin{equation}
		\|\bm{\Delta}\|_\textup{F}\lesssim\sqrt{s_q}\left(\frac{2^{d-1}\lambda_{\textup{SSN}}}{\alpha_{\textup{RSC}}}\right)^{1-q/2}.
	\end{equation}
	Denote the events $E_1(\beta)=\{\beta\geq4\|T^{-1}\sum_{t=1}^T\cm{Y}_{t-1}\circ\cm{E}_t\|_{\text{SSN}^*}\}$ and $E_2(\alpha)=\{\lambda_{\min}(\bm{X}\bm{X}^\top/T)\geq\alpha\}$.
	If we take $\lambda_{\text{SSN}}\gtrsim \kappa^2M_12^{1-d}\sqrt{p/T}$, it suffices to show that $E_1(C\kappa^2M_12^{1-d}\sqrt{p/T})$ and $E_2(\alpha_{\text{RSC}}/2)$ occur simultaneously with high probability.
	
	By Lemma \ref{lemma:dualnorm}, when $T\gtrsim p$,
	\begin{equation}
		\left\|\frac{1}{T}\sum_{t=1}^T\cm{Y}_{t-1}\circ\cm{E}_t\right\|_{\text{SSN}^*}\lesssim\kappa^2M_1  2^{1-d} \sqrt{\frac{p}{T}}
	\end{equation}
	with probability at least $1-\exp[-C(p-d)]$. 
	
	By Lemma \ref{lemma:RSC}, when $T\gtrsim\max(\kappa^2,\kappa^4)M_2^{-2}p$, for any $\bm{\Delta}\in\mathbb{R}^{p_1\times\cdots\times p_{2d}}$,
	\begin{equation}
		\frac{1}{T}\sum_{t=1}^T	\|\langle\bm{\Delta},\cm{Y}_{t-1}\rangle\|_\text{F}^2\geq\frac{\lambda_{\min}(\bm{\Sigma_e})}{2\mu_{\max}(\mathcal{A})}\|\bm{\Delta}\|_{\text{F}}^2
	\end{equation}
	with probability at least $1-\exp[-C\min(\kappa^{-2},\kappa^{-4})M_2^2p]$.
	
	Hence, when $T\gtrsim [1+\max(\kappa^2,\kappa^4)M_2^{-2}]p$ and $\lambda\gtrsim\kappa^2M_1 2^{1-d}\sqrt{p/T}$, with probability at least $1-\exp[-C(p-d)]-\exp[-C\min(\kappa^{-2},\kappa^{-4})M_2^2p]$, the condition $\lambda\geq4\|T^{-1}\sum_{t=1}^T\cm{Y}_{t-1}\circ\cm{E}_t\|_{\text{SSN}^*}$ and the RSC condition with the parameter $\alpha_{\text{RSC}}=\lambda_{\min}(\bm{\Sigma_e})/\mu_{\max}(\mathcal{A})$ hold.
\end{proof}

\begin{proof}[Proof of Theorem \ref{thm:rankconsistency}]
	Theorem \ref{thm:SSN} gives the Frobenius estimation error bound. For simplicity, we write $\cm{\widehat{A}}=\cm{\widehat{A}}_{\text{SSN}}$ and $\cm{\widetilde{A}}=\cm{\widehat{A}}_{\text{TSSN}}$ in this proof.
	By definition, for any tensor $\cm{A}\in\mathbb{R}^{p_1\times\cdots\times p_{2d}}$,
	\begin{equation}
		\|\cm{A}\|_{\text{F}}^2=\|\cm{A}_{(i)}\|_{\text{F}}^2=\sum_{j=1}^{p_i}\sigma_j^2(\cm{A}_{(i)}),~~i=1,2,\dots,2d.
	\end{equation}
	In other words, the Frobenius norm of the error tensor is equivalent to the $\ell_2$ norm of singular values of the one-mode matricization. By Mirsky's singular value inequality \citep{mirsky1960symmetric},
	\begin{equation}\label{eq:mirsky}
		\sum_{j=1}^{p_i}[\sigma_j(\cm{\widehat{A}}_{(i)})-\sigma_j(\cm{A}^*_{(i)})]^2\leq\sum_{j=1}^{p_i}\sigma_j^2(\cm{\widehat{A}}_{(i)}-\cm{A}^*_{(i)})=\|\cm{\widehat{A}}-\cm{A}^*\|_{\text{F}}^2,~~i=1,2,\dots,2d.
	\end{equation}
	
	Obviously, the $\ell_\infty$ error bound is smaller than the $\ell_2$ error bound, so it follows the same upper bound. By Theorem \ref{thm:SSN}, when $\lambda_{\textup{SSN}}\asymp\kappa^2M_12^{1-d}\sqrt{p/T}$, with probability approaching one,
	\begin{align}
		\max_{1\leq i\leq 2d}\max_{1\leq j\leq p_i}|\sigma_j(\cm{\widehat{A}}_{(i)})-\sigma_j(\cm{A}^*_{(i)})|\leq& \max_{1\leq i\leq 2d}\left\{\sum_{j=1}^{p_i}[\sigma_j(\cm{\widehat{A}}_{(i)})-\sigma_j(\cm{A}^*_{(i)})]^2\right\}^{1/2} \notag\\
		\leq&\|\cm{\widehat{A}}-\cm{A}^*\|_{\text{F}}\lesssim \frac{\kappa^2M_1}{\alpha_{\text{RSC}}}\sqrt{\frac{s_0p}{T}}.\label{eq:singul}
	\end{align}
	Therefore, by Assumption \ref{asmp:truncate},	as $T\rightarrow\infty$,
	\begin{equation}
		\gamma \gg \max_{1\leq i\leq 2d}\max_{1\leq j\leq p_i}|\sigma_j(\cm{\widehat{A}}_{(i)})-\sigma_j(\cm{A}^*_{(i)})|.
	\end{equation}
	Then, for any $j>r_i$, since $\sigma_j(\cm{A}^*_{(i)})=0$, we have $\gamma \gg \sigma_j(\cm{\widehat{A}}_{(i)})$. Thus,  for all $i=1,\dots, 2d$, $\sigma_j(\cm{\widehat{A}}_{(i)})$ will be truncated for all $j>r_i$.
	Meanwhile, by Assumption \ref{asmp:truncate} and \eqref{eq:singul}, we have $\sigma_{r_i}(\cm{\widehat{A}}_{(i)})>\gamma$ for $T$ sufficiently large, for all $i=1,\dots, 2d$. Therefore, the rank selection consistency of the truncated estimator $\cm{\widetilde{A}}$ can be established.
	
	Denote the event $E=\{\text{rank}(\cm{\widetilde{A}}_{(i)})=r_i,~\text{for}~i=1,\dots,2d\}$. For a generic tensor $\cm{T}\in\mathbb{R}^{p_1\times\cdots\times p_{2d}}$, denote the sub-tensor $\cm{T}_{i_k=j}$, a $p_1\times\cdots\times p_{k-1}\times 1\times p_{k+1}\times\cdots\times p_{2d}$ tensor such that
	\begin{equation}
		(\cm{T}_{i_k=j})_{i_1\dots i_{k-1}1i_{k+1}\dots i_{2d}}=\cm{T}_{i_1\dots i_{k-1}ji_{k+1}\dots i_{2d}},
	\end{equation}
	and sub-tensor $\cm{T}_{i_k>j}$, a $p_1\times\cdots\times p_{k-1}\times (p_k-j)\times p_{k+1}\times\cdots\times p_{2d}$ tensor such that
	\begin{equation}
		(\cm{T}_{i_k>j})_{i_1\dots i_{k-1}\ell i_{k+1}\dots i_{2d}}=\cm{T}_{i_1\dots i_{k-1}(\ell+j)i_{k+1}\dots i_{2d}}.
	\end{equation}
	
	Let the HOSVD of $\cm{\widehat{A}}$ be $\cm{\widehat{G}}\times_{i=1}^{2d}\bm{\widehat{U}}_i$. By definition,  $\cm{\widehat{G}}$ is a $p_1\times\cdots\times p_{2d}$ all-orthogonal and sorted tensor such that
	\begin{equation}
		\|\cm{\widehat{G}}_{i_k=1}\|_{\text{F}}\geq\|\cm{\widehat{G}}_{i_k=2}\|_{\text{F}}\geq\cdots\geq\|\cm{\widehat{G}}_{i_k=p_k}\|_{\text{F}},
	\end{equation}
	for $k=1,\dots,2d$. On $E$, the truncation procedure is equivalent to truncating all the sub-tensors $\cm{\widehat{G}}_{i_k>r_k}$ to zeros. Thus, $\|\cm{\widehat{A}}-\cm{\widetilde{A}}\|_{\text{F}}=\|\cm{\widehat{G}}-\cm{\widetilde{G}}\|_{\text{F}}^2\leq\sum_{k=1}^{2d}\|\cm{\widehat{G}}_{i_k>r_k}\|_{\text{F}}^2$. 
	
	By the definition of HOSVD, $\|\cm{\widehat{G}}_{i_k=j}\|_{\text{F}}=\sigma_j(\cm{\widehat{G}}_{(k)})=\sigma_j(\cm{\widehat{A}}_{(k)})$, and then
	\begin{equation}\begin{split}
			\|\cm{\widehat{G}}_{i_k>r_k}\|_{\text{F}}^2=\sum_{i=r_{k+1}}^{p_k}\sigma_i^2(\cm{\widehat{A}}_{(k)})&=\sum_{i=r_{k+1}}^{p_k}[\sigma_i(\cm{\widehat{A}}_{(k)})-\sigma_i(\cm{A}^*_{(k)})]^2\\
			&\leq\sum_{i=1}^{p_k}[\sigma_i(\cm{\widehat{A}}_{(k)})-\sigma_i(\cm{A}^*_{(k)})]^2\leq\|\cm{\widehat{A}}-\cm{A}^*\|_{\text{F}}^2,
	\end{split}\end{equation} 
	where the last inequality follows from \eqref{eq:mirsky}.
	
	Finally, on the event $E$, $\|\cm{\widetilde{A}}-\cm{A}^*\|_{\text{F}}\leq\|\cm{\widetilde{A}}-\cm{\widehat{A}}^*\|_{\text{F}}+\|\cm{\widehat{A}}-\cm{A}^*\|_{\text{F}}\leq(1+\sqrt{2d})\|\cm{\widehat{A}}-\cm{A}^*\|_{\text{F}}$, where $d$ is fixed. Note  that Theorem \ref{thm:SSN} implies the asymptotic rate $\|\cm{\widehat{A}}-\cm{A}^*\|_{\text{F}}=O_p(\sqrt{s_0p/T})$ and the first part of this proof shows that $\mathbb{P}(E)\to1$, as $T\to\infty$. The proof is complete.	
\end{proof}

\subsection{Proofs of Lemmas \ref{lemma:ApproxCone}--\ref{lemma:RSC} \label{subsec:lemma}}

\begin{proof}[Proof of Lemma \ref{lemma:ApproxCone}]
	
	In this part, we focus on $\cm{\widehat{A}}_{\text{SSN}}$ and simplify it to $\cm{\widehat{A}}$. The tuning parameter $\lambda_{\text{SSN}}$ is simplified to $\lambda$. The proof can be readily extended to $\cm{\widehat{A}}_{\text{SN}}$ and $\cm{\widehat{A}}_{\text{MN}}$.
	
	Note that the quadratic loss function can be rewritten as $\mathcal{L}_T(\cm{A})=T^{-1}\sum_{t=1}^{T}\|\cm{Y}_t-\langle\cm{A},\cm{Y}_{t-1}\rangle\|_\text{F}^2=T^{-1}\sum_{t=1}^T\|\bm{y}_t-\cm{A}_{[S_2]}\bm{y}_{t-1}\|_2^2$, where $\bm{y}_t=\text{vec}(\cm{Y}_t)$.	
	By the optimality of the SSN estimator,
	\begin{equation}\begin{split}
			&\frac{1}{T}\sum_{t=1}^T\|\bm{y}_t-\cm{\widehat{A}}_{[S_2]}\bm{y}_{t-1}\|_2^2+\lambda\|\cm{\widehat{A}}\|_{\text{SSN}}\leq\frac{1}{T}\sum_{t=1}^T\|\bm{y}_t-\cm{A}^*_{[S_2]}\bm{y}_{t-1}\|_2^2+\lambda\|\cm{A}^*\|_{\text{SSN}}\\
			\Rightarrow~&\frac{1}{T}\sum_{t=1}^T\|\bm{\Delta}_{[S_2]}\bm{y}_{t-1}\|_2^2\leq\frac{2}{T}\sum_{t=1}^T\langle\bm{e}_t,\bm{\Delta}_{[S_2]}\bm{y}_{t-1}\rangle+\lambda(\|\cm{A}^*\|_{\text{SSN}}-\|\cm{\widehat{A}}\|_{\text{SSN}})\\
			\Rightarrow~&\frac{1}{T}\sum_{t=1}^T\|\bm{\Delta}_{[S_2]}\bm{y}_{t-1}\|_2^2\leq2\left\langle T^{-1}\sum_{t=1}^T\cm{Y}_{t-1}\circ\cm{E}_t,\bm{\Delta}\right\rangle+\lambda(\|\cm{A}^*\|_{\text{SSN}}-\|\cm{\widehat{A}}\|_{\text{SSN}})\\
			\Rightarrow~&\frac{1}{T}\sum_{t=1}^T\|\bm{\Delta}_{[S_2]}\bm{y}_{t-1}\|_2^2\leq2\|\bm{\Delta}\|_{\text{SSN}}\left|\left|T^{-1}\sum_{t=1}^T\cm{Y}_{t-1}\circ\cm{E}_t\right|\right|_{\text{SSN}^*}+\lambda(\|\cm{A}^*\|_{\text{SSN}}-\|\cm{\widehat{A}}\|_{\text{SSN}}),
	\end{split}\end{equation}
	where $\|\cdot\|_{\text{SSN}^*}$ refers to the dual norm of the SSN norm.
	
	By triangle inequality and decomposability, we have
	\begin{equation}\begin{split}
			&\|\cm{\widehat{A}}\|_{\text{SSN}}-\|\cm{A}^*\|_{\text{SSN}}=\|\cm{A}+\bm{\Delta}\|_{\text{SSN}}-\|\cm{A}^*\|_{\text{SSN}}=\sum_{k=1}^{2^{d-1}}\|\cm{A}^*_{[I_k]}+\bm{\Delta}_{[I_k]}\|_\textup{nuc}-\sum_{k=1}^{2^{d-1}}\|\cm{A}^*_{[I_k]}\|_\textup{nuc}\\
			=&\sum_{k=1}^{2^{d-1}}\|\cm{A}_{\mathcal{M}}^{*(k)}+\cm{A}_{\mathcal{M}^\perp}^{*(k)}+\bm{\Delta}_{\overline{\mathcal{M}}}^{(k)}+\bm{\Delta}_{\overline{\mathcal{M}}^\perp}^{(k)}\|_\textup{nuc}-\sum_{k=1}^{2^{d-1}}\|\cm{A}^*_{[I_k]}\|_\textup{nuc}\\
			\geq&\sum_{k=1}^{2^{d-1}}\left[\|\cm{A}_{\mathcal{M}}^{*(k)}+\bm{\Delta}_{\overline{\mathcal{M}}^\perp}^{(k)}\|_\textup{nuc}-\|\cm{A}_{\mathcal{M}^\perp}^{*(k)}+\bm{\Delta}_{\overline{\mathcal{M}}}^{(k)}\|_\textup{nuc}-\|\cm{A}^{*(k)}_{\mathcal{M}^\perp}\|_\textup{nuc}-\|\cm{A}^{*(k)}_{\mathcal{M}}\|_\textup{nuc}\right]\\
			\geq&\sum_{k=1}^{2^{d-1}}\left[\|\bm{\Delta}_{\overline{\mathcal{M}}^\perp}^{(k)}\|_\textup{nuc}-2\|\cm{A}_{\mathcal{M}^\perp}^{*(k)}\|_\textup{nuc}-\|\bm{\Delta}_{\overline{\mathcal{M}}}^{(k)}\|_\textup{nuc}\right].
	\end{split}\end{equation}
	
	If $\lambda\geq 4\|T^{-1}\sum_{t=1}^T\cm{Y}_{t-1}\circ\cm{E}_t\|_{\text{SSN}^*}$, we have
	\begin{equation}\begin{split}
			0&\leq\frac{1}{T}\sum_{t=1}^T\|\bm{\Delta}_{[S_2]}\bm{y}_{t-1}\|_2^2
			\leq\frac{\lambda}{2}\|\bm{\Delta}\|_{\text{SSN}}-\lambda(\|\cm{\widehat{A}}\|_{\text{SSN}}-\|\cm{A}^*\|_{\text{SSN}})\\
			&\leq\frac{\lambda}{2}\sum_{k=1}^{2^{d-1}}\left[\|\bm{\Delta}^{(k)}_{\overline{\mathcal{M}}}\|_\textup{nuc}+\|\bm{\Delta}^{(k)}_{\overline{\mathcal{M}}^\perp}\|_\textup{nuc}-2\|\bm{\Delta}^{(k)}_{\overline{\mathcal{M}}^\perp}\|_\textup{nuc}+4\|\cm{A}^{*(k)}_{\mathcal{M}^\perp}\|_\textup{nuc}+2\|\bm{\Delta}_{\overline{\mathcal{M}}}^{(k)}\|_\textup{nuc}\right]\\
			&=\frac{\lambda}{2}\sum_{k=1}^{2^{d-1}}\left[3\|\bm{\Delta}_{\overline{\mathcal{M}}}^{(k)}\|_\textup{nuc}+4\|\cm{A}^{*(k)}_{\mathcal{M}^\perp}\|_\textup{nuc}-\|\bm{\Delta}^{(k)}_{\overline{\mathcal{M}}^\perp}\|_\textup{nuc}\right].
	\end{split}\end{equation}
	Hence, the error $\bm{\Delta}$ lies in the restricted error set $\mathbb{C}_{\text{SSN}}(\overline{\mathcal{M}})$.
\end{proof}

\begin{proof}[Proof of Lemma \ref{lemma:errorbound}]
	
	Similar to Lemma \ref{lemma:ApproxCone}, we focus on the SSN estimator, and the results for SN and MN estimators can be extended in a similar way.
	
	Note that $T^{-1}\sum_{t=1}^T\|\langle\bm{\Delta},\cm{Y}_{t-1}\rangle\|_\text{F}^2=T^{-1}\sum_{t=1}^T\|\bm{\Delta}_{[S_2]}\bm{y}_{t-1}\|_2^2$.
	Following the proof of Lemma \ref{lemma:ApproxCone}, $\bm{\Delta}\in\mathbb{C}_{\text{SSN}}(\overline{\mathcal{M}})$ and
	\begin{equation}\begin{split}
			\frac{1}{T}\sum_{t=1}^T\|\langle\bm{\Delta},\cm{Y}_{t-1}\rangle\|_\text{F}^2&\leq\frac{\lambda}{2}\|\bm{\Delta}\|_{\text{SSN}}+\lambda(\|\cm{A}^*\|_{\text{SSN}}-\|\cm{\widehat{A}}\|_{\text{SSN}})\leq\frac{3\lambda}{2}\|\bm{\Delta}\|_{\text{SSN}}\\
			&=\frac{3\lambda}{2}\sum_{k=1}^{2^{d-1}}\|\bm{\Delta}_{[I_k]}\|_\textup{nuc}\leq\frac{3\lambda}{2}\sum_{k=1}^{2^{d-1}}\left(\|\bm{\Delta}_{\overline{\mathcal{M}}}^{(k)}\|_\textup{nuc}+\|\bm{\Delta}_{\overline{\mathcal{M}}^\perp}^{(k)}\|_\textup{nuc}\right)\\
			&\leq6\lambda\sum_{k=1}^{2^{d-1}}\|\bm{\Delta}_{\overline{\mathcal{M}}}^{(k)}\|_\textup{nuc}+6\lambda\sum_{k=1}^{2^{d-1}}\|\cm{A}^{*(k)}_{\mathcal{M}^\perp}\|_\textup{nuc}\\
			&\leq6\lambda\sum_{k=1}^{2^{d-1}}\sqrt{2s_k}\|\bm{\Delta}_{\overline{\mathcal{M}}}^{(k)}\|_{\text{F}}+6\lambda\sum_{k=1}^{2^{d-1}}\|\cm{A}^{*(k)}_{\mathcal{M}^\perp}\|_\textup{nuc}\\
			&\lesssim\lambda\sum_{k=1}^{2^{d-1}}\sqrt{2s_k}\|\bm{\Delta}\|_{\text{F}}+6\lambda\sum_{k=1}^{2^{d-1}}\|\cm{A}^{*(k)}_{\mathcal{M}^\perp}\|_\textup{nuc}
	\end{split}\end{equation}
	where the last inequality stems from the fact that $\bm{\Delta}_{\overline{\mathcal{M}}}^{(k)}$ has a matrix rank at most $2s_k$, similar to Lemma 1 in \citet{negahban2011estimation}.
	
	As the RSC condition holds with the parameter $\alpha_{\text{RSC}}$ and restricted error set $\mathbb{C}_{\text{SSN}}(\overline{\mathcal{M}})$,
	\begin{equation}
		\alpha_{\text{RSC}}\|\bm{\Delta}\|_{\text{F}}^2\leq	\frac{1}{T}\sum_{t=1}^T\|\langle\bm{\Delta},\cm{Y}_{t-1}\rangle\|_\text{F}^2\lesssim \lambda\sum_{k=1}^{2^{d-1}}\sqrt{s_k}\|\bm{\Delta}\|_\text{F}+\lambda\sum_{k=1}^{2^{d-1}}\|\cm{A}^{*(k)}_{\mathcal{M}^\perp}\|_\text{nuc}.
	\end{equation}
	Thus, by the Cauchy--Schwarz inequality,
	\begin{equation}
		\begin{split}
			\|\bm{\Delta}\|_\text{F}^2\lesssim\frac{\lambda^2(\sum_{k=1}^{2^{d-1}}\sqrt{s_k})^2}{\alpha_{\text{RSC}}^2}+\frac{\lambda\sum_{k=1}^{2^{d-1}}\|\cm{A}^{*(k)}_{\mathcal{M}^\perp}\|_\textup{nuc}}{\alpha_{\text{RSC}}}
			\lesssim\frac{\lambda^22^{d-1}\sum_{k=1}^{2^{d-1}}s_k}{\alpha_{\text{RSC}}^2}+\frac{\lambda\sum_{k=1}^{2^{d-1}}\|\cm{A}^{*(k)}_{\mathcal{M}^\perp}\|_\textup{nuc}}{\alpha_{\text{RSC}}}.
		\end{split}
	\end{equation}	
	
	Consider any threhold $\tau_k\geq0$ and define the thresholded subspace $\mathcal{M}^{(k)}$ corresponding to the column and row spaces spanned by the first $r^{(k)}$ singular vectors of $\cm{A}_{[I_k]}$ where $\sigma_1(\cm{A}^*_{[I_k]})\geq\cdots\geq\sigma_{r^{(k)}}(\cm{A}^*_{[I_k]})> \tau_k\geq\sigma_{r^{(k)}+1}(\cm{A}^*_{[I_k]})$.
	By the definition of $\mathbb{B}_q(s_q^{(k)};p,p)$, we have $s_q^{(k)}\geq r^{(k)}\cdot\tau_k^q$ and thus $r^{(k)}\leq s_q^{(k)}\cdot\tau_k^{-q}$.
	
	Then, the approximation error can be bounded by
	\begin{equation}
		\|\cm{A}^{*(k)}_{\mathcal{M}^\perp}\|_\textup{nuc}=\sum_{r=r^{(k)}+1}^{p}\sigma_r(\cm{A}^*_{[I_k]})=\sum_{r=r^{(k)}+1}^{p}\sigma_r^q(\cm{A}^*_{[I_k]})\cdot\sigma_r^{1-q}(\cm{A}^*_{[I_k]})\leq s_q^{(k)}\cdot\tau_k^{1-q}.
	\end{equation}
	The estimation error can be bounded by
	\begin{equation}
		\|\bm{\Delta}\|_\text{F}^2\lesssim\frac{\lambda^22^{d-1}\sum_{k=1}^{2^{d-1}}s_q^{(k)}\cdot\tau_k^{-q}}{\alpha_{\text{RSC}}^2}+\frac{\lambda\sum_{k=1}^{2^{d-1}} s_q^{(k)}\cdot\tau_k^{1-q}}{\alpha_{\text{RSC}}}.
	\end{equation}
	Setting each $\tau_k\asymp\alpha_\text{RSC}^{-1}(q/(1-q))2^{d-1}\lambda$, the upper bound can be minimized to
	\begin{equation}
		\|\bm{\Delta}\|_\text{F}^2\lesssim 2^{1-d}\sum_{k=1}^{2^{d-1}}s_q^{(k)}\left(\frac{\lambda\cdot 2^{d-1}}{\alpha_{\text{RSC}}}\right)^{2-q}.
	\end{equation}
	The proof is complete.	
\end{proof}

\begin{proof}[Proof of Lemma \ref{lemma:dualnorm}]
	First, we derive an upper bound of the dual norm of the SSN norm. By definition, for any tensor $\cm{A}$ and collection of index sets $\mathbb{I}=\{I_1,\dots,I_{2^{d-1}}\}$, the SSN norm is
	\begin{equation}
		\|\cm{A}\|_{\text{SSN}}=\sum_{k=1}^{2^{d-1}}\|\cm{A}_{[I_k]}\|_\textup{nuc},
	\end{equation}
	and its dual norm is $\|\cm{A}\|_{\text{SSN}^*}:=\sup\langle\cm{W},\cm{A}\rangle$ such that $\|\cm{W}\|_{\text{SSN}}\leq1$. By a method similar to that in  \cite{tomioka2011statistical}, it can be shown that 
	\begin{equation}
		\|\cm{A}\|_{\text{SSN}^*}=\inf_{\sum_{k=1}^{2^{d-1}}\scalebox{0.7}{\cm{X}}_k=\scalebox{0.7}{\cm{A}}}\max_{k=1,\dots,2^{d-1}}\|(\cm{X}_k)_{[I_k]}\|_{\text{op}}.
	\end{equation}
	Then, we can take $\cm{X}_k=(\sum_{k=1}^{2^{d-1}}1/c_k)^{-1}(\cm{A}/c_k)$, where $c_k=\|\cm{A}_{[I_k]}\|_{\text{op}}$, and apply Jensen's inequality so that we have
	\begin{equation}
		\|\cm{A}\|_{\text{SSN}^*}\leq2^{-2(d-1)}\sum_{k=1}^{2^{d-1}}\|\cm{A}_{[I_k]}\|_{\text{op}}.
	\end{equation}
	Hence, we have
	\begin{equation}
		\left\|\frac{1}{T}\sum_{t=1}^T\cm{Y}_{t-1}\circ\cm{E}_t\right\|_{\text{SSN}^*}\leq\frac{1}{2^{2(d-1)}}\sum_{k=1}^{2^{d-1}}\left\|\frac{1}{T}\sum_{t=1}^T(\cm{Y}_{t-1}\circ\cm{E}_t)_{[I_k]}\right\|_{\text{op}}.
	\end{equation}
	In other words, the dual norm of the SSN norm can be upper bounded by the sum of the scaled matrix operator norms of different matricizations of the tensor $T^{-1}\sum_{t=1}^T\cm{Y}_{t-1}\circ\cm{E}_t$.
	
	All of the square matricizations based on $I_k$ lead to a square $p$-by-$p$ matrix. Therefore, by the deviation bound in Lemma \ref{lemma:deviation}, we can take a union bound such that
	\begin{equation}
		\left\|\frac{1}{T}\sum_{t=1}^T\cm{Y}_{t-1}\circ\cm{E}_t\right\|_{\text{SSN}^*}\leq \frac{C\kappa^2 M_1}{2^{d-1}}\sqrt{\frac{p}{T}}
	\end{equation}
	with probability at least $1-\exp[-C(p-d)]$.
	
	%	$\lambda_{\textup{SSN}}\gtrsim\kappa^2M_12^{1-d}\sqrt{p/T}$
	
	Next, for the SN estimator, we can obtain a similar upper bound of the dual norm of the SN norm. The SN norm is defined as
	\begin{equation}
		\|\cm{A}\|_{\text{SN}}=\sum_{i=1}^{2d}\|\cm{A}_{(i)}\|_\textup{nuc},
	\end{equation}
	and its dual norm has the equivalent form
	\begin{equation}
		\|\cm{A}\|_{\text{SN}^*}=\inf_{\sum_{i=1}^{2d}\scalebox{0.7}{\cm{Y}}_i=\scalebox{0.7}{\cm{A}}}\max_{i=1,\dots,2d}\|(\cm{Y}_i)_{(i)}\|_{\text{op}}.
	\end{equation}
	Then, we can obtain an upper bound,
	\begin{equation}
		\|\cm{A}\|_{\text{SN}^*}\leq\frac{1}{(2d)^2}\sum_{i=1}^{2d}\|\cm{A}_{(i)}\|_{\text{op}}.
	\end{equation}
	Then, for each one-mode matricization, we have the deviation bound. Then, we can take a union bound such that
	\begin{equation}
		\left\|\frac{1}{T}\sum_{t=1}^T\cm{Y}_{t-1}\circ\cm{E}_t\right\|_{\text{SN}^*}\leq \frac{C\kappa^2 M_1}{(2d)^2}\sum_{i=1}^{2d}\sqrt{\frac{p_{-i}p}{T}},
	\end{equation}
	with probability at least $1-2d\exp[-Cp]$.
	
	Finally, the MN estimator uses a special case of square matricization, and the upper bound for the MN estimator can be obtained by Lemma \ref{lemma:deviation}.
\end{proof}

\begin{proof}[Proof of Lemma \ref{lemma:RSC}]
	For any $\bm{M}\in\mathbb{R}^{m\times p}$, denote $R_T(\bm{M})=\sum_{t=0}^{T-1}\|\bm{M}\bm{y}_t\|_2^2$. Note that $R_T(\bm{\Delta}_{[S_2]})\geq\mathbb{E}R_T(\bm{\Delta}_{[S_2]})-\sup_{\bm{\Delta}}|R_T(\bm{\Delta}_{[S_2]})-\mathbb{E}R_T(\bm{\Delta}_{[S_2]})|$. 	Following the proof of Lemma \ref{lemma:deviation}, $\mathbb{E}R_T(\bm{\Delta}_{[S_2]})=\|(\bm{I}_T\otimes\bm{\Delta}_{[S_2]})\bm{PD}\|_{\text{F}}^2\geq T\|\bm{\Delta}\|_{\text{F}}^2\cdot\lambda_{\min}(\bm{\Sigma_e})\lambda_{\min}(\bm{PP}^\top)$.
	
	Similar to Lemma \ref{lemma:deviation2}, for any $\bm{v}\in\mathbb{S}^{p-1}$ and any $t>0$,
	\begin{equation}
		\begin{split}
			&\mathbb{P}[|R_T(\bm{v}^\top)-\mathbb{E}R_T(\bm{v}^\top)|\geq t]\\
			\leq& 2\exp\left(-\min\left(\frac{t^2}{\kappa^4T\lambda_{\max}^2(\bm{\Sigma_e})\lambda_{\max}^2(\bm{P}\bm{P}^\top)},\frac{t}{\kappa^2\lambda_{\max}(\bm{\Sigma_e})\lambda_{\max}(\bm{P}\bm{P}^\top)}\right)\right).
		\end{split}
	\end{equation}
	Considering an $\epsilon$-covering net of $\mathbb{S}^{p-1}$, by Lemma \ref{lemma:covering}, we can easily construct the union bound for $T\gtrsim p$,
	\begin{equation}\begin{split}
			&\mathbb{P}\left[\sup_{v\in\mathbb{S}^{p-1}}|R_T(\bm{v}^\top)-\mathbb{E}R_T(\bm{v}^\top)|\geq t\right]\\
			\leq& C\exp\left(p-\min\left(\frac{t^2}{\kappa^4T\lambda_{\max}^2(\bm{\Sigma_e})\lambda_{\max}^2(\bm{P}\bm{P}^\top)},\frac{t}{\kappa^2\lambda_{\max}(\bm{\Sigma_e})\lambda_{\max}(\bm{P}\bm{P}^\top)}\right)\right),
	\end{split}\end{equation}
	
	Letting $t=\lambda_{\min}(\bm{\Sigma_e})\lambda_{\min}(\bm{PP}^\top)/2$, for $T\gtrsim M_2^{-2}\max(\kappa^4,\kappa^2)p$, we have
	\begin{equation}
		\mathbb{P}[|R_T(\bm{v}^\top)-\mathbb{E}R_T(\bm{v}^\top)|\geq \lambda_{\min}(\bm{\Sigma_e})\lambda_{\min}(\bm{PP}^\top)/2]\leq 2\exp(-CM_2^2\min(\kappa^{-4},\kappa^{-2})T),
	\end{equation}
	where $M_2=[\lambda_{\min}(\bm{\Sigma_e})\lambda_{\min}(\bm{PP}^\top)]/[\lambda_{\max}(\bm{\Sigma_e})\lambda_{\max}(\bm{PP}^\top)]$.
	
	Therefore, with probability at least $1-2\exp(-CM_2^2\min(\kappa^{-4},\kappa^{-2})T)$,
	\begin{equation}
		R_T(\bm{\Delta}_{[S_2]})\geq\frac{1}{2}\lambda_{\min}(\bm{\Sigma_e})\lambda_{\min}(\bm{PP}^\top)\|\bm{\Delta}\|_{\text{F}}^2.
	\end{equation}
	
	Finally, since $\bm{P}$ is related to the VMA($\infty$) process, by the spectral measure of ARMA process discussed in \citet{basu2015regularized}, we may replace $\lambda_{\max}(\bm{P}\bm{P}^\top)$ and $\lambda_{\min}(\bm{P}\bm{P}^\top)$ with $1/\mu_{\min}(\mathcal{A})$ and $1/\mu_{\max}(\mathcal{A})$, respectively.
\end{proof}

\subsection{Three Auxiliary Lemmas \label{subsec:auxlemma}}
Three auxiliary lemmas used in the proofs of Lemmas  \ref{lemma:dualnorm} and \ref{lemma:RSC} are presented below.

\begin{lemma}[Deviation bound on different matricizations]
	\label{lemma:deviation}
	For any index set $I\subset\{1,2,\dots,2d\}$, denote $q=\prod_{i=1,i\in I}^{2d}p_i$ and $q'=\prod_{i=1,i\notin I}^{2d}p_i$. If $T\gtrsim(q+q')$, with probability at least $1-\exp[-C(q+q')]$,
	\begin{equation}
		\left\|\frac{1}{T}\sum_{t=1}^T(\cm{Y}_{t-1}\circ\cm{E}_t)_{[I]}\right\|_{\textup{op}}<C\kappa^2M_1\sqrt{(q+q')/T}.
	\end{equation}
	where $M_1=\lambda_{\max}(\bm{\Sigma_e})/\mu^{1/2}_{\min}(\mathcal{A})$.
\end{lemma}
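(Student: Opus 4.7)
The plan is to combine a standard $\varepsilon$-net discretization with a martingale-based sub-Gaussian tail bound that exploits the innovation structure of the process, and then control the (random) quadratic variation via the VMA$(\infty)$ representation of $\bm{y}_t$. First, I would choose a $1/4$-net $\mathcal{N}_q$ of $\mathbb{S}^{q-1}$ and a $1/4$-net $\mathcal{N}_{q'}$ of $\mathbb{S}^{q'-1}$, with $|\mathcal{N}_q|\le 9^q$ and $|\mathcal{N}_{q'}|\le 9^{q'}$, so that it suffices to bound $\bm{u}^\top[T^{-1}\sum_t(\cm{Y}_{t-1}\circ\cm{E}_t)_{[I]}]\bm{v}$ uniformly over $(\bm{u},\bm{v})\in\mathcal{N}_q\times\mathcal{N}_{q'}$, losing only a universal constant in the operator-norm estimate.

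For fixed $\bm{u},\bm{v}$, set $\xi_t=\bm{u}^\top(\cm{Y}_{t-1}\circ\cm{E}_t)_{[I]}\bm{v}$. Rewriting this bilinear form using the Kronecker/permutation identity for matricizations of outer products yields $\xi_t=\bm{\ell}_t(\bm{u},\bm{v})^\top\bm{e}_t$, where $\bm{\ell}_t(\bm{u},\bm{v})$ is a linear functional of $\textup{vec}(\cm{Y}_{t-1})$ and hence $\mathcal{F}_{t-1}$-measurable for $\mathcal{F}_{t-1}=\sigma(\cm{E}_s,s\le t-1)$. Since $\bm{e}_t=\bm{\Sigma}_{\bm{e}}^{1/2}\bm{\xi}_t$ is independent of $\mathcal{F}_{t-1}$ with mean zero, $\{\xi_t\}$ is a martingale difference, and Assumption~\ref{asmp:gaussian} gives the conditional sub-Gaussian bound $\mathbb{E}[e^{\mu\xi_t}\mid\mathcal{F}_{t-1}]\le\exp(\tfrac12\kappa^2\mu^2\sigma_t^2)$ with $\sigma_t^2=\|\bm{\Sigma}_{\bm{e}}^{1/2}\bm{\ell}_t(\bm{u},\bm{v})\|_2^2$. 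Iterating the tower bound and applying a Chernoff argument then yields, for any $V>0$,
\begin{equation*}
\mathbb{P}\!\left(\Bigl|\tfrac{1}{T}\sum_{t=1}^T\xi_t\Bigr|>x,\ \sum_{t=1}^T\sigma_t^2\le V\right)\le 2\exp\!\left(-\frac{T^2x^2}{2\kappa^2V}\right).
\end{equation*}

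The remaining task is to bound $V_T(\bm{u},\bm{v}):=\sum_{t=1}^T\sigma_t^2$ with high probability uniformly in $(\bm{u},\bm{v})$. Using the VMA$(\infty)$ representation $\bm{y}_t=\sum_{s\ge 0}\bm{\Phi}_s\bm{e}_{t-s}$, the random variable $V_T(\bm{u},\bm{v})$ is a quadratic form in the $i.i.d.$ sub-Gaussian innovations $\{\bm{e}_s\}_{s\le T-1}$. Its expectation is a linear functional of the spectral density of the process, so by the spectral-radius characterization in \citet{basu2015regularized} one gets $\mathbb{E}V_T(\bm{u},\bm{v})\lesssim T\,\lambda_{\max}^2(\bm{\Sigma}_{\bm{e}})/\mu_{\min}(\mathcal{A})=TM_1^2$. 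A Hanson--Wright-type concentration inequality for sub-Gaussian quadratic forms (the same tool underlying Lemma~\ref{lemma:RSC} and the auxiliary Lemma~\ref{lemma:deviation2}) then gives $V_T(\bm{u},\bm{v})\lesssim T M_1^2$ with probability at least $1-\exp(-C(q+q'))$ as soon as $T\gtrsim q+q'$.

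Plugging $V\asymp TM_1^2$ into the martingale bound and choosing $x\asymp \kappa^2 M_1\sqrt{(q+q')/T}$ gives a per-$(\bm{u},\bm{v})$ failure probability $\le 2\exp(-C(q+q'))$; a union bound over $\mathcal{N}_q\times\mathcal{N}_{q'}$, whose cardinality is at most $e^{C(q+q')}$, absorbs the discretization cost and yields the stated deviation bound. The main obstacle is the uniform control of the random quadratic variation $V_T(\bm{u},\bm{v})$: this is where the sub-Gaussian (rather than Gaussian) relaxation genuinely bites, since one can no longer appeal to rotational invariance and must instead combine a careful Hanson--Wright argument with the block-Toeplitz spectral-density bound $\|\sum_{s\ge 0}\bm{\Phi}_s z^{-s}\|\le\mu_{\min}^{-1/2}(\mathcal{A})$, uniformly over the $\varepsilon$-net.
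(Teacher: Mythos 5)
Your proposal follows essentially the same route as the paper's proof: reduce the operator norm to a net (you net the two unit spheres separately, the paper nets rank-one unit-Frobenius matrices — an immaterial difference), treat $\langle\bm{e}_t,\bm{M}\bm{y}_{t-1}\rangle$ as a conditionally sub-Gaussian martingale difference and run the self-normalized Chernoff/tower argument, control the quadratic variation via the VMA$(\infty)$ representation and Hanson--Wright (the paper's Lemma \ref{lemma:deviation2}), and finish with a union bound using $T\gtrsim q+q'$ and the spectral bound $\lambda_{\max}(\bm{P}\bm{P}^\top)\le 1/\mu_{\min}(\mathcal{A})$. The argument is correct and matches the paper's in all essential steps.
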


\begin{proof}
	For any index set $I\subset\{1,2,\dots,2d\}$ and $2d$th-mode tensor $\cm{T}$, denote the inverse operation of the multi-mode matricization $\bm{T}=\cm{T}_{[I]}$ by $\bm{T}^{[I]}=\cm{T}$. Denote $\mathcal{W}(r;q,q')=\{\bm{W}\in\mathbb{R}^{q\times q'}:\textup{rank}(\bm{W})=r,~\|\bm{W}\|_\textup{F}=1\}$.
	
	By definition, $\|T^{-1}\sum_{t=1}^T(\cm{Y}_{t-1}\circ\cm{E}_t)_{[I]}\|_{\textup{op}}=\sup_{\bm{W}\in\mathcal{W}(1;q,q')}\langle T^{-1}\sum_{t=1}^T(\cm{Y}_{t-1}\circ\cm{E}_t)_{[I]},\bm{W}\rangle=\sup_{\bm{W}\in\mathcal{W}(1;q,q')}\langle T^{-1}\sum_{t=1}^T\textup{vec}(\cm{E}_{t})\textup{vec}(\cm{Y}_{t-1})^\top,(\bm{W}^{[I]})_{[S_1]}^\top\rangle$.
	
	For an arbitrary matrix $\bm{W}\in\mathbb{R}^{q\times q'}$ such that $\|\bm{W}\|_\textup{F}=1$, denote $\bm{M}=(\bm{W}^{[I]})_{[S_1]}^\top$. Then, one can easily check that $\langle(\cm{Y}_{t-1}\circ\cm{E}_t)_{[I]},\bm{W}\rangle=\langle \bm{e}_t,\bm{M}\bm{y}_{t-1}\rangle$.
	
	For a fixed $\bm{M}$, denote $S_t(\bm{M})=\sum_{s=1}^t\langle\bm{e}_s,\bm{M}\bm{y}_{s-1}\rangle$ and $R_t(\bm{M})=\sum_{s=0}^{t-1}\|\bm{M}\bm{y}_s\|_2^2$, for $1\leq t\leq T$. By the standard Chernoff argument, for any $\alpha>0$, $\beta>0$ and $c>0$,
	\begin{equation}
		\begin{split}
			&\mathbb{P}[\{S_T(\bm{M})\geq\alpha\}\cap\{R_T(\bm{M})\leq\beta\}]\\
			=&\inf_{m>0}\mathbb{P}[\{\exp(m S_T(\bm{M}))\geq\exp(m\alpha)\}\cap\{R_T(\bm{M})\leq\beta\}]\\
			=&\inf_{m>0}\mathbb{P}[\exp(m S_T(\bm{M}))\mathbb{I}(R_T(\bm{M})\leq\beta)\geq\exp(m\alpha)]\\
			\leq&\inf_{m>0}\exp(-m\alpha)\mathbb{E}[\exp(m S_T(\bm{M}))\mathbb{I}(R_T(\bm{M})\leq\beta)]\\
			=&\inf_{m>0}\exp(-m\alpha+cm^2\beta)\mathbb{E}[\exp(m S_T(\bm{M})-cm^2\beta)\mathbb{I}(R_T(\bm{M})\leq\beta)]\\
			\leq&\inf_{m>0}\exp(-m\alpha+cm^2\beta)\mathbb{E}[\exp(m S_T(\bm{M})-cm^2R_T(\bm{M}))].
		\end{split}
	\end{equation}
	By the tower rule, we have
	\begin{equation}
		\begin{split}
			&\mathbb{E}[\exp(mS_T(\bm{M})-cm^2R_T(\bm{M}))]\\
			=&\mathbb{E}[\mathbb{E}[\exp(m S_T(\bm{M})-cm^2R_T(\bm{M}))]|\mathcal{F}_{T-1}]\\
			=&\mathbb{E}[\exp(m S_{T-1}(\bm{M})-cm^2R_{T-1}(\bm{M}))\mathbb{E}[\exp(m\langle\bm{e}_T,\bm{M}\bm{y}_{T-1}\rangle-cm^2\|\bm{M}\bm{y}_T\|_2^2)|\mathcal{F}_{T-1}]].
		\end{split}
	\end{equation}
	
	Since $\langle\bm{e}_T,\bm{M}\bm{y}_{T-1}\rangle=\langle\bm{\xi}_T,\bm{\Sigma}_{\bm{e}}^{1/2}\bm{M}\bm{y}_{T-1}\rangle$, one can easily check that $\langle\bm{e}_T,\bm{M}\bm{y}_{T-1}\rangle$ is a $\kappa^2\lambda_{\max}(\bm{\Sigma}_{\bm{e}})\|\bm{M}\bm{y}_{T-1}\|_2^2$-sub-Gaussian random variable. In other words, $\mathbb{E}[\exp(m\langle\bm{e}_T,\bm{M}\bm{y}_{T-1}\rangle)]\leq\exp(m^2\kappa^2\lambda_{\max}(\bm{\Sigma}_{\bm{e}})\|\bm{M}\bm{y}_{T-1}\|_2^2/2)$.
	Thus, letting $c=\kappa\lambda_{\max}(\bm{\Sigma}_{\bm{e}})/2$, we have
	\begin{equation}
		\begin{split}
			&\mathbb{E}[\exp(mS_T(\bm{M})-m^2\kappa^2\lambda_{\max}(\bm{\Sigma}_{\bm{e}})R_T(\bm{M})/2)]\\
			\leq&\mathbb{E}[\exp(mS_{T-1}(\bm{M})-m^2\kappa^2\lambda_{\max}(\bm{\Sigma}_{\bm{e}})R_{T-1}(\bm{M})/2)]\\
			\leq&\cdots\leq\mathbb{E}[\exp(mS_1(\bm{M})-m^2\kappa^2\lambda_{\max}(\bm{\Sigma}_{\bm{e}})R_1(\bm{M})/2)]\leq1.
		\end{split}
	\end{equation}
	Hence, we have that, for any $\alpha>0$ and $\beta>0$,
	\begin{equation}
		\label{eq:martingale_deviation}
		\begin{split}
			&\mathbb{P}[\{S_T(\bm{M})\geq\alpha\}\cap\{R_T(\bm{M})\leq\beta\}]\\
			\leq&\inf_{m>0}\exp(-m\alpha+m^2\kappa^2\lambda_{\max}(\bm{\Sigma}_{\bm{e}})\beta/2)\\
			=&\exp\left(-\frac{\alpha^2}{2\kappa^2\lambda_{\max}(\bm{\Sigma}_{\bm{e}})\beta}\right).
		\end{split}
	\end{equation}
	
	By Lemma \ref{lemma:deviation2}, we have that for any $t>0$,
	\begin{equation}
		\begin{split}
			&\mathbb{P}[|R_T(\bm{M})-\mathbb{E}R_T(\bm{M})|\geq t]\\
			\leq&2\exp\left(-\min\left(\frac{t^2}{\kappa^4T\lambda_{\max}^2(\bm{\Sigma_e})\lambda_{\max}^2(\bm{PP}^\top)},\frac{t}{\kappa^2\lambda_{\max}^2(\bm{\Sigma_e})\lambda_{\max}^2(\bm{PP}^\top)}\right)\right).
		\end{split}
	\end{equation}
	In addition, $\mathbb{E}R_T(\bm{M})=\text{tr}(\bm{\Sigma_M})=\|(\bm{I}_T\otimes\bm{M})\bm{P}\bm{D}\|_{\text{F}}^2\leq T\cdot\lambda_{\max}(\bm{\Sigma_e})\lambda_{\max}(\bm{P}\bm{P}^\top)$. Letting $t=C\kappa^2T\lambda_{\max}(\bm{\Sigma_e})\lambda_{\max}(\bm{P}\bm{P}^\top)$, we have
	\begin{equation}
		\mathbb{P}[R_T(\bm{M})\geq C\kappa^2T\lambda_{\max}(\bm{\Sigma_e})\lambda_{\max}(\bm{P}\bm{P}^\top)]\leq2\exp(-CT).
	\end{equation}
	
	Next, consider a $\epsilon$-net $\overline{\mathcal{W}}(1;q,q')$ for $\mathcal{W}(1;q,q')$. For any matrix $\bm{W}\in\mathcal{W}(1;q,q')$, there exist a matrix $\overline{\bm{W}}\in\overline{\mathcal{W}}(1;q,q')$ such that $\|\bm{W}-\overline{\bm{W}}\|_\textup{F}\leq\epsilon$. Since the rank of $\overline{\bm{\Delta}}=\bm{W}-\overline{\bm{W}}$ is at most 2, we can split the SVD of $\overline{\bm{\Delta}}$ into 2 parts, such that $\overline{\bm{\Delta}}=\bm{\Delta}_1+\bm{\Delta}_2$, where $\textup{rank}(\bm{\Delta}_1)=\textup{rank}(\bm{\Delta}_2)=1$ and $\langle\bm{\Delta}_1,\bm{\Delta}_2\rangle=0$. Then, for any matrix $\bm{N}\in\mathbb{R}^{q\times q'}$, we have
	\begin{equation}
		\langle\bm{N},\bm{W}\rangle=\langle\bm{N},\overline{\bm{W}}\rangle+\langle\bm{N},\overline{\bm{\Delta}}\rangle=\langle\bm{N},\overline{\bm{W}}\rangle+\sum_{i=1}^2\langle\bm{N},\bm{\Delta}_i	/\|\bm{\Delta}_i\|_\textup{F}\rangle\|\bm{\Delta}_i\|_\textup{F},
	\end{equation}
	where $\bm{\Delta}_i/\|\bm{\Delta}_i\|_{\textup{F}}\in\mathcal{W}(1;q,q')$. Since $\|\overline{\bm{\Delta}}\|_\textup{F}^2=\|\bm{\Delta}_1\|_\textup{F}^2+\|\bm{\Delta}_2\|_\textup{F}^2$, by Cauchy inequality, $\|\bm{\Delta}_1\|_\textup{F}+\|\bm{\Delta}_2\|_\textup{F}\leq\sqrt{2}\|\overline{\bm{\Delta}}\|_\text{F}=\sqrt{2}\epsilon$. Hence, we have
	\begin{equation}
		\gamma:=\sup_{\bm{W}\in\mathcal{W}(1;q,q')}\langle\bm{N},\bm{W}\rangle\leq\max_{\overline{\bm{W}}\in\overline{\mathcal{W}}(1;q,q')}\langle\bm{N},\overline{\bm{W}}\rangle+\sqrt{2}\gamma\epsilon.
	\end{equation}
	In other words,
	\begin{equation}
		\sup_{\bm{W}\in\mathcal{W}(1;q,q')}\langle\bm{N},\bm{W}\rangle\leq(1-\sqrt{2}\epsilon)^{-1}\max_{\overline{\bm{W}}\in\overline{\mathcal{W}}(1;q,q')}\langle\bm{N},\overline{\bm{W}}\rangle.
	\end{equation}
	Therefore, we have that, for any $x>0$,
	\begin{equation}
		\label{eq:sup_covering}
		\begin{split}
			&\mathbb{P}\left[\sup_{\bm{W}\in\mathcal{W}(1;q,q')}\left\langle\frac{1}{T}\sum_{t=1}^T(\cm{Y}_{t-1}\circ\cm{E}_t)_{[I]},\bm{W}\right\rangle\geq x\right]\\
			\leq&\mathbb{P}\left[\max_{\bm{W}\in\overline{\mathcal{W}}(1;q,q')}\left\langle\frac{1}{T}\sum_{t=1}^T(\cm{Y}_{t-1}\circ\cm{E}_t)_{[I]},\bm{W}\right\rangle\geq (1-\sqrt{2}\epsilon)x\right]\\
			\leq&|\overline{\mathcal{W}}(1;q,q')|\cdot\mathbb{P}\left[\left\langle\frac{1}{T}\sum_{t=1}^T(\cm{Y}_{t-1}\circ\cm{E}_t)_{[I]},\bm{W}\right\rangle\geq (1-\sqrt{2}\epsilon)x\right].
		\end{split}
	\end{equation}
	Note that by \eqref{eq:martingale_deviation}, for any $x>0$,
	\begin{equation}
		\begin{split}
			&\mathbb{P}\left[\left\langle\frac{1}{T}\sum_{t=1}^T(\cm{Y}_{t-1}\circ\cm{E}_t)_{[I]},\bm{W}\right\rangle\geq (1-\sqrt{2}\epsilon)x\right]\\
			\leq&\mathbb{P}[\{S_T(\bm{M})\geq T(1-\sqrt{2}\epsilon)x\}\cap\{R_T(\bm{M})\leq C\kappa^2 T\lambda_{\max}(\bm{\Sigma_e})\lambda_{\max}(\bm{P}\bm{P}^\top)\}]\\
			+&\mathbb{P}[R_T(\bm{M})>C\kappa^2T\lambda_{\max}(\bm{\Sigma_e})\lambda_{\max}(\bm{P}\bm{P}^\top)]\\
			\leq&\exp\left[-\frac{CTx^2}{\kappa^4\lambda_{\max}^2(\bm{\Sigma}_{\bm{e}})\lambda_{\max}(\bm{P}\bm{P}^\top)}\right]+2\exp(-CT).
		\end{split}
	\end{equation}
	By Lemma 3.1 in \citet{candes2011tight}, for a $\epsilon$-net for $\mathcal{W}(1;q,q')$, the covering number $|\overline{\mathcal{W}}(1;q,q')|\leq(9/\epsilon)^{q+q'}$. Combining \eqref{eq:sup_covering}, we have that, when $T\gtrsim q+q'$, for any $x>0$,
	\begin{equation}
		\begin{split}
			&\mathbb{P}\left[\sup_{\bm{W}\in\mathcal{W}(1;q,q')}\left\langle\frac{1}{T}\sum_{t=1}^T(\cm{Y}_{t-1}\circ\cm{E}_t)_{[I]},\bm{W}\right\rangle\geq x\right]\\
			\leq&\exp\left[(q+q')\log(9/\epsilon)-\frac{CTx^2}{\kappa^4\lambda_{\max}^2(\bm{\Sigma}_{\bm{e}})\lambda_{\max}(\bm{P}\bm{P}^\top)}\right]+2\exp[(q+q')\log(9/\epsilon)-CT].
		\end{split}
	\end{equation}
	Taking $\epsilon=0.1$ and $x=C\kappa^2\lambda_{\max}(\bm{\Sigma}_{\bm{e}})\lambda_{\max}^{1/2}(\bm{P}\bm{P}^\top)\cdot\sqrt{(q+q')/T}$, we have
	\begin{equation}\begin{split}
			&\mathbb{P}\left[\sup_{\bm{W}\in\mathcal{W}(1;q,q')}\left\langle\frac{1}{T}\sum_{t=1}^T(\cm{Y}_{t-1}\circ\cm{E}_t)_{[I]},\bm{W}\right\rangle\geq C\kappa^2\lambda_{\max}(\bm{\Sigma_e})\lambda_{\max}^{1/2}(\bm{P}\bm{P}^\top)\sqrt{\frac{q+q'}{T}}\right]\\
			\leq&\exp[-C(q+q')].
	\end{split}\end{equation}	
	Finally, since $\bm{P}$ is related to the VMA($\infty$) process, by the spectral measure of ARMA process discussed in \citet{basu2015regularized}, we may replace $\lambda_{\max}(\bm{P}\bm{P}^\top)$ with $1/\mu_{\min}(\mathcal{A})$.
\end{proof}

\begin{lemma}
	\label{lemma:deviation2}
	Suppose we simplify the notation of $\bm{A}^*$ to $\bm{A}$. For any $\bm{M}\in\mathbb{R}^{p\times p}$ such that $\|\bm{M}\|_{\textup{F}}=1$, denote $R_T(\bm{M})=\sum_{t=0}^{T-1}\|\bm{M}\bm{y}_t\|_2^2$. Then, for any $t>0$,
	\begin{equation}
		\begin{split}
			&\mathbb{P}[|R_T(\bm{M})-\mathbb{E}R_T(\bm{M})|\geq t]\\
			\leq&2\exp\left(-\min\left(\frac{t^2}{\kappa^4T\lambda_{\max}^2(\bm{\Sigma_e})\lambda_{\max}^2(\bm{PP}^\top)},\frac{t}{\kappa^2\lambda_{\max}^2(\bm{\Sigma_e})\lambda_{\max}^2(\bm{PP}^\top)}\right)\right),
		\end{split}
	\end{equation}
	where $\bm{P}$ is defined as
	\begin{equation}
		\label{eq:P}
		\bm{P}=\begin{bmatrix}
			\bm{I}_p & \bm{A} & \bm{A}^2 & \bm{A}^3 & \dots & \bm{A}^{T-1} & \dots\\
			\bm{O} & \bm{I}_p & \bm{A} & \bm{A}^2 & \dots & \bm{A}^{T-2} & \dots\\
			\vdots & \vdots & \vdots & \vdots & \ddots & \vdots & \dots \\
			\bm{O} & \bm{O} & \bm{O} & \bm{O} & \dots & \bm{I}_p & \dots
		\end{bmatrix}.
	\end{equation}
	
\end{lemma}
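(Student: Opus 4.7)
The plan is to express $R_T(\bm{M})$ as a quadratic form in a vector with independent sub-Gaussian coordinates, and then invoke the Hanson--Wright inequality.

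First, by Assumption \ref{asmp:stationary}, the VAR(1) process admits the VMA($\infty$) representation $\bm{y}_t = \sum_{k=0}^{\infty}\bm{A}^k \bm{e}_{t-k}$. Stacking $\bm{y}_0,\dots,\bm{y}_{T-1}$ into the vector $\bm{Y}_T=(\bm{y}_0^\top,\dots,\bm{y}_{T-1}^\top)^\top\in\mathbb{R}^{pT}$, we can write $\bm{Y}_T = \bm{P}\bm{\mathcal{E}}$, where $\bm{\mathcal{E}}=(\bm{e}_0^\top,\bm{e}_{-1}^\top,\bm{e}_{-2}^\top,\dots)^\top$ collects the driving innovations and $\bm{P}$ is precisely the block matrix displayed in \eqref{eq:P}. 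By Assumption \ref{asmp:gaussian}, $\bm{\mathcal{E}}=\bm{D}\tilde{\bm{\xi}}$, where $\bm{D}$ is the block-diagonal matrix with blocks $\bm{\Sigma}_{\bm{e}}^{1/2}$ and $\tilde{\bm{\xi}}$ stacks i.i.d.\ $\kappa^2$-sub-Gaussian coordinates with identity covariance.

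Next, since $R_T(\bm{M})=\bm{Y}_T^\top(\bm{I}_T\otimes \bm{M}^\top\bm{M})\bm{Y}_T$, substituting the representation above yields $R_T(\bm{M})=\tilde{\bm{\xi}}^\top \bm{B}\tilde{\bm{\xi}}$ with
\begin{equation}
\bm{B}=\bm{D}^\top\bm{P}^\top(\bm{I}_T\otimes \bm{M}^\top\bm{M})\bm{P}\bm{D}.
\end{equation}
The Hanson--Wright inequality (see, e.g., Theorem 1.1 of Rudelson and Vershynin) gives, for any $t>0$,
\begin{equation}
\mathbb{P}\bigl[|\tilde{\bm{\xi}}^\top \bm{B}\tilde{\bm{\xi}} - \mathbb{E}\tilde{\bm{\xi}}^\top \bm{B}\tilde{\bm{\xi}}|\geq t\bigr]\leq 2\exp\left(-c\min\left(\frac{t^2}{\kappa^4\|\bm{B}\|_{\textup{F}}^2},\frac{t}{\kappa^2\|\bm{B}\|_{\textup{op}}}\right)\right).
\end{equation}
The remaining task is thus to bound $\|\bm{B}\|_{\textup{op}}$ and $\|\bm{B}\|_{\textup{F}}^2$. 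For the operator norm, use submultiplicativity to obtain $\|\bm{B}\|_{\textup{op}}\leq \|\bm{M}\|_{\textup{op}}^2\,\|\bm{P}\bm{P}^\top\|_{\textup{op}}\,\|\bm{\Sigma}_{\bm{e}}\|_{\textup{op}}\leq \lambda_{\max}(\bm{\Sigma}_{\bm{e}})\lambda_{\max}(\bm{P}\bm{P}^\top)$, where the normalization $\|\bm{M}\|_{\textup{F}}=1$ forces $\|\bm{M}\|_{\textup{op}}\leq 1$. For the Frobenius norm, since $\bm{B}$ is symmetric positive semidefinite,
\begin{equation}
\|\bm{B}\|_{\textup{F}}^2=\textup{tr}(\bm{B}^2)\leq \|\bm{B}\|_{\textup{op}}\,\textup{tr}(\bm{B})=\|\bm{B}\|_{\textup{op}}\,\mathbb{E}R_T(\bm{M}),
\end{equation}
and the bound $\mathbb{E}R_T(\bm{M})\leq T\lambda_{\max}(\bm{\Sigma}_{\bm{e}})\lambda_{\max}(\bm{P}\bm{P}^\top)$ (already used in the proof of Lemma \ref{lemma:RSC}) then gives $\|\bm{B}\|_{\textup{F}}^2\leq T\lambda_{\max}^2(\bm{\Sigma}_{\bm{e}})\lambda_{\max}^2(\bm{P}\bm{P}^\top)$. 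Plugging these bounds into the Hanson--Wright estimate yields the claim.

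The main delicacy is the bookkeeping around the infinite-dimensional objects $\bm{P}$ and $\tilde{\bm{\xi}}$: I would justify the representation by truncating the VMA expansion at lag $K$, applying Hanson--Wright to the resulting finite-dimensional quadratic form (the constants depend only on $\|\bm{B}_K\|_{\textup{op}}$ and $\|\bm{B}_K\|_{\textup{F}}$, both of which are uniformly bounded by the limiting expressions above thanks to $\mu_{\min}(\mathcal{A})>0$), and then taking $K\to\infty$ using the $L^2$ convergence of the truncated series. Everything else is routine linear algebra.
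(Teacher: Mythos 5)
Your proposal is correct and follows essentially the same route as the paper: write $R_T(\bm{M})$ as a quadratic form $\bm{\xi}^\top\bm{D}\bm{P}^\top(\bm{I}_T\otimes\bm{M}^\top\bm{M})\bm{P}\bm{D}\bm{\xi}$ in the i.i.d.\ sub-Gaussian innovations via the VMA($\infty$) representation, apply Hanson--Wright, and bound the operator and Frobenius norms of the middle matrix by $\lambda_{\max}(\bm{\Sigma}_{\bm{e}})\lambda_{\max}(\bm{P}\bm{P}^\top)$ and $T\lambda_{\max}^2(\bm{\Sigma}_{\bm{e}})\lambda_{\max}^2(\bm{P}\bm{P}^\top)$, respectively. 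Your trace-based bound $\|\bm{B}\|_{\textup{F}}^2\leq\|\bm{B}\|_{\textup{op}}\,\textup{tr}(\bm{B})$ and the explicit truncation argument for the infinite-dimensional quadratic form are minor refinements of the same argument, not a different approach.
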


\begin{proof}
	Let $\bm{y}=(\bm{y}_{T-1}^\top,\bm{y}_{T-2}^\top,\dots,\bm{y}_0^\top)^\top$, $\bm{e}=(\bm{e}_{T-1}^\top,\bm{e}_{T-2}^\top,\dots,\bm{e}_0^\top,\dots)^\top$, and $\bm{\xi}=(\bm{\xi}_{T-1}^\top,\bm{\xi}_{T-2}^\top,\dots,\bm{\xi}_0^\top,\dots)^\top$. Based on the moving average representation of VAR(1), we can rewrite $\bm{y}_t$ to a VMA($\infty$), $\bm{y}_t=\bm{e}_t+\bm{A}\bm{e}_{t-1}+\bm{A}^2\bm{e}_{t-2}+\bm{A}^3\bm{e}_{t-2}+\cdots$. Note that $R_T(\bm{M})=\bm{y}^\top(\bm{I}_T\otimes \bm{M}^\top\bm{M})\bm{y}=\bm{e}^\top\bm{P}^\top(\bm{I}_T\otimes\bm{M}^\top\bm{M})\bm{P}\bm{e}=\bm{\xi}^\top\bm{D}\bm{P}^\top(\bm{I}_T\otimes\bm{M}^\top\bm{M})\bm{P}\bm{D}\bm{\xi}:=\bm{\xi}^\top\bm{\Sigma_M}\bm{\xi}$, where $P$ is defined in \eqref{eq:P} and
	\begin{equation}
		\bm{D}=\begin{bmatrix}
			\bm{\Sigma}_{\bm{e}}^{1/2} & \bm{O} & \bm{O} & \dots\\
			\bm{O} & \bm{\Sigma}_{\bm{e}}^{1/2} & \bm{O} & \dots\\
			\bm{O} & \bm{O} & \bm{\Sigma}_{\bm{e}}^{1/2} & \dots\\
			\vdots & \vdots & \vdots & \ddots
		\end{bmatrix}.
	\end{equation}
	
	By Hanson-Wright inequality, for any $t>0$,
	\begin{equation}
		\mathbb{P}[|R_T(\bm{M})-\mathbb{E}R_T(\bm{M})|\geq t]\leq 2\exp\left(-\min\left(\frac{t^2}{\kappa^4\|\bm{\Sigma_M}\|_{\text{F}}^2},\frac{t}{\kappa^2\|\bm{\Sigma_M}\|_{\text{op}}}\right)\right).
	\end{equation}
	
	As $\|\bm{M}\|_{\text{F}}=1$, by the submultiplicative property of the Frobenius norm and operator norm, we have $\|\bm{\Sigma_M}\|_{\textup{F}}^2\leq T\cdot\lambda_{\max}^2(\bm{\Sigma_e})\lambda_{\max}^2(\bm{P}\bm{P}^\top)$ and $\|\bm{\Sigma_M}\|_{\text{op}}\leq\lambda_{\max}(\bm{\Sigma_e})\lambda_{\max}(\bm{P}\bm{P}^\top)$. These imply that, for any $t>0$,
	\begin{equation}
		\begin{split}
			&\mathbb{P}[|R_T(\bm{M})-\mathbb{E}R_T(\bm{M})|\geq t]\\
			\leq& 2\exp\left(-\min\left(\frac{t^2}{\kappa^4T\lambda_{\max}^2(\bm{\Sigma_e})\lambda_{\max}^2(\bm{P}\bm{P}^\top)},\frac{t}{\kappa^2\lambda_{\max}(\bm{\Sigma_e})\lambda_{\max}(\bm{P}\bm{P}^\top)}\right)\right).
		\end{split}
	\end{equation}
	The proof of this lemma is accomplished.	
\end{proof}

\begin{lemma} \label{lemma:covering}
	(Covering number of unit sphere) Let $\mathcal{N}$ be an $\varepsilon$-net of the unit sphere $\mathbb{S}^{p-1}$, where $\varepsilon\in(0,1]$. Then,
	\begin{equation}
		|\mathcal{N}|\leq\left(\frac{3}{\varepsilon}\right)^p.
	\end{equation}
\end{lemma}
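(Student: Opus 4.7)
The plan is a standard volumetric (packing-to-covering) argument. Rather than controlling an arbitrary $\varepsilon$-net directly, I would first produce a specific good $\mathcal{N}$ by taking it to be a \emph{maximal} $\varepsilon$-separated subset of $\mathbb{S}^{p-1}$; such a set is automatically an $\varepsilon$-net by maximality (otherwise, a missed point could be adjoined, contradicting maximality). Since the statement is invoked in the proofs above only to supply a bound on the minimal covering number (for union-bound purposes), this construction suffices.

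First, I would observe that if $\mathcal{N}$ is $\varepsilon$-separated, then the open Euclidean balls $B(x,\varepsilon/2)$ of radius $\varepsilon/2$ centered at the points $x\in\mathcal{N}$ are pairwise disjoint: any two such balls sharing a point would put their centers within distance $\varepsilon$, violating $\varepsilon$-separation. Second, since every $x\in\mathcal{N}\subset\mathbb{S}^{p-1}$ satisfies $\|x\|_2=1$, the triangle inequality gives $B(x,\varepsilon/2)\subset B(0,1+\varepsilon/2)$, so all these disjoint small balls fit inside a single enlarged ball around the origin.

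Third, I would compare Lebesgue volumes. Using that the volume of a Euclidean ball in $\mathbb{R}^p$ scales as the $p$-th power of its radius, disjointness combined with the containment yields
\begin{equation*}
|\mathcal{N}|\cdot(\varepsilon/2)^p\,\mathrm{vol}(B(0,1))\;\leq\;(1+\varepsilon/2)^p\,\mathrm{vol}(B(0,1)),
\end{equation*}
so that $|\mathcal{N}|\leq(1+2/\varepsilon)^p$. Finally, the assumption $\varepsilon\in(0,1]$ gives $1\leq 1/\varepsilon$, hence $1+2/\varepsilon\leq 3/\varepsilon$, which delivers the claimed bound $|\mathcal{N}|\leq(3/\varepsilon)^p$. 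There is no real obstacle here; the argument is a textbook volume comparison, and the only subtlety is the standard reduction from covering to packing via the maximality construction.
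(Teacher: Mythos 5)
Your argument is correct and complete: the maximal-separated-set construction, the disjointness of the $\varepsilon/2$-balls, the containment in $B(0,1+\varepsilon/2)$, and the volume comparison giving $(1+2/\varepsilon)^p\leq(3/\varepsilon)^p$ for $\varepsilon\in(0,1]$ are all sound, and you rightly note that the bound should be read as applying to a minimal (or suitably constructed) net rather than an arbitrary one. The paper itself does not write out a proof but simply cites Corollary 4.2.13 of Vershynin (2018), whose proof is exactly this volumetric packing-to-covering argument, so your proposal is in substance the same approach, just made self-contained.
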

\begin{proof}
	This lemma follows directly from Corollary 4.2.13 of \cite{Vershynin2018}.
\end{proof}

%\newpage
\section{Proofs for Non-Convex Estimation}\label{append:OR}

In this Appendix, we present the theoretical analysis for the NC estimation approach. In Appendix \ref{sec:B.1}, we present the proof of Theorem \ref{thm:comp_conv} by providing the local convergence guarantees for the algorithm. The optimization error of the gradient descent iteratives are shown to converge linearly to zero, and the dominating term is the statistical error rate after a sufficient number of iterations. In Appendix \ref{sec:B.2}, we present the statistical convergence analysis. To be specific, we prove the stochastic RSC, RSS and deviation conditions in the analysis of the local convergence analysis. In Appendix \ref{sec:B.3}, we present the theoretical justification of the ridge-type ratio estimator for rank selection. Auxiliary lemmas are presented at the end of each appendix.

\subsection{Proof of Local Convergence Analysis}\label{sec:B.1}

\begin{proof}[Proof of Theorem \ref{thm:comp_conv}]
	
	The proof generally follows that of Theorem 3.1 of \cite{han2020optimal}. We devide the proof into four steps. For brevity, we focus on the case with $d=3$, and the results can be readily extended to the general case of $d>3$. In Step 1, we introduce the notations and conditions. In Steps 2 and 3, we present some intermediate results. Finally, in Step 4, we present the local convergence results and verify the conditions imposed in Step 1.\\
	
	\noindent\textit{Step 1 (Notations and Conditions)}
	
	We first introduce some notations used in the proof. Let $\cm{A}^*=[\![\cm{G}^*;\bm{U}_1^*,\dots,\bm{U}_{6}^*]\!]$ such that $\bm{U}^{*\top}\bm{U}_i^*=\bm{I}_{r_i}$. For each step $s=0,1,\dots,I$, we define
	\begin{equation}
		\begin{split}
			E^{(s)} & = \min_{\bm{R}_i\in\mathbb{O}_{r_i}}\left\{\sum_{i=1}^{6}\left\|\bm{U}_i^{(s)}-\bm{U}_i^*\bm{R}_i\right\|_\text{F}^2+\left\|\cm{G}^{(s)}-[\![\cm{G}^*;\bm{R}_1^\top,\dots,\bm{R}_{6}^\top]\!]\right\|_\text{F}^2\right\},\\
			(\bm{R}_1^{(s)},\dots,\bm{R}_{6}^{(s)}) & = {\arg\min}_{\bm{R}_i\in\mathbb{O}_{r_i}}\left\{\sum_{i=1}^{6}\left\|\bm{U}_i^{(s)}-\bm{U}_i^*\bm{R}_i\right\|_\text{F}^2+\left\|\cm{G}^{(s)}-[\![\cm{G}^*;\bm{R}_1^\top,\dots,\bm{R}_{6}^\top]\!]\right\|_\text{F}^2\right\},\\
			\bm{V}_i^{(s)} & = \left(\otimes_{j\neq i}\bm{U}_j^{(s)}\right)\cm{G}^{(s)\top}_{(i)}~\text{and}~\cm{A}^{(s)} = [\![\cm{G}^{(s)};\bm{U}_1^{(s)},\dots,\bm{U}_{6}^{(s)}]\!].
		\end{split}
	\end{equation}
	For any $s=0,1,\dots,I$, we have
	\begin{equation}
		\|\bm{U}_i^{(s)}\|_\text{op}\leq 1.01b,~\|\cm{G}^{(s)}_{(i)}\|_\text{op}\leq 1.01b,~\text{for}~i=1,\dots,2d,~\text{and}~E^{(s)}\leq c_0\alpha\beta^{-1}\rho^{-2}.
	\end{equation}
	
	By definition, for the given sample size $T$, $\overline{\mathcal{L}}$ is restricted strongly convex (RSC) with parameter $\alpha$ and restricted strongly smooth (RSS) with parameter $\beta$, such that for any tensors $\cm{A}_1,\cm{A}_2\in\mathbb{R}^{p_1\times\dots\times p_d\times p_1\times\cdots\times p_d}$ with Tucker ranks $(r_1,\dots,r_{2d})$
	\begin{equation}
		\frac{\alpha}{2}\|\cm{A}_1-\cm{A}_2\|_\text{F}^2\leq\overline{\mathcal{L}}(\cm{A}_1)-\overline{\mathcal{L}}(\cm{A}_2)-\langle\nabla\overline{\mathcal{L}}(\cm{A}_2),\cm{A}_1-\cm{A}_2\rangle\leq\frac{\beta}{2}\|\cm{A}_1-\cm{A}_2\|_\text{F}^2.
	\end{equation}
	The $\alpha$-RSC condition implies that
	\begin{equation}
		\overline{\mathcal{L}}(\cm{A}_1)\geq\overline{\mathcal{L}}(\cm{A}_2)+\langle\nabla\overline{\mathcal{L}}(\cm{A}_1),\cm{A}_1-\cm{A}_2\rangle+\frac{\alpha}{2}\|\cm{A}_1-\cm{A}_2\|_\text{F}^2,
	\end{equation}
	and as in \cite{nesterov2003introductory}, the convexity and $\beta$-RSS condition jointly imply that
	\begin{equation}
		\overline{\mathcal{L}}(\cm{A}_2)-\overline{\mathcal{L}}(\cm{A}_1)\geq\langle\nabla\overline{\mathcal{L}}(\cm{A}_1),\cm{A}_2-\cm{A}_1\rangle+\frac{1}{2\beta}\|\nabla\mathcal{L}(\cm{A}_2)-\nabla\mathcal{L}(\cm{A}_1)\|_\text{F}^2.
	\end{equation}
	Combining these two inequalities, we have that
	\begin{equation}\label{eq:RSC}
		\langle\nabla\overline{\mathcal{L}}(\cm{A}_1)-\nabla\overline{\mathcal{L}}(\cm{A}_2),\cm{A}_1-\cm{A}_2\rangle\geq\frac{\alpha}{2}\|\cm{A}_1-\cm{A}_2\|_\text{F}^2+\frac{1}{2\beta}\|\nabla\overline{\mathcal{L}}(\cm{A}_1)-\nabla\overline{\mathcal{L}}(\cm{A}_2)\|_\text{F}^2,
	\end{equation}
	which is also known as the restricted correlated gradient condition in \cite{han2020optimal}. Additonally, by definition, we immediately have $\alpha\leq\beta$.\\
	
	\noindent\textit{Step 2. (Upper bound of $E^{(s+1)}-E^{(s)}$)}
	
	By definition,
	\begin{equation}
		\begin{split}
			E^{(s+1)} & = \sum_{i=1}^{6}\left\|\bm{U}_i^{(s+1)}-\bm{U}_i^*\bm{R}_i^{(s+1)}\right\|_\text{F}^2+\left\|\cm{G}^{(s)}-[\![\cm{G}^*;\bm{R}_1^{(s+1)\top},\dots,\bm{R}_{6}^{(s+1)\top}]\!]\right\|_\text{F}^2\\
			& \leq \sum_{i=1}^{6}\left\|\bm{U}_i^{(s+1)}-\bm{U}_i^*\bm{R}_i^{(s)}\right\|_\text{F}^2 + \left\|\cm{G}^{(s)}-[\![\cm{G}^*;\bm{R}_1^{(s)\top},\dots,\bm{R}_{6}^{(s)\top}]\!]\right\|_\text{F}^2.
		\end{split}
	\end{equation}
	Note that $\bm{U}_i^{(s+1)}=\bm{U}_i^{(s)}-\eta\nabla_{\bm{U}_i}\mathcal{L}^{(s)}-\eta a\bm{U}_i^{(s)}(\bm{U}_i^{(s)\top}\bm{U}_i^{(s)}-b^2\bm{I}_{r_i})$, for $i=1,\dots,2d$. Then,
	\begin{equation}
		\begin{split}
			& \|\bm{U}_i^{(s+1)}-\bm{U}_i^*\bm{R}_i^{(s)}\|_\text{F}^2\\
			= & \|\bm{U}_i^{(s)}-\bm{U}_i^*\bm{R}_i^{(s)}\|_\text{F}^2 + \eta^2\underbrace{\|\nabla\overline{\mathcal{L}}(\cm{A}^{(s)})_{(i)}\bm{V}_i^{(s)}+a\bm{U}_i^{(s)}(\bm{U}_i^{(s)\top}\bm{U}_i^{(s)}-b^2\bm{I}_{r_i})\|_\text{F}^2}_{T_{1i}}\\
			& -2\eta\underbrace{\left\langle\bm{U}_i^{(s)}-\bm{U}_i^*\bm{R}_i^{(s)},\nabla\overline{\mathcal{L}}(\cm{A}^{(s)})_{(i)}\bm{V}_i^{(s)}\right\rangle}_{T_{2i}}-2\eta a\underbrace{\left\langle\bm{U}_i^{(s)}-\bm{U}_i^*\bm{R}_i^{(s)},\bm{U}_i^{(s)}(\bm{U}_i^{(s)\top}\bm{U}_i^{(s)}-b^2\bm{I}_{r_i})\right\rangle}_{T_{3i}}.
		\end{split}
	\end{equation}
	
	For $T_{1i}$, $i=1,\dots,6$,
	\begin{equation}
		\begin{split}
			T_{1i} & \leq 2\|\nabla\overline{\mathcal{L}}(\cm{A}^{(s)})_{(i)}\bm{V}_i^{(s)}\|_\text{F}^2 +2a^2\|\bm{U}_i^{(s)}(\bm{U}_i^{(s)\top}\bm{U}_i^{(s)}-b^2\bm{I}_{r_i})\|_\text{F}^2\\
			& \leq 4\|\nabla\overline{\mathcal{L}}(\cm{A}^*)_{(i)}\bm{V}_i^{(s)}\|_\text{F}^2+4\|[\nabla\overline{\mathcal{L}}(\cm{A}^{(s)})-\nabla\overline{\mathcal{L}}(\cm{A}^*)]_{(i)}\bm{V}_i^{(s)}\|_\text{F}^2\\
			& + 2a^2\|\bm{U}_i^{(s)}\|_\text{op}^2\cdot\|\bm{U}_i^{(s)\top}\bm{U}_i^{(s)}-b^2\bm{I}_{r_i}\|_\text{F}^2\\
			& \leq 4\|\cm{G}^{(s)}_{(i)}\|_\text{op}^2\cdot\|\otimes_{j=1,n\neq i}^{2d}\bm{U}_j^{(s)}\|_\text{op}^2\cdot\left(\xi^2 + \|\nabla\overline{\mathcal{L}}(\cm{A}^{(s)})-\nabla\overline{\mathcal{L}}(\cm{A}^*)\|_\text{F}^2\right)\\
			& + 2a^2\|\bm{U}_i^{(s)}\|_\text{op}^2\cdot\|\bm{U}_i^{(s)\top}\bm{U}_i^{(s)}-b^2\bm{I}_{r_i}\|_\text{F}^2\\
			& \leq 5\bar{\sigma}^2b^{-2}\left(\xi^2 + \|\nabla\overline{\mathcal{L}}(\cm{A}^{(s)})-\nabla\overline{\mathcal{L}}(\cm{A}^*)\|_\text{F}^2\right)+3a^2b^2\|\bm{U}_i^{(s)\top}\bm{U}_i^{(s)}-b^2\bm{I}_{r_i}\|_\text{F}^2:=Q_{i,2}.
		\end{split}
	\end{equation}
	
	For $T_{2i}$ and $T_{3i}$, $i=1,\dots,6$,
	\begin{equation}
		\begin{split}
			T_{2i} & = \left\langle\bm{U}_i^{(s)}\bm{V}_i^{(s)\top}-\bm{U}_i^*\bm{R}_i^{(s)}\bm{V}_i^{(s)\top},\nabla\overline{\mathcal{L}}(\cm{A}^{(s)})_{(i)}\right\rangle\\
			& = \left\langle \cm{A}^{(s)} - [\![\cm{G}^{(s)};\bm{U}_1^{(s)},\dots,\bm{U}_i^*\bm{R}_i^{(s)}\dots,\bm{U}_{2d}^{(s)},]\!],\nabla\overline{\mathcal{L}}(\cm{A}^{(s)})\right\rangle = \left\langle \cm{A}^{(s)} - \cm{A}_i^{(s)},\nabla\overline{\mathcal{L}}(\cm{A}^{(s)})\right\rangle
		\end{split}
	\end{equation}
	and
	\begin{equation}
		\begin{split}
			& T_{2i} + a\cdot T_{3i} = \left\langle \cm{A}^{(s)} - \cm{A}_i^{(s)},\nabla\overline{\mathcal{L}}(\cm{A}^{(s)})\right\rangle+\left\langle\bm{U}_i^{(s)}-\bm{U}_i^*\bm{R}_i^{(s)},a\bm{U}_i^{(s)}(\bm{U}_i^{(s)\top}\bm{U}_i^{(s)}-b^2\bm{I}_{r_i})\right\rangle\\
			= & \left\langle \cm{A}^{(s)} - \cm{A}_i^{(s)},\nabla\overline{\mathcal{L}}(\cm{A}^{(s)})\right\rangle+a\left\langle\bm{U}_i^{(s)\top}\bm{U}_i^{(s)}-\bm{U}_i^{(s)\top}\bm{U}_i^*\bm{R}_i^{(s)},\bm{U}_i^{(s)\top}\bm{U}_i^{(s)}-b^2\bm{I}_{r_i}\right\rangle\\
			\geq & \left\langle \cm{A}^{(s)} - \cm{A}_i^{(s)},\nabla\overline{\mathcal{L}}(\cm{A}^{(s)})\right\rangle + \frac{a}{4}\|\bm{U}_i^{(s)\top}\bm{U}_i^{(s)}-b^2\bm{I}_{r_i}\|_\text{F}^2-\frac{a}{4}E^{(s)}\|\bm{U}_i^{(s)}-\bm{U}_i^*\bm{R}_i^{(s)}\|_\text{F}^2:=Q_{i,1}.
		\end{split}
	\end{equation}
	Therefore, we have
	\begin{equation}
		\|\bm{U}_i^{(s+1)}-\bm{U}_i^*\bm{R}_i^{(s)}\|_\text{F}^2\\
		\leq \|\bm{U}_i^{(s)}-\bm{U}_i^*\bm{R}_i^{(s)}\|_\text{F}^2 -2\eta Q_{i,1} + \eta^2Q_{i,2}.
	\end{equation}
	
	Similarly, we can show that
	\begin{equation}
		\left\|\cm{G}^{(s+1)}-[\![\cm{G}^*;\bm{R}_1^{(s)\top},\dots,\bm{R}_{2d}^{(s)\top}]\!]\right\|_\text{F}^2 = \left\|\cm{G}^{(s)}-[\![\cm{G}^*;\bm{R}_1^{(s)\top},\dots,\bm{R}_{2d}^{(s)\top}]\!]\right\|_\text{F}^2 - 2\eta Q_{S,1} + \eta^2Q_{S,2}
	\end{equation}
	where
	$Q_{\scalebox{0.7}{\cm{G}},1}:=\left\langle\cm{A}^{(s)}-\cm{A}_{\scalebox{0.7}{\cm{G}}}^{(s)},\nabla\overline{\mathcal{L}}(\cm{A}^{(s)})\right\rangle$,
	$\cm{A}_{\scalebox{0.7}{\cm{G}}}^{(s)}=[\![\cm{G}^*;\bm{U}_1^{(s)}\bm{R}_1^{(s)\top},\dots,\bm{U}_6^{(s)}\bm{R}_{6}^{(s)\top}]\!]$, and
	$Q_{\scalebox{0.7}{\cm{G}},2}:=4b^{12}(\xi^2+\|\nabla\overline{\mathcal{L}}(\cm{A}^{(s)})-\nabla\overline{\mathcal{L}}(\cm{A}^*)\|_\text{F}^2)$.
	
	Therefore, we have
	\begin{equation}
		E^{(s+1)} \leq E^{(s)} - 2\eta\left(Q_{\scalebox{0.7}{\cm{G}},1}+\sum_{i=1}^{6}Q_{i,1}\right) + \eta^2\left(Q_{\scalebox{0.7}{\cm{G}},2}+\sum_{i=1}^{6}Q_{i,2}\right).
	\end{equation}~
	
	\noindent\textit{Step 3. Lower bound of $Q_{\scalebox{0.7}{\cm{G}},1}+\sum_{i=1}^{6}Q_{i,1}$}
	
	By definition, we have
	\begin{equation}\label{eq:Q1}
		\begin{split}
			Q_{\scalebox{0.7}{\cm{G}},1} + \sum_{i=1}^{6}Q_{i,1}
			=&\left\langle7\cm{A}^{(s)}-\cm{A}_{\scalebox{0.7}{\cm{G}}}^{(s)}-\sum_{i=1}^{6}\cm{A}_i^{(s)},\nabla\overline{\mathcal{L}}(\cm{A}^{(s)})\right\rangle\\
			&+a\sum_{i=1}^{6}\left(\frac{1}{4}\|\bm{U}_i^{(s)\top}\bm{U}_i^{(s)}-\bm{I}_{r_i}\|_\text{F}^2-\frac{1}{4}E^{(s)}\|\bm{U}_i^{(s)}-\bm{U}_i^*\bm{R}_i^{(s)}\|_\text{F}^2\right)\\
			=&\left\langle\cm{A}^{(s)}-\cm{A}^*+\cm{H}^{(s)},\nabla\overline{\mathcal{L}}(\cm{A}^{(s)})\right\rangle\\
			&+a\sum_{i=1}^{6}\left(\frac{1}{4}\|\bm{U}_i^{(s)\top}\bm{U}_i^{(s)}-\bm{I}_{r_i}\|_\text{F}^2-\frac{1}{4}E^{(s)}\|\bm{U}_i^{(s)}-\bm{U}_i^*\bm{R}_i^{(s)}\|_\text{F}^2\right),
		\end{split}
	\end{equation}
	where $\cm{H}^{(s)}=\cm{A}^*-\cm{A}_{\scalebox{0.7}{\cm{G}}}^{(s)}-\sum_{i=1}^{6}(\cm{A}_i^{(s)}-\cm{A}^{(s)})$. By Lemma \ref{lemma:tensor_perturbation},
	\begin{equation}
		\|\cm{H}^{(s)}\|_\text{F}\leq B_2B_3^3+6B_1B_2B_3^{5/2}+15B_1^2B_2B_3^2+20B_1^3B_2B_3^{3/2}+15B_1^4B_2B_3+6B_1^5B_3
	\end{equation}
	where
	\begin{equation}
		\begin{split}	
			B_1&:=\max_{1\leq i\leq 6}\{\|\bm{U}_i^{(s)}\|_\textup{op},\|\bm{U}_i^*\|_\textup{op}\},~B_2:=\max_{1\leq i\leq 6}\{\|\cm{G}^{(s)}_{(i)}\|_\textup{op},\|\cm{G}_{(i)}^*\|_\textup{op}\},\\
			B_3&:=\max_{1\leq i\leq 6}\{\|\cm{H}^{(s)}_{\scalebox{0.7}{\cm{G}}}\|_\textup{F}^2,\|\bm{H}^{(s)}_i\|_\textup{F}^2\}.
		\end{split}
	\end{equation}
	Since $B_1\leq 1.01$, $B_2\leq 1.01$ and $B_3\leq E^{(s)}\leq c_0\alpha\beta^{-1}\rho^{-2}$, we have $\|\cm{H}^{(s)}\|_\text{F}\leq \alpha\beta^{-1}E^{(s)}/4$.
	
	By \eqref{eq:RSC}, the first term on the right hand side of \eqref{eq:Q1} can be further bounded as
	\begin{equation}
		\begin{split}
			& \left\langle\cm{A}^{(s)}-\cm{A}^*+\cm{H}^{(s)},\nabla\overline{\mathcal{L}}(\cm{A}^{(s)})\right\rangle = \langle\cm{A}^{(s)}-\cm{A}^*,\nabla\overline{\mathcal{L}}(\cm{A}^{(s)})-\nabla\overline{\mathcal{L}}(\cm{A}^*)\rangle\\
			&+\left\langle\cm{A}^{(s)}-\cm{A}^*+\cm{H}^{(s)},\nabla\overline{\mathcal{L}}(\cm{A}^{*})\right\rangle + \left\langle\cm{H}^{(s)},\nabla\overline{\mathcal{L}}(\cm{A}^{(s)})-\nabla\overline{\mathcal{L}}(\cm{A}^{*})\right\rangle\\
			&\geq\frac{\alpha}{2}\|\cm{A}^{(i)}-\cm{A}^*\|_\text{F}^2+\frac{1}{2\beta}\|\nabla\overline{\mathcal{L}}(\cm{A}^{(i)})-\nabla\overline{\mathcal{L}}(\cm{A}^*)\|_\text{F}^2-\|\cm{H}^{(s)}\|_\text{F}\cdot\|\nabla\overline{\mathcal{L}}(\cm{A}^{(s)})-\nabla\overline{\mathcal{L}}(\cm{A}^*)\|_\text{F}\\
			&-\left|\left\langle\cm{A}^{(s)}-\cm{A}^*+\cm{H}^{(s)},\nabla\overline{\mathcal{L}}(\cm{A}^{*})\right\rangle\right|.
		\end{split}
	\end{equation}
	In addition, we have that for any $c_1>0$,
	\begin{equation}
		\begin{split}
			& \|\cm{H}^{(s)}\|_\text{F}\cdot\|\nabla\overline{\mathcal{L}}(\cm{A}^{(s)})-\nabla\overline{\mathcal{L}}(\cm{A}^*)\|_\text{F}\\
			\leq & \frac{1}{4\beta}\|\nabla\overline{\mathcal{L}}(\cm{A}^{(i)})-\nabla\overline{\mathcal{L}}(\cm{A}^*)\|_\text{F}^2+\beta\|\cm{H}^{(s)}\|_\text{F}^2\\
			\leq & \frac{1}{4\beta}\|\nabla\mathcal{L}(\cm{A}^{(i)})-\nabla\mathcal{L}(\cm{A}^*)\|_\text{F}^2+\frac{\alpha E^{(s)}}{4} 
		\end{split}
	\end{equation}
	and
	\begin{equation}
		\begin{split}
			& \left|\left\langle\cm{A}^{(s)}-\cm{A}^*+\cm{H}^{(s)},\nabla\overline{\mathcal{L}}(\cm{A}^{*})\right\rangle\right|\\
			=&\left|\left\langle7\cm{A}^{(s)}-\cm{A}_{\scalebox{0.7}{\cm{G}}}^{(s)}-\sum_{i=1}^6\cm{A}_i^{(s)},\nabla\overline{\mathcal{L}}(\cm{A}^*)\right\rangle\right|\\
			\leq&\left|\left\langle\cm{A}^{(s)}-\cm{A}_{\scalebox{0.7}{\cm{G}}}^{(s)},\nabla\overline{\mathcal{L}}(\cm{A}^*)\right\rangle\right|+\sum_{i=1}^6\left|\left\langle\cm{A}^{(s)}-\cm{A}_i^{(s)},\nabla\overline{\mathcal{L}}(\cm{A}^*)\right\rangle\right|\\
			\leq&\xi\cdot\|\cm{G}^{(s)}-[\![\cm{G}^*;\bm{R}_1^{(s)\top},\dots,\bm{R}_6^{(s)\top}]\!]\|_\text{F}\cdot\|\bm{U}_1^{(s)}\|_\text{op}\cdots\|\bm{U}_6^{(s)}\|_\text{op}\\
			+&\xi\cdot\sum_{i=1}^6\|\cm{G}^{(s)}_{(i)}\|_\text{op}\cdot\|\bm{U}_i^{(s)}-\bm{U}_i^*\bm{R}_i^{(s)}\|_\text{F}\cdot\|\otimes_{j\neq i}\bm{U}_j^{(s)}\|_\text{op}\\
			\leq&\xi\sqrt{E^{(s)}}\left(1.01^6\times7\right)\leq14c_1E^{(s)}+\frac{1}{4c_1}\xi^2,
		\end{split}
	\end{equation}
	for any $c_1>0$. Thus, we have
	\begin{equation}
		\begin{split}
			& \left\langle\cm{A}^{(s)}-\cm{A}^*+\cm{H}^{(s)},\nabla\overline{\mathcal{L}}(\cm{A}^{(s)})\right\rangle\\
			\geq & \frac{\alpha}{2}\|\cm{A}^{(i)}-\cm{A}^*\|_\text{F}^2+\frac{1}{4\beta}\|\nabla\overline{\mathcal{L}}(\cm{A}^{(i)})-\nabla\overline{\mathcal{L}}(\cm{A}^*)\|_\text{F}^2-(\alpha/4+14c_1)E^{(s)}-\frac{1}{4c_1}\xi^2.
		\end{split}
	\end{equation}
	Now, applying Lemma \ref{lemma:tensor_perturbation2}, we have
	\begin{equation}
		E^{(s)}\leq b^{-12}(64+24\underline{\sigma}^{-2}C_1)\|\cm{A}^{(s)}-\cm{A}^*\|_\text{F}^2+2C_1b^{-2}\sum_{i=1}^6\|\bm{U}_i^{(s)\top}\bm{U}_i^{(s)}-b^2\bm{I}_{r_i}\|_\text{F}^2.
	\end{equation}
	
	Combining these inequalities, by setting $c_1=\alpha b^{12}/112$, we have
	\begin{equation}
		\begin{split}
			& Q_{\scalebox{0.7}{\cm{G}},1} + \sum_{i=1}^{6}Q_{i,1}\\
			\geq & \left(\frac{\alpha b^{12}}{4}-14c_1\right)E^{(s)}+\frac{1}{4\beta}\|\nabla\overline{\mathcal{L}}(\cm{A}^{(s)})-\nabla\overline{\mathcal{L}}(\cm{A}^*)\|_\text{F}^2-\frac{1}{4c_1}\xi^2\\
			&+\frac{a}{8}\sum_{i=1}^6\|\bm{U}_i^{(s)\top}\bm{U}_i^{(s)}-b^2\bm{I}_{r_i}\|_\text{F}^2\\
			= & \frac{\alpha b^{12}}{8}E^{(s)}+\frac{1}{4\beta}\|\nabla\overline{\mathcal{L}}(\cm{A}^{(s)})-\nabla\overline{\mathcal{L}}(\cm{A}^*)\|_\text{F}^2-28\alpha^{-1}b^{12}\xi^2+\frac{a}{8}\sum_{i=1}^6\|\bm{U}_i^{(s)\top}\bm{U}_i^{(s)}-b^2\bm{I}_{r_i}\|_\text{F}^2.
		\end{split}
	\end{equation}~
	
	\noindent\textit{Step 4. Convergence analysis of $E^{(s)}$ and verification of conditions}
	
	In the following, we combine all the results in previous steps to establish the error bound for $E^{(s)}$ and $\|\cm{A}^{(s)}-\cm{A}^*\|_\text{F}$. Plugging in $b=\overline{\sigma}^{1/7}$, $a=C\alpha\rho^{-2}\overline{\sigma}$ to $Q_{\scalebox{0.7}{\cm{G}},2}$ and $Q_{i,2}$, we have
	\begin{equation}
		\begin{split}
			&Q_{\scalebox{0.7}{\cm{G}},2}+\sum_{i=1}^{6}Q_{i,2}\\
			\leq & 5\overline{\sigma}^{12/7}[\xi^2+\|\nabla\overline{\mathcal{L}}(\cm{A}^{(s)})-\nabla\overline{\mathcal{L}}(\cm{A}^*)\|_\text{F}^2]+\frac{C\alpha^2\overline{\sigma}^{16/7}}{\rho^4}\sum_{i=1}^6\|\bm{U}_i^{(s)\top}\bm{U}_i^{(s)}-b^2\bm{I}_{r_i}\|_\text{F}^2.
		\end{split}
	\end{equation}
	
	Combining the results above, as $\eta=\eta_0\beta^{-1}\rho^{-2}\overline{\sigma}^{-12/7}$ for any $\eta_0>1/25088$, we have
	\begin{equation}
		\begin{split}
			E^{(s+1)} \leq & \left(1-\frac{\alpha b^{-12}\eta}{8}\right)E^{(s)} \\
			+ & \left(184\overline{\sigma}^{12/7}-\frac{\eta}{2\beta}\right)\|\nabla\overline{\mathcal{L}}(\cm{A}^{(s)})-\nabla\overline{\mathcal{L}}(\cm{A}^*)\|_\text{F}^2
			+ \left(\frac{\eta}{2c_1}+C\overline{\sigma}_1^{12/7}\eta^2\right)\xi^2\\
			+ & \left(\frac{C\eta^2\alpha^2\overline{\sigma}}{\rho^4}-\frac{C\eta\alpha\overline{\sigma}^{6/7}}{\rho^2}\right)\sum_{i=1}^6\|\bm{U}_i^{(s)\top}\bm{U}_i^{(s)}-b^2\bm{I}_{r_i}\|_\text{F}^2\\
			\leq & (1-C\eta_0\alpha\beta^{-1}\rho^{-2})E^{(s)}+\rho^2\alpha^{-2}\xi^2
		\end{split}
	\end{equation}
	
	By induction, we have that for any $i=1,2,\dots$,
	\begin{equation}
		E^{(s)}\leq (1-C\eta_0\alpha\beta^{-1}\rho^{-2})^sE^{(0)}+C\rho^2\alpha^{-2}\overline{\sigma}^{12/7}\xi^2.
	\end{equation}
	For the error bound of $\|\cm{A}^{(s)}-\cm{A}^{(0)}\|_\text{F}$, by Lemma \ref{lemma:tensor_perturbation2},
	\begin{equation}\label{eq:recursive_A}
		\begin{split}
			&\|\cm{A}^{(s)}-\cm{A}^*\|_\text{F}^2\leq C\overline{\sigma}^{12/7}E^{(i)}\\
			\leq & C\overline{\sigma}^{12/7}(1-C\eta_0\alpha\beta^{-1}\rho^{-2})^iE^{(0)}+C\rho^2\alpha^{-2}\xi^2\\
			\leq & C\rho^2(1-C\eta_0\alpha\beta^{-1}\rho^{-2})^i\|\cm{A}^{(0)}-\cm{A}^*\|_\text{F}^2+C\rho^2\alpha^{-2}\xi^2.
		\end{split}
	\end{equation}
	
	Finally, we show that conditions $E^{(s)}\leq$ hold. Since $\bm{U}_i^{(0)\top}\bm{U}_i^{(0)}=b^2\bm{I}_{r_i}$ for $i=1,\dots,2d$, by Lemma \ref{lemma:tensor_perturbation2} and initialization bound $\|\cm{A}^{(0)}-\cm{A}^*\|_\text{F}\leq C\underline{\sigma}\alpha^{1/2}\beta^{-1/2}$, we have
	\begin{equation}
		E^{(0)}\leq (C\overline{\sigma}^{12/7}+C\underline{\sigma}^{-2})\|\cm{A}^{(0)}-\cm{A}^*\|_\text{F}^2\leq C\overline{\sigma}^{-5/7}\rho^2\|\cm{A}^{(0)}-\cm{A}^*\|_\text{F}^2\leq C\alpha\beta^{-1}\rho^{-2}.
	\end{equation}
	Based on the recursive relationship in \eqref{eq:recursive_A}, by induction it is easy to check that $E^{(s)}\leq C\overline{\sigma}^{2/7}\alpha\beta^{-1}\rho^{-2}$ for all $s\geq1$. In other words, as $\alpha\beta^{-1}\leq 1$ and $\rho^2\geq1$, we have $E^{(s)}\leq Cb^2$, which further implies that
	\begin{equation}
		\begin{split}
			\|\bm{U}_i^{(s)}\|_\text{op}&\leq\|\bm{U}_i^*\bm{O}_i^{(s)}\|_\text{op}+\|\bm{U}_i^{(s)}-\bm{U}_i^*\bm{O}_i^{(s)}\|_\text{op}\\
			&\leq b+\|\bm{U}_i^{(s)}-\bm{U}_i^*\bm{O}_i^{(s)}\|_\text{F}\leq(1+c_b)b
		\end{split}
	\end{equation}
	and for $i=1,\dots,2d$,
	\begin{equation}
		\begin{split}
			\|\cm{G}^{(s)}_{(i)}\|_\text{op}&\leq\|\bm{O}_1^{(s)\top}\cm{G}^*_{(i)}\otimes_{j\neq i}\bm{O}_j^{(s)}\|_\text{op} + \|\cm{G}^{(s)}_{(i)}-\bm{O}_1^{(s)\top}\cm{G}^*_{(i)}\otimes_{j\neq i}\bm{O}_j^{(s)}\|_\text{op}\\
			&\leq \overline{\sigma}b^{-6}+\|\cm{G}^{(s)}_{(i)}-\bm{O}_1^{(s)\top}\cm{G}^*_{(i)}\otimes_{j\neq i}\bm{O}_j^{(s)}\|_\text{F}\leq (1+c_b)b.
		\end{split}
	\end{equation}
\end{proof}

Next, we present some auxiliary lemmas for the proof of local convergence.

\begin{lemma}\label{lemma:tensor_perturbation}
	Consider the case of $d=3$. Suppose that $\cm{A}^*=[\![\cm{G}^*;\bm{U}_1^*,\dots,\bm{U}_6^*]\!]$ and $\cm{A}=[\![\cm{G};\bm{U}_1,\dots,\bm{U}_6]\!]$. For $\bm{R}_i\in\mathbb{O}_{r_i}$ and $i=1,\dots,6$, let
	\begin{equation}
		\begin{split}
			\cm{A}_i=&[\![\cm{G};\bm{U}_1,\dots,\bm{U}_{i-1},\bm{U}_i^*\bm{R}_i,\bm{U}_{i+1},\dots,\bm{U}_{6}]\!],~~\cm{A}_{\scalebox{0.7}{\cm{G}}}=[\![([\![\cm{G}^*;\bm{R}_1,\dots,\bm{R}_{6}]\!]);\bm{U}_1,\dots,\bm{U}_{6}]\!]\\
			\bm{H}_i=&\bm{U}_i^*-\bm{U}_i\bm{R}_i^\top,~~\cm{H}_{\scalebox{0.7}{\cm{G}}}=\cm{G}^*-[\![\cm{G};\bm{R}_1,\dots,\bm{R}_{6}]\!],~~\text{and}~~\cm{H}=\cm{A}^*-\cm{A}_{\scalebox{0.7}{\cm{G}}}-\sum_{i=1}^{6}(\cm{A}_i-\cm{A}).
		\end{split}
	\end{equation}
	Then, we have that
	\begin{equation}
		\|\cm{H}\|_\textup{F}\leq B_2B_3^3+6B_1B_2B_3^{5/2}+15B_1^2B_2B_3^2+20B_1^3B_2B_3^{3/2}+15B_1^4B_2B_3+6B_1^5B_3.
	\end{equation}
	where
	\begin{equation}
		\begin{split}	
			B_1&:=\max_{1\leq i\leq 6}\{\|\bm{U}_i\|_\textup{op},\|\bm{U}_i^*\|_\textup{op}\},~B_2:=\max_{1\leq i\leq 6}\{\|\cm{G}_{(i)}\|_\textup{op},\|\cm{G}_{(i)}^*\|_\textup{op}\},\\
			B_3&:=\max_{1\leq i\leq 6}\{\|\cm{H}_{\scalebox{0.7}{\cm{G}}}\|_\textup{F}^2,\|\bm{H}_i\|_\textup{F}^2\}.
		\end{split}
	\end{equation}
	
\end{lemma}

\begin{proof}
	Since $\cm{G}^*=[\![\cm{S};\bm{R}_1,\dots,\bm{R}_{6}]\!]+\cm{H}_{\scalebox{0.7}{\cm{G}}}$, we have
	\begin{equation}
		\cm{A}^*=[\![\cm{G};\bm{U}_1^*\bm{R}_1,\dots,\bm{U}_{6}^*\bm{R}_{6}]\!] + [\![\cm{H}_{\scalebox{0.7}{\cm{G}}};\bm{U}_1^*,\dots,\bm{U}_{6}^*]\!].
	\end{equation}
	For the first term on the right hand of the above equation, we have
	\begin{equation}
		\begin{split}
			& [\![\cm{G};\bm{U}_1+\bm{H}_1\bm{R}_1,\dots,\bm{U}_{6}+\bm{H}_{6}\bm{R}_{6}]\!]\\
			=&\cm{A}+\sum_{i=1}^{6}\cm{G}\times_{j=1,j\neq i}^n\bm{U}_j\times_i\bm{H}_i\bm{R}_i+\cm{H}_1\\
			=&\cm{A}+\sum_{i=1}^{6}\cm{G}\times_{j=1,j\neq i}^n\bm{U}_j\times_i(\bm{U}^*_i\bm{R}_i-\bm{U}_i)+\cm{H}_1\\
			=&\sum_{i=1}^{6}\cm{A}_i-5\cm{A}+\cm{H}_1
		\end{split}
	\end{equation}
	where
	\begin{equation}
		\cm{H}_1=\sum_{i\neq j}\cm{G}\times_{k=1,k\neq i,k\neq j}^{6}\bm{U}_k\times_i\bm{H}_i\bm{R}_i\times_j\bm{H}_j\bm{R}_j+\cdots+\cm{G}\times_{i=1}^6\bm{H}_i\bm{R}_i.
	\end{equation}
	
	For the second term, we have
	\begin{equation}
		[\![\cm{H}_{\scalebox{0.7}{\cm{G}}};\bm{U}_1^*,\dots,\bm{U}_{6}^*]\!]= (\cm{G}^*-\cm{G}\times_{i=1}^{6}\bm{R}_i)\times_{i=1}^{6}(\bm{H}_i+\bm{U}_i\bm{R}_i^\top)=\cm{A}_{\scalebox{0.7}{\cm{G}}}-\cm{A}+\cm{H}_2
	\end{equation}
	where
	\begin{equation}
		\cm{H}_2=\cm{H}_{\scalebox{0.7}{\cm{G}}}\times_{i=1}^6\bm{H}_i+\sum_{i=1}^6\cm{H}_{\scalebox{0.7}{\cm{G}}}\times_{j\neq i}\bm{H}_j\times_i\bm{U}_i\bm{R}_i^\top+\cdots+\sum_{i=1}^6\cm{H}_{\scalebox{0.7}{\cm{G}}}\times_{j\neq i}\bm{U}_j\bm{R}_j^\top\times_i\bm{H}_i.
	\end{equation}
	Then, it follows that
	\begin{equation}
		\cm{A}^*=\cm{A}_{\scalebox{0.7}{\cm{G}}}+\sum_{i=1}^6\cm{A}_i-6\cm{A}+(\cm{H}_1+\cm{H}_2)
	\end{equation}
	and
	\begin{equation}
		\begin{split}
			\cm{H}&=\cm{H}_1+\cm{H}_2\\
			&=\cm{G}^*\times_{i=1}^6\bm{H}_i+\sum_{i=1}^6\cm{G}^*\times_{j\neq i}\bm{H}_j\times_i\bm{U}_i+\cdots+\sum_{i=1}^6\cm{H}_{\scalebox{0.7}{\cm{G}}}\times_{j\neq i}\bm{U}_j\bm{R}_j^\top\times_i\bm{H}_i.
		\end{split}
	\end{equation}
	Hence, we have
	\begin{equation}
		\|\cm{H}\|_\text{F}\leq B_2B_3^3+6B_1B_2B_3^{5/2}+15B_1^2B_2B_3^2+20B_1^3B_2B_3^{3/2}+15B_1^4B_2B_3+6B_1^5B_3.
	\end{equation}
\end{proof}

\begin{lemma}\label{lemma:tensor_perturbation2}
	
	Consider the case of $d=3$. Suppose $\cm{A}^*=[\![\cm{G}^*;\bm{U}_1^*,\dots,\bm{U}_6^*]\!]$, $\bm{U}^{*\top}_i\bm{U}_i=\bm{I}_{r_i}$, $i=1,2,3$, $\bar{\sigma}=\max_{1\leq i\leq 6}\|\cm{G}^*_{(i)}\|_\textup{op}$, and $\underline{\sigma}=\min_{1\leq i\leq 6}\sigma_{r_i}(\cm{G}^*_{(i)})$. Let $\cm{A}=[\![\cm{G};\bm{U}_1,\dots,\bm{U}_6]\!]$ be another Tucker low-rank tensor with $\|\bm{U}_i\|_\textup{op}\leq (1+c_0)$, and $\max_{1\leq i\leq 6}\|\cm{G}_{(i)}\|_\textup{op}\leq (1+c_0)\bar{\sigma}$ for some constant $c_0>0$. Define
	\begin{equation}
		E:=\min_{\bm{R}_i\in\mathbb{O}_{r_i}}\left(\sum_{i=1}^6\|\bm{U}_i-\bm{U}_i^*\bm{R}_i\|_\textup{F}^2+\|\cm{G}-[\![\cm{G}^*;\bm{R}_1,\dots,\bm{R}_6]\!]\|_\textup{F}^2\right).
	\end{equation}
	Then, we have
	\begin{equation}
		\begin{split}
			&E\leq (64+24\underline{\sigma}^{-2}C_{1})b^{-12}\|\cm{A}-\cm{A}^*\|_\textup{F}^2+2C_{1}b^{-2}\sum_{i=1}^6\min_{\bm{R}_i\in\mathbb{O}_{r_i}}\|\bm{U}_i^\top\bm{U}_i-b^2\bm{I}_{r_i}\|_\textup{F}^2.\\
			&\|\cm{A}-\cm{A}^*\|_\textup{F}^2\leq 7b^{12}(1+C_2^2\overline{\sigma}^{2}b^{-14})E.
		\end{split}
	\end{equation}
	where $C_1$ and $C_2$ are some universal constants.
	
\end{lemma}

\begin{proof}
	
	First, note that
	\begin{equation}
		\|[\![\cm{G};\bm{R}_1,\dots,\bm{R}_6]\!]-\cm{G}\|_\text{F}=b^{-6}\|[\![\cm{G};\bm{U}_1^*\bm{R}_1,\dots,\bm{U}_6^*\bm{R}_6]\!]-\cm{A}^*\|_\text{F}.
	\end{equation}
	Then, by the inequality of means, we can
	\begin{equation}
		\begin{split}
			&\|[\![\cm{G};\bm{U}_1+\bm{U}_1^*\bm{R}_1-\bm{U}_1,\dots,\bm{U}_6+\bm{U}_6^*\bm{R}_6-\bm{U}_6]\!]-\cm{A}^*\|_\text{F}^2\\
			=&\left\|(\cm{A}-\cm{A}^*)+\sum_{i=1}^6\cm{G}\times_i(\bm{U}_i^*\bm{R}_i-\bm{U}_i)\times_{j\neq i}\bm{U}_j+\cdots+\cm{G}\times_{i=1}^6(\bm{U}^*_i\bm{R}_i-\bm{U}_i)\right\|_\text{F}^2\\
			\leq&64b^{-12}\|\cm{A}-\cm{A}^*\|_\text{F}^2+64b^{-12}\sum_{i=1}^6\|\cm{G}\times_i(\bm{U}_i^*\bm{R}_i-\bm{U}_i)\times_{j\neq i}\bm{U}_j\|_\text{F}^2+\cdots+ 64b^{-12}\|\cm{G}\times_{i=1}^6(\bm{U}^*_i\bm{R}_i-\bm{U}_i)\|_\text{F}^2\\
			\leq & 64b^{-12}\|\cm{A}-\cm{A}^*\|_\text{F}^2+64b^{-12}\sum_{i=1}^6\|\cm{G}_{(i)}\|_\text{op}^2\cdot\|\otimes_{j\neq i}\bm{U}_j\|_\text{op}^2\cdot\|\bm{U}_i^*\bm{R}_i-\bm{U}_i\|_\text{F}^2\\
			& + 64b^{-12}\sum_{i\neq j}\|\cm{G}_{(i)}\|_\text{op}^2\cdot\|\otimes_{k\neq i,k\neq j}\bm{U}_k\|_\text{op}^2\cdot\|\bm{U}_i^*\bm{R}_i-\bm{U}_i\|_\text{op}^2\cdot\|\bm{U}_j^*\bm{R}_j-\bm{U}_j\|_\text{F}^2\\
			& + \cdots\\
			& + (64/6)b^{-12}\sum_{i=1}^6\|\cm{G}_{(i)}\|_\text{op}^2\cdot\|\otimes_{j\neq i}(\bm{U}_j^*\bm{R}_j-\bm{U}_j)\|_\text{op}^2\cdot\|\bm{U}_i^*\bm{R}_i-\bm{U}_i\|_\text{F}^2\\
			\leq & 64b^{-12}\|\cm{A}-\cm{A}^*\|_\text{F}^2\\
			&+b^{-12}\Bigg[64\bar{\sigma}^2(1+c_0)^{12}+160\bar{\sigma}^2(1+c_0)^{10}(2+c_0)^2+\frac{640}{3}\bar{\sigma}^2(1+c_0)^8(2+c_0)^4+160\bar{\sigma}^2(1+c_0)^6\\
			&(2+c_0)^6+64\bar{\sigma}^2(1+c_0)^{4}(2+c_0)^8+\frac{32}{3}\bar{\sigma}^2(1+c_0)^2(2+c_0)^{10}\Bigg]\sum_{i=1}^6\|\bm{U}_i^*\bm{R}_i-\bm{U}_i\|_\text{F}^2.
		\end{split}
	\end{equation}
	Since the above inequality holds for any orthogonal matrices $\bm{R}_1,\dots,\bm{R}_6$, it follows that
	\begin{equation}
		E\leq 64b^{-12}\|\cm{A}-\cm{A}^*\|_\text{F}^2+\overline{\sigma}^2b^{-14}C_{1}\sum_{i=1}^6\min_{\bm{R}_i\in\mathbb{O}_{r_i}}\|\bm{U}_i^*\bm{R}_i-\bm{U}_i\|_\text{F}^2.
	\end{equation}
	By Lemma E.2 in \cite{han2020optimal},
	\begin{equation}
		\min_{\bm{R}_i\in\mathbb{O}_{r_i}}\|\bm{U}_i-\bm{U}_i^*\bm{R}_i\|_\text{F}^2\leq 2\|\bm{U}_i^\top\bm{U}_i-\bm{I}_{r_i}\|_\text{F}^2+4\underline{\sigma}^{-2}\|\cm{A}-\cm{A}^*\|_\text{F}^2,
	\end{equation}
	which implies that
	\begin{equation}
		E\leq (64+24\underline{\sigma}^{-2}C_{1})\|\cm{A}-\cm{A}^*\|_\text{F}^2+2C_{1}\sum_{i=1}^6\min_{\bm{R}_i\in\mathbb{O}_{r_i}}\|\bm{U}_i^\top\bm{U}_i-\bm{I}_{r_i}\|_\text{F}^2.
	\end{equation}
	
	For the second inequality, denote the optimal rotation matrices by
	\begin{equation}
		(\bm{R}_1,\dots,\bm{R}_6)=\argmin_{\bm{R}_i\in\mathbb{O}_{r_i}}\left\{\sum_{i=1}^6\|\bm{U}_i-\bm{U}_i^*\bm{R}_i\|_\text{F}^2+\|\cm{G}-[\![\cm{G}^*;\bm{R}_1^\top,\cdots,\bm{R}_6^\top]\!]\|_\text{F}^2\right\}.
	\end{equation}
	Let $\cm{H}_{\scalebox{0.7}{\cm{G}}}=\cm{G}^*-[\![\cm{G};\bm{R}_1,\dots,\bm{R}_6]\!]$ and $\bm{H}_i=\bm{U}_i^*-\bm{U}_i\bm{R}_i^\top$. Then, we have
	\begin{equation}
		\cm{A}^*=(\cm{H}_{\scalebox{0.7}{\cm{G}}}+[\![\cm{G};\bm{R}_1,\dots,\bm{R}_6]\!])\times_{i=1}^6(\bm{H}_i+\bm{U}_i\bm{R}_i^\top)
	\end{equation}
	and it follows that
	\begin{equation}
		\begin{split}
			&\|\cm{A}-\cm{A}^*\|_\text{F}\\
			\leq & \|\cm{H}_{\scalebox{0.7}{\cm{G}}}\times_{i=1}^{6}\bm{U}_i^*\|_\text{F}+\sum_{i=1}^6\|\cm{G}\times_{j\neq i}\bm{U}_j\times_i\bm{H}_i\bm{R}_i\|_\text{F}+\cdots + \|\cm{G}\times_{i=1}^6\bm{H}_i\bm{R}_i\|_\text{F}\\
			\leq & b^6\|\cm{H}_{\scalebox{0.7}{\cm{G}}}\|_\text{F} + \Bigg[\bar{\sigma}(1+c_0)^6+\frac{5}{2}\bar{\sigma}(1+c_0)^5(2+c_0)+\frac{10}{3}\bar{\sigma}(1+c_0)^4(2+c_0)^2\\
			&+\frac{5}{2}\bar{\sigma}(1+c_0)^3(2+c_0)^3+\bar{\sigma}(1+c_0)^2(2+c_0)^4+\frac{1}{6}\bar{\sigma}(1+c_0)(2+c_0)^5\Bigg]\sum_{i=1}^6\|\bm{H}_i\|_\text{F}\\
			=&\|\cm{H}_{\scalebox{0.7}{\cm{G}}}\|_\text{F}+C_{2}\sum_{i=1}^6\|\bm{H}_i\|_\text{F}.
		\end{split}
	\end{equation}
	Thus, we have
	\begin{equation}
		\|\cm{A}-\cm{A}^*\|_\text{F}^2\leq 7b^{12}\|\cm{H}_{\scalebox{0.7}{\cm{G}}}\|_\text{F}^2+7C_{2}^2b^{-2}\sum_{i=1}^6\|\bm{H}_i\|_\text{F}^2.
	\end{equation}
\end{proof}

\subsection{Statistical Convergence Analysis}\label{sec:B.2}

In this appendix, we present the stochastic properties of the time series data. The main technique is the martingale-based concentration inequalities introduced in Appendix \ref{append:high-dim}.

\begin{proof}[Proof of Theorem \ref{thm:stat_conv}]
	
	The proof of Theorem \ref{thm:stat_conv} follows the deterministic computational convergence analysis in Theorem \ref{thm:comp_conv}. It suffices to show that the RSC, RSS and deviation bound conditions hold with high probability.
	
	By Lemmas \ref{lem:RSC_RSS} and $\ref{lem:deviation}$, with probability at least $1-2\exp[-CM_2^2\min(\tau^{-4},\tau^{-2})T]-C\exp(-C\max_{1\leq i\leq d}p_i)$, the empirical loss function $\overline{\mathcal{L}}$ satisfies the RSC-$\alpha_\text{RSC}$ and RSS-$\beta_\text{RSS}$ conditions, and
	\begin{equation}
		\xi(r_1,\dots,r_{2d})\lesssim\kappa^2M_1\sqrt{\frac{\prod_{i=1}^{2d}r_i+\sum_{i=1}^{2d}p_ir_i}{T}}.
	\end{equation}
	
	By Theorem \ref{thm:comp_conv}, we have that, for all $i=1,2,\dots$,
	\begin{equation}
		\begin{split}
			&\|\cm{A}^{(i)}-\cm{A}^*\|_\text{F}^2\\
			\lesssim&~\rho^2(1-C\eta_0\alpha_\text{RSC}\beta_\text{RSS}^{-1}\rho^{-2})^i\|\cm{A}^{(0)}-\cm{A}^*\|_\text{F}^2+\rho^2\alpha_\text{RSC}^{-2}\xi(r_1,\dots,r_{2d}).
		\end{split}
	\end{equation}
	Hence, when
	\begin{equation}
		I\gtrsim\frac{\log(\alpha_\text{RSC}\beta^{-1}_\text{RSS}\xi^2(r_1,\dots,r_{2d}))-\log(\|\cm{A}^{(0)}-\cm{A}^*\|_\text{F}^2)}{\log(1-C\eta_0\alpha_\text{RSC}\beta^{-1}_\text{RSS}\rho^{-2})},
	\end{equation}
	the optimization error is absorbed by the statistical error, so
	\begin{equation}
		\|\cm{A}^{(I)}-\cm{A}^*\|_\text{F}^2\lesssim\rho^2\alpha_\text{RSC}^{-2}\kappa^2M_1\sqrt{\frac{\sum_{i=1}^{2d}p_ir_i+\prod_{i=1}^{2d}r_i}{T}}.
	\end{equation}
\end{proof}

In the following, we prove the restricted strong convexity (RSC) and restricted strong smoothness (RSS) conditions. For the least squares loss function $\overline{\mathcal{L}}(\cm{A})=(2T)^{-1}\sum_{t=1}^T\|\cm{Y}_t-\langle\cm{A},\cm{Y}_{t-1}\rangle\|_\textup{F}^2$, it is easy to check that for any $\cm{A}_1,\cm{A}_2\in\mathbb{R}^{p_1\times\cdots\times p_d\times p_1\times\cdots\times p_d}$,
\begin{equation}
	\begin{split}
		&\overline{\mathcal{L}}(\cm{A}_1)-\overline{\mathcal{L}}(\cm{A}_2)-\langle\nabla\cm{A}_1-\cm{A}_2,\overline{\mathcal{L}}(\cm{A}_2)\rangle\\
		=&\frac{1}{2T}\sum_{t=1}^T\|\langle\cm{A}_1-\cm{A}_2,\cm{Y}_{t-1}\rangle\|_\textup{F}^2
		=\frac{1}{2T}\sum_{t=0}^{T-1}\|(\cm{A}_1-\cm{A}_2)_{[S_2]}\bm{y}_t\|_2^2.
	\end{split}
\end{equation}

\begin{lemma}\label{lem:RSC_RSS}
	
	Assume the conditions in Theorem \ref{thm:stat_conv} hold and $T\gtrsim M_2^{-2}\max(\kappa^2,\kappa^4)\max_{1\leq i\leq d}p_i$. For any tensor $\bm{\Delta}\in\mathbb{R}^{p_1\times\cdots\times p_d\times p_1\times\cdots\times p_d}$ of Tucker ranks $(2r_1,2r_2,\dots,2r_{2d})$, with probability at least $1-2\exp[-CM_2^2\min(\kappa^{-2},\kappa^{-4})T]$,
	\begin{equation}
		\alpha_\textup{RSC}\|\bm{\Delta}\|_\textup{F}^2\leq\frac{1}{T}\sum_{t=1}^T\|\langle\bm{\Delta},\cm{Y}_{t-1}\rangle\|_\textup{F}^2\leq\beta_\textup{RSS}\|\bm{\Delta}\|_\textup{F}^2,
	\end{equation}
	where $\alpha_\textup{RSC}=\lambda_{\min}(\bm{\Sigma}_{\bm{e}})/(2\mu_{\max}(\mathcal{A}))$ and $\beta_\textup{RSS}=(3\lambda_{\max}(\bm{\Sigma}_{\bm{e}}))/(2\mu_{\min}(\mathcal{A}))$.
	
\end{lemma}

\begin{proof}
	
	Denote $\mathcal{T}(p_1,\dots,p_{2d};r_1,\dots,r_{2d})=\{\cm{T}\in\mathbb{R}^{p_1\times\cdots\times p_{2d}}:\|\cm{T}\|_\text{F}=1,\text{rank}(\cm{T}_{(i)})=r_i,~\text{for}~i=1,\dots,2d\}$ as the set of low-rank tensors of given dimensions and Tucker ranks.
	It suffices to prove the result for $\bm{\Delta}\in\mathcal{T}(p_1,\dots,p_{2d};2r_1,\dots,2r_{2d})$.
	
	For any $\bm{M}\in\mathbb{R}^{m\times p}$, denote $R_T(\bm{M})=\sum_{t=0}^{T-1}\|\bm{M}\bm{y}_t\|_2^2$. Note that $R_T(\bm{\Delta}_{[S_2]})\geq\mathbb{E}R_T(\bm{\Delta}_{[S_2]})-\sup_{\bm{\Delta}}|R_T(\bm{\Delta}_{[S_2]})-\mathbb{E}R_T(\bm{\Delta}_{[S_2]})|$. Similarly to the proof of Lemma \ref{lemma:RSC}, we have that $\mathbb{E}R_T(\bm{\Delta}_{[S_2]})\geq T\lambda_{\min}(\bm{\Sigma}_{\bm{e}})\mu_{\max}^{-1}(\mathcal{A})$.
	
	For any $\bm{M}\in\mathbb{R}^{p\times p}$ such that $\|\bm{M}\|_\text{F}=1$ and any $t>0$, similarly to Lemma \ref{lemma:RSC}, by the VMA($\infty$) representation of VAR(1) model, we have
	\begin{equation}
		\begin{split}
			&\mathbb{P}[|R_T(\bm{M})-\mathbb{E}R_T(\bm{M})|\geq t]\\
			\leq & 2\exp\left(-\min\left(\frac{t^2}{\kappa^4T\lambda_{\max}^2(\bm{\Sigma}_{\bm{e}})\mu_{\min}^{-2}(\mathcal{A})},\frac{t}{\kappa^2\lambda_{\max}^2(\bm{\Sigma}_{\bm{e}})\mu_{\min}^{-2}(\mathcal{A})}\right)\right).
		\end{split}
	\end{equation}
	
	Considering an $\epsilon$-covering net $\overline{\mathcal{T}}$ for $\mathcal{T}(p_1,\dots,p_{2d};2r_1,\dots,2r_{2d})$; in other words, for any $\cm{T}\in\mathcal{T}(p_1,\dots,p_{2d};r_1,\dots,r_{2d})$. there exists a $\overline{\cm{T}}\in\overline{\mathcal{T}}$ such that $\|\cm{T}-\overline{\cm{T}}\|_\text{F}\leq\epsilon$.
	By Lemma \ref{lemma:tensor_covering}, we have that $|\overline{\mathcal{T}}|\leq((6d+3)/\epsilon)^{\prod_{i=1}^{2d}2r_i+\sum_{i=1}^{2d}2p_ir_i}$. Then, for some small $\epsilon$, the deviation bound between $R_T(\cm{T}_{[S_2]})$ and its expection can be bounded uniformly over $\mathcal{T}(p_1,\dots,p_{2d};2r_1,\dots,2r_{2d})$ by
	\begin{equation}
		\begin{split}
			&\mathbb{P}\left[\sup_{\scalebox{0.75}{\cm{T}}\in\mathcal{T}(p_1,\dots,p_{2d};r_1,\dots,r_{2d})}|R_T(\bm{\Delta}_{[S_2]})-\mathbb{E}R_T(\bm{\Delta}_{[S_2]})|\geq t\right]\\
			\leq & 2\exp\left[C\left(\prod_{i=1}^{2d}2r_i+\sum_{i=1}^{2d}2p_ir_i\right)-\min\left(\frac{t^2}{\kappa^4T\lambda_{\max}^2(\bm{\Sigma}_{\bm{e}})\mu_{\min}^{-2}(\mathcal{A})},\frac{t}{\kappa^2\lambda_{\max}^2(\bm{\Sigma}_{\bm{e}})\mu_{\min}^{-2}(\mathcal{A})}\right)\right].
		\end{split}
	\end{equation}
	Letting $t=T\lambda_{\min}(\bm{\Sigma}_{\bm{e}})\mu_{\max}^{-1}(\mathcal{A})/2$, for $T\gtrsim M_2^{-2}\max(\kappa^{-4},\kappa^{-2})(\prod_{i=1}^{2d}r_i+\sum_{i=1}^{2d}p_ir_i)$, we have
	\begin{equation}
		\mathbb{P}\left[\sup_{\bm{\Delta}\in\mathcal{T}(p_1,\dots,p_{2d};r_1,\dots,r_{2d})}|R_T(\bm{\Delta}_{[S_2]})-\mathbb{E}R_T(\bm{\Delta}_{[S_2]})|\geq\frac{T\lambda_{\min}(\bm{\Sigma_e})}{2\mu_{\max}(\mathcal{A})}\right]\leq C\exp\left[-c\left(\prod_{i=1}^{2d}r_i+\sum_{i=1}^{2d}p_ir_i\right)\right]
	\end{equation}
	and thus
	\begin{equation}
		\mathbb{P}\left[\frac{1}{2T}\sum_{t=1}^T\|\langle\bm{\Delta},\cm{Y}_{t-1}\rangle\|_\textup{F}^2\leq\frac{\lambda_{\min}(\bm{\Sigma_e})}{2\mu_{\max}(\mathcal{A})}\right]\leq C\exp\left[-c\left(\prod_{i=1}^{2d}r_i+\sum_{i=1}^{2d}p_ir_i\right)\right].
	\end{equation}
	
	Similarly, $R_T(\bm{\Delta}_{[S_2]})\leq\mathbb{E}R_T(\bm{\Delta}_{[S_2]})+\sup_{\bm{\Delta}}|R_T(\bm{\Delta}_{[S_2]})-\mathbb{E}R_T(\bm{\Delta}_{[S_2]})|$ and $\mathbb{E}R_T(\bm{\Delta}_{[S_2]})\leq T\lambda_{\max}(\bm{\Sigma_e})\mu_{\min}^{-1}(\mathcal{A})$. Therefore, the deviation above implies that with high probability, $T^{-1}\sum_{t=1}^T\|\langle\bm{\Delta},\cm{Y}_{t-1}\rangle\|_\textup{F}^2\leq(3\lambda_{\max}(\bm{\Sigma_e}))/(2\mu_{\min}(\mathcal{A}))$.
\end{proof}

Next, we prove the deviation bound for $\xi(r_1,\dots,r_{2d})$. For the least squares loss function $\overline{\mathcal{L}}(\cm{A})=(2T)^{-1}\sum_{t=1}^T\|\cm{Y}_t-\langle\cm{A},\cm{Y}_{t-1}\rangle\|_\text{F}^2$, it is clear that $\nabla\overline{\mathcal{L}}(\cm{A}^*)=T^{-1}\sum_{t=1}^T\cm{Y}_{t-1}\circ\cm{E}_t$.

\begin{lemma}\label{lem:deviation}
	
	Assume conditions in Theorem \ref{thm:stat_conv} hold and $T\gtrsim M_2^{-2}\max(\kappa^{-4},\kappa^{-2})\max_{1\leq i\leq d}p_i$. With probability at least $1-\exp[-C(\prod_{i=1}^{2d}r_i+\sum_{i=1}^{2d}p_ir_i)]$,
	\begin{equation}
		\begin{split}
			\xi(r_1,\dots,r_{2d}):=&\sup_{\|\scalebox{0.75}{\cm{G}}\|_\textup{F}=1,\bm{U}_i^\top\bm{U}_i=\bm{I}_{r_i}}\left\langle\nabla\overline{\mathcal{L}}(\cm{A}^*),[\![\cm{G};\bm{U}_1,\dots,\bm{U}_{2d}]\!]\right\rangle\\
			\lesssim&~\kappa^2M_1\sqrt{\frac{\prod_{i=1}^{2d}r_i+\sum_{i=1}^{2d}p_ir_i}{T}},
		\end{split}
	\end{equation}
	where $M_1=\lambda_{\max}(\bm{\Sigma_e})/\mu_{\min}^{1/2}(\mathcal{A})$.
	
\end{lemma}

\begin{proof}
	
	For simplicity, we consider the case of $d=3$, and the result can be easily extended to the general case of a fixed $d>3$.
	
	Denote $\mathcal{T}(p_1,\dots,p_6;r_1,\dots,r_{6})=\{\cm{T}\in\mathbb{R}^{p_1\times\cdots\times p_6}:\|\cm{T}\|_\text{F}=1,\text{rank}(\cm{T}_{(i)})=r_i,~\text{for}~i=1,\dots,6\}$ as the set of low-rank tensors of given dimensions and Tucker ranks. By definition,
	\begin{equation}
		\xi(r_1,\dots,r_6)=\sup_{\scalebox{0.7}{\cm{T}}\in\mathcal{T}(p_1,\dots,p_6;r_1,\dots,r_6)}\left\langle\frac{1}{T}\sum_{t=1}^T\cm{Y}_{t-1}\circ\cm{E}_t,\cm{T}\right\rangle.
	\end{equation}
	
	First, we consider an $\epsilon$-net $\overline{\mathcal{T}}(p_1,\dots,p_6;r_1,\dots,r_6)$ for $\mathcal{T}(p_1,\dots,p_6;r_1,\dots,r_6)$. For any tensor $\cm{T}\in\mathcal{T}(p_1,\dots,p_6;r_1,\dots,r_6)$, there exists a tensor $\overline{\cm{T}}\in\mathcal{T}(p_1,\dots,p_6;r_1,\dots,r_6)$ such that $\|\cm{T}-\overline{\cm{T}}\|_\text{F}\leq\epsilon$. Obviously, $\bm{\Delta}=\cm{T}-\overline{\cm{T}}$ is a tensor of Tucker ranks $(2r_1,\dots,2r_6)$. Based on the HOSVD of $\bm{\Delta}$, we can split $\bm{\Delta}$ into $64$ orthogonal components by splitting each of the factor matrices into two equal-size groups and splitting the core tensor correspondingly. In other words, we can write $\bm{\Delta}=\sum_{i=1}^{64}\bm{\Delta}_i$, where each $\bm{\Delta}_i$ is a tensor of Tucker ranks $(r_1,\dots,r_6)$ and $\langle\bm{\Delta}_i,\bm{\Delta}_j\rangle=0$ for all $i\neq j$.
	
	By Cauchy's inequality, as $\|\bm{\Delta}\|_\text{F}^2=\sum_{i=1}^{64}\|\bm{\Delta}_i\|_\textup{F}^2$, we have $\sum_{i=1}^{64}\|\bm{\Delta}_i\|_\text{F}\leq8\|\bm{\Delta}\|_\text{F}\leq8\epsilon$. Moreover, since $\bm{\Delta}_i/\|\bm{\Delta}_i\|_\text{F}\in\mathcal{T}(p_1,\dots,p_6;r_1,\dots,r_6)$,
	\begin{equation}
		\begin{split}
			\xi(r_1,\dots,r_6)&\leq\max_{\scalebox{0.7}{\cm{T}}\in\overline{\mathcal{T}}(p_1,\dots,p_6;r_1,\dots,r_6)}\left\langle\frac{1}{T}\sum_{t=1}^T\cm{Y}_{t-1}\circ\cm{E}_t,\cm{T}\right\rangle+\sum_{i=1}^{64}\left\langle\frac{1}{T}\sum_{t=1}^T\cm{Y}_{t-1}\circ\cm{E}_t,\frac{\bm{\Delta}_i}{\|\bm{\Delta}_i\|_\text{F}}\right\rangle\|\bm{\Delta}_i\|_\text{F}\\
			&\leq\max_{\scalebox{0.7}{\cm{T}}\in\overline{\mathcal{T}}(p_1,\dots,p_6;r_1,\dots,r_6)}\left\langle\frac{1}{T}\sum_{t=1}^T\cm{Y}_{t-1}\circ\cm{E}_t,\cm{T}\right\rangle+8\epsilon\cdot\xi(r_1,\dots,r_6).
		\end{split}
	\end{equation}
	It implies that
	\begin{equation}
		\xi(r_1,\dots,r_6)\leq(1-8\epsilon)^{-1}\max_{\scalebox{0.7}{\cm{T}}\in\overline{\mathcal{T}}(p_1,\dots,p_6;r_1,\dots,r_6)}\left\langle\frac{1}{T}\sum_{t=1}^T\cm{Y}_{t-1}\circ\cm{E}_t,\cm{T}\right\rangle.
	\end{equation}
	
	Next, for any fixed $\cm{T}\in\mathbb{R}^{p_1\times\cdots\times p_6}$ such that $\|\cm{T}\|_\text{F}=1$, $\langle\cm{Y}_{t-1}\circ\cm{E}_t,\cm{T}\rangle=\langle\bm{e}_t,\cm{T}_{[S_2]}\bm{y}_{t-1}\rangle$ and we denote $S_t(\cm{T})=\sum_{s=1}^t\langle\bm{e}_s,\cm{T}_{[S_2]}\bm{y}_{s-1}\rangle$ and $R_t(\cm{T})=\sum_{s=0}^{t-1}\|\cm{T}_{[S_2]}\bm{y}_s\|_2^2$, for $1\leq t\leq T$.
	Similar to Lemma \ref{lemma:deviation}, by the standard Chernoff bound, for any $z_1>0$ and $z_2>0$,
	\begin{equation}
		\mathbb{P}[\{S_T(\cm{T})\geq z_1\}\cap\{R_T(\cm{T})\leq z_2\}]\leq\exp\left(-\frac{z_1^2}{2\kappa^2\lambda_{\max}(\bm{\Sigma_e})z_2}\right).
	\end{equation}
	Similar to the proof of Lemma \ref{lemma:deviation}, by Lemma \ref{lemma:deviation2}, we have
	\begin{equation}
		\mathbb{P}[R_T(\cm{T})\geq C\kappa^2T\lambda_{\max}(\bm{\Sigma_e})\mu_{\min}^{-1}(\mathcal{A})]\leq2\exp(-CT).
	\end{equation}
	Therefore, for any $x>0$,
	\begin{equation}
		\begin{split}
			& \mathbb{P}\left[\sup_{\scalebox{0.7}{\cm{T}}\in\mathcal{T}(p_1,\dots,p_6;r_1,\dots,r_6)}\left\langle\frac{1}{T}\sum_{t=1}^T\cm{Y}_{t-1}\circ\cm{E}_t,\cm{T}\right\rangle\geq x\right]\\
			\leq & \mathbb{P}\left[\max_{\scalebox{0.7}{\cm{T}}\in\overline{\mathcal{T}}(p_1,\dots,p_6;r_1,\dots,r_6)}\left\langle\frac{1}{T}\sum_{t=1}^T\cm{Y}_{t-1}\circ\cm{E}_t,\cm{T}\right\rangle\geq (1-8\epsilon)x\right]\\
			\leq & |\overline{\mathcal{T}}(p_1,\dots,p_6;r_1,\dots,r_6)|\cdot\mathbb{P}\left[\left\langle\frac{1}{T}\sum_{t=1}^T\cm{Y}_{t-1}\circ\cm{E}_t,\cm{T}\right\rangle\geq(1-8\epsilon)x\right]\\
			\leq & |\overline{\mathcal{T}}(p_1,\dots,p_6;r_1,\dots,r_6)|\cdot\Big\{\mathbb{P}[\{S_T(\cm{T})\geq T(1-8\epsilon)x\}\cap\{R_T(\cm{T})\leq C\kappa^2T\lambda_{\max}(\bm{\Sigma_e})\mu_{\min}^{-1}(\mathcal{A})\}]\\
			+&\mathbb{P}[R_T(\cm{T})>C\kappa^2\lambda_{\max}(\bm{\Sigma_e})\mu_{\min}^{-1}(\mathcal{A})]\Big\}\\
			\leq & |\overline{\mathcal{T}}(p_1,\dots,p_6;r_1,\dots,r_6)|\cdot\left\{\exp\left[-\frac{CTx^2}{\kappa^4\lambda_{\max}^2(\bm{\Sigma_e})\mu^{-1}_{\min}(\mathcal{A})}\right]+2\exp[-CM_2^{-2}\min(\kappa^{-2},\kappa^{-4})T]\right\}.
		\end{split}
	\end{equation}
	
	By Lemma \ref{lemma:tensor_covering}, $|\overline{\mathcal{T}}(p_1,\dots,p_6;r_1,\dots,r_6)|\leq(21/\epsilon)^{\prod_{i=1}^6r_i+\sum_{i=1}^6p_ir_i}$. Thus, if we take $\epsilon=0.1$ and $x=C\kappa^2\lambda_{\max}(\bm{\Sigma_e})\mu_{\min}^{-1}(\mathcal{A})\sqrt{(\prod_{i=1}^6r_i+\sum_{i=1}^6p_ir_i)/T}$, when $T\gtrsim M_2^{-2}\max(\kappa^4,\kappa^2)\max_{1\leq i\leq 3}p_i$, we have
	\begin{equation}
		\begin{split}
			&\mathbb{P}\left[\sup_{\scalebox{0.7}{\cm{T}}\in\mathcal{T}(p_1,\dots,p_6;r_1,\dots,r_6)}\left\langle\frac{1}{T}\sum_{t=1}^T\cm{Y}_{t-1}\circ\cm{E}_t,\cm{T}\right\rangle\geq \kappa^2\lambda_{\max}(\bm{\Sigma_e})\mu^{-1/2}_{\min}(\mathcal{A})\sqrt{\frac{\prod_{i=1}^6r_i+\sum_{i=1}^6p_ir_i}{T}}\right]\\
			& \leq \exp\left[-C\left(\prod_{i=1}^6r_i+\sum_{i=1}^6p_ir_i\right)\right].
		\end{split}
	\end{equation}
\end{proof}

Next, we present the covering number of the set of low-rank tensors.
\begin{lemma}\label{lemma:tensor_covering}
	
	The $\epsilon$-covering number of the set $\mathcal{T}(p_1,\dots,p_d;r_1,\dots,r_d):=\{\cm{T}\in\mathbb{R}^{p_1\times\cdots\times p_d}:\|\cm{T}\|_\text{F}=1,~\textup{rank}(\cm{T}_{(i)})\leq r_i,~i=1,\dots,d\}$ is
	\begin{equation}
		|\overline{\mathcal{T}}(p_1,\dots,p_d;r_1,\dots,r_d)|\leq[(3d+3)/\epsilon]^{\prod_{i=1}^dr_i+\sum_{i=1}^dp_ir_i}.
	\end{equation}
	
\end{lemma}

\begin{proof}
	
	The proof hinges on the covering number for the low-rank matrix developed by \citet{candes2011tight}. Recall the HOSVD $\cm{T}=[\![\cm{G};\bm{U}_1,\dots,\bm{U}_d]\!]$ where $\|\cm{G}\|_\text{F}=1$ and each $\bm{U}_i$ is an orthonormal matrix. We construct an $\epsilon$-net for $\mathcal{T}$ by covering the set of $\cm{G}$ and all $\bm{U}_i$'s. We take $\overline{G}$ to be an $\epsilon/(d+1)$ net for $\mathcal{G}$ with $\|\overline{G}\|\leq[(3d+3)/\epsilon]^{\prod_{i=1}^dr_i}$. Next, let $O_{p,r}=\{\bm{U}\in\mathbb{R}^{p\times r}:\bm{U}^\top\bm{U}=\bm{I}_r\}$. To cover $O_{p,r}$, it is beneficial to use the $\|\cdot\|_{2,\infty}$ norm, defined as
	\begin{equation}
		\|\bm{M}\|_{2,\infty}=\max_{i}\|\bm{M}_i\|_2,
	\end{equation}
	where $\bm{M}_i$ denotes the $i$th column of $\bm{M}$. Let $Q_{p,r}=\{\bm{M}\in\mathbb{R}^{p\times r}:\|\bm{M}\|_{2,\infty}\leq1\}$. It is obvious that $O_{p,r}\subset Q_{p,r}$, and thus an $\epsilon/(d+1)$-net $\overline{O}_{p,r}$ for $O_{p,r}$ obeying $|\overline{O}_{p,r}|\leq[(3d+3)/\epsilon]^{pr}$.
	
	Denote $\overline{T}=\{[\![\overline{\cm{G}};\overline{\bm{U}}_1,\dots,\overline{\bm{U}}_d]\!]:\overline{\cm{G}}\in\overline{G},\overline{\bm{U}}_i\in\overline{O}_{p_i,r_i},~i=1,\dots,d\}$ and we have $|\overline{T}|\leq|\overline{G}|\times|\overline{O}_{p_1,r_1}|\times\cdots\times|\overline{O}_{p_d,r_d}|=[(3d+3)/\epsilon]^{\prod_{i=1}^dr_i+\sum_{i=1}^dp_ir_i}$. It suffices to show that for any $\cm{T}\in\mathcal{T}(p_1,\dots,p_d;r_1,\dots,r_d)$, there exists a $\overline{\cm{T}}\in\overline{T}$ such that $\|\cm{T}-\overline{\cm{T}}\|_\text{F}\leq\epsilon$.
	
	For any fixed $\cm{T}\in\mathcal{T}(p_1,\dots,p_d;r_1,\dots,r_d)$, decompose it by HOSVD as $\cm{T}=[\![\cm{G};\bm{U}_1,\cdots,\bm{U}_d]\!]$. Then, there exists $\overline{\cm{T}}=[\![\overline{\cm{G}};\overline{\bm{U}}_1,\dots,\overline{\bm{U}}_d]\!]$ with $\overline{\cm{G}}\in\overline{G}$, $\overline{\bm{U}}_i\in\overline{O}_{p_i,r_i}$ satisfying that $\|\bm{U}_i-\overline{\bm{U}}_i\|_{2,\infty}\leq\epsilon/(d+1)$ and $\|\cm{G}-\overline{\cm{G}}\|_\text{F}\leq\epsilon/(d+1)$. This implies that
	\begin{equation}
		\begin{split}
			&\|\cm{T}-\overline{\cm{T}}\|_\text{F}\\
			\leq&\|[\![\cm{G}-\overline{\cm{G}};\bm{U}_1,\dots,\bm{U}_d]\!]\|_\text{F}+\|[\![\overline{\cm{G}};\bm{U}_1-\overline{\bm{U}}_1,\dots,\bm{U}_d]\!]\|_\text{F}+\cdots+\|[\![\overline{\cm{G}};\overline{\bm{U}}_1,\dots,\bm{U}_d-\overline{\bm{U}}_d]\!]\|_\text{F}.
		\end{split}
	\end{equation}
	
	Since each $\bm{U}_i$ is an orthonormal matrix, the first term is $\|\cm{G}-\overline{\cm{G}}\|_\text{F}\leq\epsilon/(d+1)$. For the second term, by the all-orthogonal property of $\overline{\cm{G}}$ and orthonormal property of $\bm{U}_2,\dots,\bm{U}_d$,
	\begin{equation}
		\|[\![\overline{\cm{G}};\bm{U}_1-\overline{\bm{U}}_1,\dots,\bm{U}_d]\!]\|_\text{F}=\|\overline{\cm{G}}\times_1(\bm{U}_1-\overline{\bm{U}}_1)\|_\text{F}\leq\|\overline{\cm{G}}\|_\text{F}\|\bm{U}_1-\overline{\bm{U}}_1\|_{2,\infty}\leq\epsilon/(d+1).
	\end{equation}
	Similarly, we can obtain the same upper bound for the other terms, and thus show that $\|\cm{T}-\overline{\cm{T}}\|_\text{F}\leq\epsilon$.
\end{proof}

\subsection{Rank Selection Consistency}\label{sec:B.3}

This subsection presents the theoretical justification of the proposed ridge-type ratio estimator for rank selection.

\begin{proof}[Proof of Theorem \ref{thm:rankconsistency2}]
	
	The proof of Theorem \ref{thm:rankconsistency2} consists of two steps. First, the results in Theorem \ref{thm:stat_conv} can be readily extended to the rank upper bounds $(\bar{r}_1,\bar{r}_2,\dots,\bar{r}_{2d})$, as the non-asymptotic analysis can be adapted to any ranks not smaller than the true ranks. Under mild conditions on the signal strengths and the number of iterations, the statistical error bound holds 
	\begin{equation}
		\|\widetilde{\cm{A}}-\cm{A}^*\|_\text{F}\lesssim\alpha_\text{RSC}^{-1}\kappa^2M_1\sqrt{\frac{\sum_{i=1}^{2d}p_i\bar{r}_i+\prod_{i=1}^{2d}\bar{r}_i}{T}}\lesssim \alpha_\text{RSC}^{-1}\kappa^2M_1\sqrt{\frac{p_{\max}\bar{r}_{\max}}{T}}=B.
	\end{equation}
	
	Second, by definition, for any tensor $\cm{T}\in\mathbb{R}^{p_1\times\cdots\times p_{2d}}$,
	\begin{equation}
		\|\cm{T}\|_\text{F}^2=\|\cm{T}_{(i)}\|_\text{F}^2=\sum_{j=1}^{p_j}\sigma_j^2(\cm{T}_{(i)}),~~1\leq i\leq 2d.
	\end{equation}
	That is, the Frobenius norm of the error tensor is equivalent to the $\ell_2$ norm of the singular values of any matricization. By Mirsky's singular value inequality,
	\begin{equation}
		\sum_{j=1}^{p_j}[\sigma_j(\widetilde{\cm{A}}_{(i)})-\sigma_j(\cm{A}^*_{(i)})]^2 \leq \sum_{j=1}^{p_j}\sigma_j^2(\widetilde{\cm{A}}_{(i)}-\cm{A}^*_{(i)})=\|\widetilde{\cm{A}}-\cm{A}^*\|_\text{F}^2.
	\end{equation}
	In addition, the $\ell_\infty$ error bound is smaller than the $\ell_2$ error bound, and it directly follows the same upper bound
	\begin{equation}
		\max_{1\leq j\leq \bar{r}_i}|\sigma_j(\widetilde{\cm{A}}_{(i)})-\sigma_j(\cm{A}^*_{(i)})| \leq \left\{\sum_{j=1}^{\bar{r}_i}[\sigma_j(\widetilde{\cm{A}}_{(i)})-\sigma_j(\cm{A}^*_{(i)})]^2\right\}^{1/2}\leq \|\widetilde{\cm{A}}-\cm{A}^*\|_\text{F}\lesssim B.
	\end{equation}
	
	Note that $\sigma_j(\widetilde{\cm{A}}_{(i)})+s(p_{\max},T)=\sigma_j(\cm{A}_{(i)}^*)+[\sigma_j(\widetilde{\cm{A}}_{(i)})-\sigma_j(\cm{A}^*_{(i)})]+s(p_{\max},T)$. For $j>r_i$, since $\sigma_j(\cm{A}_{(i)}^*)=0$ and $\sigma_j(\widetilde{\cm{A}}_{(i)})-\sigma_j(\cm{A}_{(i)}^*)=o_p(s(p_{\max},T))$, $s(p_{\max},T)$ is the dominating term in $\sigma_j(\widetilde{\cm{A}}_{(i)})+s(p_{\max},T)$. For $j\geq r_i$, since $\sigma_j(\widetilde{\cm{A}}_{(i)})-\sigma_j(\cm{A}_{(i)}^*)=o_p(s(p_{\max},T))$ and $s(p_{\max},T)=o(\sigma_j(\cm{A}^*_{(i)}))$, $\sigma_j(\cm{A}^*_{(i)})$ is the dominating term.
	
	Hence, for $j>r_i$, as $T\to\infty$,
	\begin{equation}
		\frac{\sigma_{j+1}(\widetilde{\cm{A}}_{(i)})+c}{\sigma_{j}(\widetilde{\cm{A}}_{(i)})+c}\to\frac{c}{c}=1.
	\end{equation}
	For $j<r_i$,
	\begin{equation}
		\frac{\sigma_{j+1}(\widetilde{\cm{A}}_{(i)})+c}{\sigma_{j}(\widetilde{\cm{A}}_{(i)})+c}\to\frac{\sigma_{j+1}(\cm{A}^*_{(i)})}{\sigma_{j}(\cm{A}^*_{(i)})}.
	\end{equation}
	For $j=r_i$,
	\begin{equation}
		\frac{\sigma_{j+1}(\widetilde{\cm{A}}_{(i)})+c}{\sigma_{j}(\widetilde{\cm{A}}_{(i)})+c}\to\frac{s(p_{\max},T)}{\sigma_{r_i}(\cm{A}^*_{(i)})}\leq\frac{s(p_{\max},T)}{\underline{\sigma}}=o\left(\min_{1\leq i\leq r_i-1}\frac{\sigma_{j+1}(\cm{A}^*_{(i)})}{\sigma_{j}(\cm{A}^*_{(i)})}\right).
	\end{equation}
	Combining these two steps, we can conclude the rank selection consistency in this theorem.
\end{proof}

\section{ADMM Algorithm for (T)SSN Estimator}
\label{append:ADMM}

This subsection  presents the algorithm for the proposed (T)SSN regularized estimator. The algorithm for $\cm{\widehat{A}}_{\textup{SN}}$ can be developed analogously, while $\cm{\widehat{A}}_{\textup{MN}}$ can be obtained easily as in \cite{negahban2011estimation}.

The objective function for the estimator $\cm{\widehat{A}}_{\textup{SSN}}$ in \eqref{eq:SSN_est} can be rewritten as
\begin{equation}\label{eq:objective}
	\mathcal{L}_T(\cm{A})+\lambda_{\text{SSN}}\|\cm{A}\|_{\text{SSN}}=\mathcal{L}_T(\cm{A})+\lambda_{\text{SSN}}\sum_{k=1}^{2^{d-1}}\|\cm{A}_{[I_k]}\|_\textup{nuc},
\end{equation}
where $\mathcal{L}_T(\cm{A})=T^{-1}\sum_{t=1}^{T}\|\cm{Y}_t-\langle\cm{A},\cm{Y}_{t-1}\rangle\|_\text{F}^2$ is the quadratic loss function. In \eqref{eq:objective}, the regularizer $\|\cm{A}\|_{\text{SSN}}$ involves $2^{d-1}$ nuclear norms $\|\cm{A}_{[I_k]}\|_\textup{nuc}$, which are challenging to handle at the same time. A similar difficulty also occurs in low-rank tensor completion, for which \citet{gandy2011tensor}  applied the alternating direction method of multipliers (ADMM) algorithm \citep{boyd2011distributed} to efficiently separate the different nuclear norms. Borrowing the idea of \citet{gandy2011tensor}, we develop an ADMM algorithm for the miminization of \eqref{eq:objective}.

\begin{algorithm}[!htp]
	\caption{ADMM algorithm for (T)SSN estimator \label{alg:SSN}}
	Initialize: $\cm{C}_k^{(0)}$, $\cm{W}_k^{(0)}=\cm{A}^{(0)}=\cm{\widehat{A}}_{\text{MN}}$, for $k=1,\dots, 2^{d-1}$, threshold parameter $\gamma$\\
	\textbf{for} $j\in\{0,1,\dots, J-1\}$ \textbf{do}\\
	\hspace*{1cm} $\cm{A}^{(j+1)}\leftarrow\arg\min\Big\{\mathcal{L}_T(\cm{A})+\sum_{k=1}^{2^{d-1}}\rho\|\cm{A}-\cm{W}_k^{(j)}+\cm{C}_k^{(j)}\|_{\text{F}}^2\Big\}$\\
	\hspace*{1cm} \textbf{for} $k\in\{1,2,\dots, 2^{d-1}\}$ \textbf{do}\\
	\hspace*{2cm} $\cm{W}_k^{(j+1)}\leftarrow\arg\min\Big\{\rho\|\cm{A}^{(j+1)}-\cm{W}_k+\cm{C}_k^{(j)}\|_{\text{F}}^2+\lambda_{\text{SSN}}\|(\cm{W}_k)_{[I_k]}\|_\textup{nuc}\Big\}$\\
	\hspace*{2cm} $\cm{C}^{(j+1)}_k\leftarrow\cm{C}_k^{(j)}+\cm{A}^{(j+1)}-\cm{W}^{(j+1)}_k$\\
	\hspace*{1cm} \textbf{end for}\\
	\textbf{end for}\\
	$\cm{\widehat{A}}_{\text{SSN}}\leftarrow \cm{A}^{(J)}$\\
	\textbf{for} $i\in\{1,2,\dots, 2d\}$ \textbf{do}\\
	\hspace*{1cm}  $\bm{\widetilde{U}}_i \leftarrow \text{Truncated\_SVD}((\cm{\widehat{A}}_{\text{SSN}})_{(i)}, \gamma)$\\
	\textbf{end for}\\
	$\cm{\widetilde{G}}\leftarrow\cm{\widehat{A}}_{\text{SSN}}\times_{i=1}^{2d}\bm{\widetilde{U}}_i^\top$\\ $\cm{\widehat{A}}_{\text{TSSN}}\leftarrow\cm{\widetilde{G}}\times_{i=1}^{2d}\bm{\widetilde{U}}_i$ 
\end{algorithm}

To separate the $2^{d-1}$ nuclear norms in $\|\cm{A}\|_{\text{SSN}}$, for each $\cm{A}_{[I_k]}$, we introduce a different dummy variable $\cm{W}_k$ as a surrogate for $\cm{A}$, where $k=1,\dots,2^{d-1}$. Then the augmented Lagrangian is
\begin{equation*}
	\mathcal{L}(\cm{A},\cm{W},\cm{C})=\mathcal{L}_T(\cm{A})+\sum_{k=1}^{2^{d-1}}\Big[\lambda_{\text{SSN}}\|(\cm{W}_k)_{[I_k]}\|_\textup{nuc}+2\rho\langle\cm{C}_k,\cm{A}-\cm{W}_k\rangle+\rho\|\cm{A}-\cm{W}_k\|_{\text{F}}^2\Big], 
\end{equation*}
where $\cm{C}_k$ are the Lagrangian multipliers, for $k=1,\dots,2^{d-1}$, and $\rho$ is the regularization parameter. Then we can iteratively update $\cm{A}, \cm{W}_k$ and $\cm{C}_k$ by the ADMM, as shown in Algorithm \ref{alg:SSN}. 

In Algorithm \ref{alg:SSN},  the $\cm{A}$-update step is an $\ell_2$-regularized least squares problem. Similarly to \citet{gandy2011tensor}, the $\cm{W}_k$-update step can be solved by applying the explicit  soft-thresholding operator to the singular values of $(\cm{A}+\cm{C}_k)_{[I_k]}$. Both subproblems have close-form solutions. Thus,  the miminization of \eqref{eq:objective} can be solved efficiently. 

\section{Interesting Special Cases of the LRTAR Model}
\label{append:discuss}

We discuss two special cases of the proposed LRTAR model and their connections with the  matrix autoregressive  model in \cite{chen2018autoregressive} and the tensor factor model in \cite{chen2019factor}.

\begin{example}\label{ex1}
	For simplicity, we first consider the case with $d=2$, so $\cm{Y}_t\equiv \bm{Y}_t, \cm{E}_t\equiv \bm{E}_t\in\mathbb{R}^{p_1\times p_2}$ are matrices. Then the VAR representation in \eqref{eq:VAR_rep} becomes 
	\begin{equation}\label{eq:VAR_mat}
		\textup{vec}(\bm{Y}_t)=(\bm{U}_4\otimes\bm{U}_3)\cm{G}_{[\{3,4\}]}(\bm{U}_2^\top\otimes\bm{U}_1^\top)\textup{vec}(\bm{Y}_{t-1})+\textup{vec}(\bm{E}_t),
	\end{equation}
	and the low-dimensional representation in \eqref{eq:matfactor} becomes
	\begin{equation*}%\label{eq:factor_matrix}
		\bm{U}_{3}^\top\bm{Y}_t\bm{U}_{4} =\left \langle\cm{G},\bm{U}_{1}^\top\bm{Y}_{t-1}\bm{U}_{2}\right \rangle+\bm{U}_{3}^\top\bm{E}_t\bm{U}_{4},
	\end{equation*}
	where $\cm{G}\in\mathbb{R}^{r_1\times \cdots \times r_4}$. It is interesting to compare this model with the matrix autoregressive (MAR) model in \cite{chen2018autoregressive} and \cite{hoff15}, which is defined by 
	\begin{equation}
		\label{eq:bilinear}
		\bm{Y}_t=\bm{B}_1\bm{Y}_{t-1}\bm{B}_2^\top+\bm{E}_t,
	\end{equation}
	where $\bm{B}_1\in\mathbb{R}^{p_1\times p_1}$ and $\bm{B}_2\in\mathbb{R}^{p_2\times p_2}$, whose vector form is 
	\begin{equation}\label{eq:bilinear_vec}
		\textup{vec}(\bm{Y}_t)=(\bm{B}_2\otimes \bm{B}_1)\textup{vec}(\bm{Y}_{t-1})+\textup{vec}(\bm{E}_t).
	\end{equation}
	It can be easily seen that if $r_1=r_3=p_1$, $r_2=r_4=p_2$, $\bm{U}_{3}=\bm{I}_{p_1}$, $\bm{U}_{4}=\bm{I}_{p_2}$, and $\cm{G}_{[\{3,4\}]}=(\bm{B}_2\otimes \bm{B}_1)(\bm{U}_2\otimes \bm{U}_1)$, then \eqref{eq:VAR_mat} becomes exactly  \eqref{eq:bilinear_vec}. Thus, the MAR model in \eqref{eq:bilinear} can be viewed as a special case of the proposed model without reducing dimensions $p_i$'s to $r_i$'s and without transforming $\bm{Y}_t$; see Figure \ref{fg:Fig_TAR} for an illustration. 
	The above comparison also applies to the general case  with $d\geq 3$. The tensor version of the MAR model is considered in \cite{hoff15} and is defined as
	\begin{equation}\label{eq:multilinear}
		\cm{Y}_t=\cm{Y}_{t-1}\times_{i=1}^d\bm{B}_i+\cm{E}_t,
	\end{equation}
	where $\bm{B}_i\in\mathbb{R}^{p_i\times p_i}$ for $i=1,\dots,d$. We call \eqref{eq:multilinear} the multilinear tensor autoregressive (MTAR) model.
	Note that its vector form is
	\begin{equation}\label{eq:multilinear_vec}
		\textup{vec}(\cm{Y}_t)=(\bm{B}_d\otimes \cdots \otimes\bm{B}_1)\textup{vec}(\cm{Y}_{t-1})+\textup{vec}(\cm{E}_t).
	\end{equation}
	Similarly, \eqref{eq:multilinear_vec} is a special case of \eqref{eq:VAR_rep} with $r_i=r_{d+i}=p_i$,  $\bm{U}_{d+i}=\bm{I}_{p_i}$, for $i=1,\dots, d$, and
	$\cm{G}_{[S_2]}=(\otimes_{i\in S_1}\bm{B}_i)(\otimes_{i\in S_1}\bm{U}_i)$. Obviously, the number of unknown parameters in the MTAR model, $\sum_{i=1}^{d}p_i^2$, is much larger than that of the proposed model as shown in \eqref{eq:dim}. Also note that  \cite{chen2018autoregressive} focuses on the low-dimensional estimation and its asymptotic theory, while \cite{hoff15} considers a Bayesian estimation method. 
\end{example}

\begin{figure}
	\centering
	\includegraphics[width=0.95\textwidth]{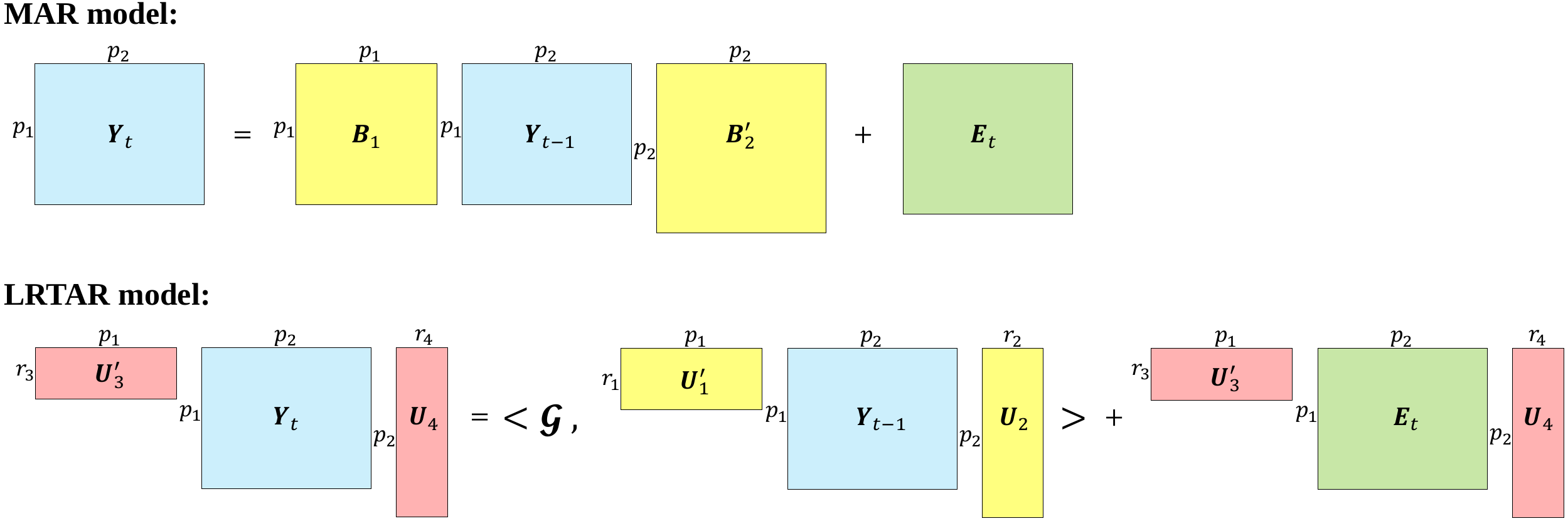}
	\caption{Illustration of the MAR model and the proposed LRTAR model in the case of $d=2$.	\label{fg:Fig_TAR}}
\end{figure}

\begin{example}\label{ex2}
	In the special case where $\bm{U}_{d+i}=\bm{U}_{i}$ and $r_{d+i}=r_i$ for $i=1,\dots, d$, the proposed model may be understood from the perspective of dynamic factor modeling \citep{SW11, BW16} for tensor-valued time series. Specifically, consider the following model:
	\begin{equation}\label{eq:dfm_tensor}
		\cm{Y}_t=\cm{F}_t\times_{i=1}^d\bm{U}_i, \quad \cm{F}_t=\langle\cm{G},\cm{F}_{t-1}\rangle+\cm{H}_t,
	\end{equation}
	where $\cm{Y}_t\in\mathbb{R}^{p_1\times\cdots\times p_d}$ is the observed tensor-valued time series,  $\cm{F}_t\in\mathbb{R}^{r_1\times\cdots\times r_d}$ represents $\prod_{i=1}^d r_i$ factors, and $\bm{U}_i\in\mathbb{R}^{p_i\times r_i}$ are orthonormal matrices for $i=1,\dots, d$. Here  $\cm{F}_t$ follows the tensor autoregression (TAR) with transition tensor $\cm{G}\in \mathbb{R}^{r_1\times\cdots\times r_d\times r_1\times\cdots\times r_d}$ and random error $\cm{H}_t$.
	Note that \eqref{eq:dfm_tensor} can be rewritten as
	\[
	\cm{Y}_t =\left \langle\cm{G}\times_{i=1}^d\bm{U}_{i} \times_{i=d+1}^{2d}\bm{U}_{i},\cm{Y}_{t-1}\right \rangle+\cm{H}_t\times_{i=1}^d\bm{U}_{i}.
	\]
	Thus,  model \eqref{eq:dfm_tensor} is a special case of the proposed model with $\bm{U}_{d+i}=\bm{U}_{i}$ and $r_{d+i}=r_i$ for $i=1,\dots, d$, and $\cm{E}_t=\cm{H}_t\times_{i=1}^d\bm{U}_{i}$.  \citet{chen2019factor} introduces the tensor factor model in the form of $\cm{Y}_t=\cm{F}_t\times_{i=1}^d\bm{U}_i+\cm{E}_t$  without an explicit modeling of the latent factors $\cm{F}_t$. Hence, model \eqref{eq:dfm_tensor} may be regarded as a special tensor factor model with autoregressive dynamic factors, but without any random error in the model equation of $\cm{Y}_t$.
\end{example}

\newpage
\bibliography{mybib}

\begin{thebibliography}{}

\bibitem[Bai and Wang, 2016]{BW16}
Bai, J. and Wang, P. (2016).
\newblock Econometric analysis of large factor models.
\newblock {\em Annual Review of Economics}, 8:53--80.

\bibitem[Basu et~al., 2019]{basu2019low}
Basu, S., Li, X., and Michailidis, G. (2019).
\newblock Low rank and structured modeling of high-dimensional vector
  autoregressions.
\newblock {\em IEEE Transactions on Signal Processing}, 67:1207--1222.

\bibitem[Basu and Matteson, 2021]{Basu2021}
Basu, S. and Matteson, D.~S. (2021).
\newblock A survey of estimation methods for sparse high-dimensional time
  series models.
\newblock {\em ArXiv preprint arXiv:2107.14754}.

\bibitem[Basu and Michailidis, 2015]{basu2015regularized}
Basu, S. and Michailidis, G. (2015).
\newblock Regularized estimation in sparse high-dimensional time series models.
\newblock {\em Annals of Statistics}, 43:1535--1567.

\bibitem[Bernanke et~al., 2005]{bernanke2005measuing}
Bernanke, B.~S., Boivin, J., and Eliasz, P. (2005).
\newblock Measuing the effects of monetary policy: a factor-augmented vector
  autoregressive \textsc{(FAVAR)} approach.
\newblock {\em The Quarterly Journal of Economics}, 120:387--422.

\bibitem[Boyd et~al., 2011]{boyd2011distributed}
Boyd, S., Parikh, N., Chu, E., Peleato, B., Eckstein, J., et~al. (2011).
\newblock Distributed optimization and statistical learning via the alternating
  direction method of multipliers.
\newblock {\em Foundations and Trends{\textregistered} in Machine learning},
  3:1--122.

\bibitem[Bussi\`{e}re et~al., 2012]{BCS12}
Bussi\`{e}re, M., Chudik, A., and Sestieri, G. (2012).
\newblock Modelling global trade flows: results from a gvar model.
\newblock {\em Globalization and Monetary Policy Institute Working Paper 119,
  Federal Reserve Bank of Dallas}.

\bibitem[Candes and Plan, 2011]{candes2011tight}
Candes, E.~J. and Plan, Y. (2011).
\newblock Tight oracle inequalities for low-rank matrix recovery from a minimal
  number of noisy random measurements.
\newblock {\em IEEE Transactions on Information Theory}, 57:2342--2359.

\bibitem[Canova and Ciccarelli, 2013]{CC13}
Canova, F. and Ciccarelli, M. (2013).
\newblock {\em VAR Models in Macroeconomics: New Developments and Applications:
  Essays in Honor of Christopher A. Sims}, chapter Panel vector autoregressive
  models: a survey, page 205–246.
\newblock Emerald Group Publishing Limited, Bingley.

\bibitem[Chen et~al., 2019]{chen2019non}
Chen, H., Raskutti, G., and Yuan, M. (2019).
\newblock Non-convex projected gradient descent for generalized low-rank tensor
  regression.
\newblock {\em The Journal of Machine Learning Research}, 20(1):172--208.

\bibitem[Chen et~al., 2021]{chen2018autoregressive}
Chen, R., Xiao, H., and Yang, D. (2021).
\newblock Autoregressive models for matrix-valued time series.
\newblock {\em Journal of Econometrics}, 222:539--560.

\bibitem[Chen et~al., 2022]{chen2019factor}
Chen, R., Yang, D., and Zhang, C.-H. (2022).
\newblock Factor models for high-dimensional tensor time series.
\newblock {\em Journal of the American Statistical Association}, 117:94--116.

\bibitem[De~Lathauwer et~al., 2000]{deLathauwer2000multilinear}
De~Lathauwer, L., De~Moor, B., and Vandewalle, J. (2000).
\newblock A multilinear singular value decomposition.
\newblock {\em SIAM Journal on Matrix Analysis and Applications},
  21:1253--1278.

\bibitem[Ding and Cook, 2018]{DingCook18}
Ding, S. and Cook, R.~D. (2018).
\newblock Matrix variate regressions and envelope models.
\newblock {\em Journal of the Royal Statistical Society: Series B},
  80:387--408.

\bibitem[Gandy et~al., 2011]{gandy2011tensor}
Gandy, S., Recht, B., and Yamada, I. (2011).
\newblock Tensor completion and low-n-rank tensor recovery via convex
  optimization.
\newblock {\em Inverse Problems}, 27:025010.

\bibitem[Guo et~al., 2016]{Guo2016}
Guo, S., Wang, Y., and Yao, Q. (2016).
\newblock High-dimensional and banded vector autoregressions.
\newblock {\em Biometrika}, 103:889--903.

\bibitem[Han et~al., 2015]{han2015direct}
Han, F., Lu, H., and Liu, H. (2015).
\newblock A direct estimation of high dimensional stationary vector
  autoregressions.
\newblock {\em Journal of Machine Learning Research}, 16:3115--3150.

\bibitem[Han et~al., 2022]{han2020optimal}
Han, R., Willett, R., and Zhang, A. (2022).
\newblock An optimal statistical and computational framework for generalized
  tensor estimation.
\newblock {\em The Annals of Statistics}, 50:1--29.

\bibitem[Hoff, 2015]{hoff15}
Hoff, P.~D. (2015).
\newblock Multilinear tensor regression for longitudinal relational data.
\newblock {\em Annals of Applied Statistics}, 9:1169--1193.

\bibitem[Jain and Kar, 2017]{jain2017non}
Jain, P. and Kar, P. (2017).
\newblock Non-convex optimization for machine learning.
\newblock {\em Foundations and Trends{\textregistered} in Machine Learning},
  10:142--336.

\bibitem[Kolda and Bader, 2009]{kolda2009tensor}
Kolda, T.~G. and Bader, B.~W. (2009).
\newblock Tensor decompositions and applications.
\newblock {\em SIAM Review}, 51:455--500.

\bibitem[Lam et~al., 2012]{lam2012factor}
Lam, C., Yao, Q., et~al. (2012).
\newblock Factor modeling for high-dimensional time series: inference for the
  number of factors.
\newblock {\em Annals of Statistics}, 40:694--726.

\bibitem[Liu et~al., 2013]{liu2013tensor}
Liu, J., Musialski, P., Wonka, P., and Ye, J. (2013).
\newblock Tensor completion for estimating missing values in visual data.
\newblock {\em IEEE Transactions on Pattern Analysis and Machine Intelligence},
  35:208--220.

\bibitem[Liu and Zhang, 2022]{liu2021robust}
Liu, L. and Zhang, D. (2022).
\newblock Robust estimation of high-dimensional non-\textsc{G}aussian
  autoregressive models.
\newblock {\em arXiv preprint arXiv:2109.10354}.

\bibitem[Miao et~al., 2023]{miao2023high}
Miao, K., Phillips, P.~C., and Su, L. (2023).
\newblock High-dimensional vars with common factors.
\newblock {\em Journal of Econometrics}, 233(1):155--183.

\bibitem[Mirsky, 1960]{mirsky1960symmetric}
Mirsky, L. (1960).
\newblock Symmetric gauge functions and unitarily invariant norms.
\newblock {\em Quarterly Journal of Mathematics}, 11:50--59.

\bibitem[Mu et~al., 2014]{mu2014square}
Mu, C., Huang, B., Wright, J., and Goldfarb, D. (2014).
\newblock Square deal: Lower bounds and improved relaxations for tensor
  recovery.
\newblock In {\em International Conference on Machine Learning}, pages 73--81.

\bibitem[Negahban and Wainwright, 2011]{negahban2011estimation}
Negahban, S. and Wainwright, M.~J. (2011).
\newblock Estimation of (near) low-rank matrices with noise and
  high-dimensional scaling.
\newblock {\em Annals of Statistics}, 39:1069--1097.

\bibitem[Negahban and Wainwright, 2012]{negahban2012restricted}
Negahban, S. and Wainwright, M.~J. (2012).
\newblock Restricted strong convexity and weighted matrix completion: Optimal
  bounds with noise.
\newblock {\em Journal of Machine Learning Research}, 13:1665--1697.

\bibitem[Negahban et~al., 2012]{negahban2012unified}
Negahban, S.~N., Ravikumar, P., Wainwright, M.~J., Yu, B., et~al. (2012).
\newblock A unified framework for high-dimensional analysis of
  \textit{M}-estimators with decomposable regularizers.
\newblock {\em Statistical Science}, 27:538--557.

\bibitem[Nesterov, 2003]{nesterov2003introductory}
Nesterov, Y. (2003).
\newblock {\em Introductory lectures on convex optimization: A basic course},
  volume~87.
\newblock Springer Science \& Business Media.

\bibitem[Pesaran et~al., 2004]{PSW04}
Pesaran, M.~H., Schuermann., T., and Weiner, S.~M. (2004).
\newblock Modelling regional interdependencies using a global error-correcting
  macroeconometric model.
\newblock {\em Journal of Business and Economics Statistics}, 22:129--162.

\bibitem[Raskutti et~al., 2019]{raskutti2019convex}
Raskutti, G., Yuan, M., and Chen, H. (2019).
\newblock Convex regularization for high-dimensional multi-response tensor
  regression.
\newblock {\em Annals of Statistics}, 47:1554--1584.

\bibitem[Shojaie et~al., 2012]{Shojaie2012}
Shojaie, A., Basu, S., and Michailidis, G. (2012).
\newblock Adaptive thresholding for reconstructing regulatory networks from
  time-course gene expression data.
\newblock {\em Statistics in Biosciences}, 4:66--83.

\bibitem[Stock and Watson, 2011]{SW11}
Stock, J.~H. and Watson, M.~W. (2011).
\newblock Dynamic factor models.
\newblock In Clements, M.~P. and Hendry, D.~F., editors, {\em Oxford Handbook
  of Economic Forecasting}. Oxford University Press.

\bibitem[Stock and Watson, 2016]{stock2016dynamic}
Stock, J.~H. and Watson, M.~W. (2016).
\newblock Dynamic factor models, factor-augmented vector autoregressions, and
  structural vector autoregressions in macroeconomics.
\newblock In {\em Handbook of macroeconomics}, volume~2, pages 415--525.
  Elsevier.

\bibitem[Tomioka et~al., 2011]{tomioka2011statistical}
Tomioka, R., Suzuki, T., Hayashi, K., and Kashima, H. (2011).
\newblock Statistical performance of convex tensor decomposition.
\newblock In {\em Advances in Neural Information Processing Systems (NIPS)},
  pages 972--980.

\bibitem[Tucker, 1966]{tucker1966some}
Tucker, L.~R. (1966).
\newblock Some mathematical notes on three-mode factor analysis.
\newblock {\em Psychometrika}, 31:279--311.

\bibitem[Vershynin, 2018]{Vershynin2018}
Vershynin, R. (2018).
\newblock {\em High-Dimensional Probability: An Introduction with Applications
  in Data Science}.
\newblock Cambridge University Press, Cambridge.

\bibitem[Wang and Tsay, 2022]{wang2021robust}
Wang, D. and Tsay, R.~S. (2022).
\newblock Rate-optimal robust estimation of high-dimensional vector
  autoregressive models.
\newblock {\em arXiv preprint arXiv:2107.11002}.

\bibitem[Wang et~al., 2022]{wang2019high}
Wang, D., Zheng, Y., Lian, H., and Li, G. (2022).
\newblock High-dimensional vector autoregressive time series modeling via
  tensor decomposition.
\newblock {\em Journal of the American Statistical Association},
  117:1338--1356.

\bibitem[Wong, 2017]{wong2017lasso}
Wong, K.~C. (2017).
\newblock {\em Lasso Guarantees for Dependent Data}.
\newblock PhD thesis.

\bibitem[Zheng and Cheng, 2021]{zheng20}
Zheng, Y. and Cheng, G. (2021).
\newblock Finite time analysis of vector autoregressive models under linear
  restrictions.
\newblock {\em Biometrika}, 108:469--489.

\bibitem[Zhu et~al., 2017]{Zhu2017}
Zhu, X., Pan, R., Li, G., Liu, Y., and Wang, H. (2017).
\newblock Network vector autoregression.
\newblock {\em The Annals of Statistics}, 45:1096--1123.

\end{thebibliography}
\bibliographystyle{apalike}
\end{document}